\documentclass[journal,twoside,web]{ieeecolor}

\bibliographystyle{ieeetr}

\definecolor{subsectioncolor}{rgb}{0,0,0}
\setlength{\firstpagerule}{43pc}

\usepackage{cite}
\usepackage{amsmath,amssymb,amsfonts,amsthm}
\usepackage{algorithm,algorithmic}
\usepackage{graphicx}
\usepackage{textcomp}

\usepackage[hypertexnames=false, breaklinks=true]{hyperref}
\usepackage{cleveref}
\usepackage{autonum} 

\usepackage{mathtools, stmaryrd}
\usepackage{xparse}
\DeclarePairedDelimiterX{\infdivx}[2]{(}{)}{%
  #1\;\delimsize\|\;#2%
}

\newcommand{\df}{D_{f}\infdivx}
\newcommand{\ra}{R_{\alpha}\infdivx}
\newcommand{\fdiv}[1]{D_{#1}\infdivx}

\usepackage{etoolbox}
\makeatletter
\@ifundefined{color@begingroup}%
  {\let\color@begingroup\relax
   \let\color@endgroup\relax}{}%
\def\fix@ieeecolor@hbox#1{%
  \hbox{\color@begingroup#1\color@endgroup}}
\patchcmd\@makecaption{\hbox}{\fix@ieeecolor@hbox}{}{\FAILED}
\patchcmd\@makecaption{\hbox}{\fix@ieeecolor@hbox}{}{\FAILED}

\usepackage{dsfont}
\usepackage{nicefrac}
\usepackage{enumerate}

\newtheorem{lemma}{Lemma}
\newtheorem{thm}{Theorem}
\Crefname{thm}{Theorem}{Theorems}
\newtheorem{defi}{Definition}
\newtheorem{prop}{Proposition}
\Crefname{prop}{Proposition}{Propositions}

\newtheorem{coro}{Corollary}

\DeclareMathOperator{\Ima}{Im}
\DeclareMathOperator{\Tr}{Tr}

\DeclareMathOperator{\rk}{rk}

\DeclareMathOperator*{\argmin}{arg\,min}

\usepackage{tikz}
\usepackage{xcolor}
\usetikzlibrary{angles,quotes}
\usetikzlibrary{calc}
\usetikzlibrary{shapes.symbols}
\usetikzlibrary{positioning}
\usetikzlibrary{fit}
\usetikzlibrary{arrows.meta}
\usepackage{pgfplots}
\usepgfplotslibrary{fillbetween}
\usetikzlibrary{spy}
\usetikzlibrary{plotmarks}

\usetikzlibrary{patterns}
\usetikzlibrary{shapes.misc}

\tikzset{cross/.style={cross out, draw=black, opacity=0.5, minimum size=2*(#1-\pgflinewidth), inner sep=0pt, outer sep=0pt}, cross/.default={1pt}}


\begin{document}

\title{Almost-Bayesian Quadratic Persuasion (Extended Version)}
\author{Olivier Massicot and C\'edric Langbort
\thanks{Submitted on February 5, 2024.}
\thanks{O. Massicot is with the Coordinated Science Laboratory, Urbana, IL 61801 USA (e-mail: om3@illinois.edu). }
\thanks{C. Langbort is with the Coordinated Science Laboratory, Urbana, IL 61801 USA (e-mail: langbort@illinois.edu). }}

\maketitle

\begin{abstract}

In this article, we relax the Bayesianity assumption in the now-traditional model of Bayesian Persuasion introduced by Kamenica \& Gentzkow. Unlike preexisting approaches---which have tackled the possibility of the receiver (Bob) being non-Bayesian by considering that his thought process is not Bayesian yet known to the sender (Alice), possibly up to a parameter---we let Alice merely assume that Bob behaves `almost like' a Bayesian agent, in some sense, without resorting to any specific model.

Under this assumption, we study Alice's strategy when both utilities are quadratic and the prior is isotropic. We show that, contrary to the Bayesian case, Alice's optimal response may not be linear anymore. This fact is unfortunate as linear policies remain the only ones for which the induced belief distribution is known. What is more, evaluating linear policies proves difficult except in particular cases, let alone finding an optimal one. Nonetheless, we derive bounds that prove linear policies are near-optimal and allow Alice to compute a near-optimal linear policy numerically. With this solution in hand, we show that Alice shares less information with Bob as he departs more from Bayesianity, much to his detriment.
\end{abstract}

\begin{IEEEkeywords}
Bayesian persuasion, Game theory, Communication networks, Uncertain systems
\end{IEEEkeywords}

\section{Introduction}

Over the past few years, problems related to strategic information transmission (SIT), which were originally introduced and studied in the field of Information Economics, have gained relevance and garnered interest in the decision \& control, information theory, and computer science communities as well. New applications of SIT ideas, concepts and modeling paradigms in these domains include, e.g.,  adversarial sensing and estimation \cite{farokhi2016estimation,farokhi2015quadratic,kazikli2022quadratic}, persuasive interactions between humans and autonomous agents/vehicles \cite{hebbar2020stackelberg,peng2019bayesian,le2016information} and congestion mitigation  \cite{das2017reducing,zhu2018stability,zhu2019routing,massicot2019public,massicot2021competitive,zhu2022information,ferguson2022avoiding}, while tools from these fields have made it possible to investigate richer SIT problem formulations such as communication over limited communication channels \cite{le2019persuasion, vora2020information} and algorithmic approaches \cite{dughmi2016algorithmic}.

The now canonical model of Bayesian Persuasion introduced by Kamenica \& Gentzkow \cite{Kamenica} considers two actors, one of whom, the Sender, has access to the state of the world and wants to convince the other actor, the uninformed Receiver, to take actions that benefit her. In accordance with Information and Computer Theoretic practice we will henceforth refer to the Sender as ``Alice'' and to the Receiver as ``Bob.''

The setup of \cite{Kamenica}  has two crucial features. First, Alice is assumed to commit to a signaling strategy, which makes the game she plays with Bob a Stackelberg one in which she acts as the leader, and  distinguishes it from the cheap-talk formulation of \cite{crawford1982strategic} which is concerned with perfect Bayesian equilibria. This commitment assumption essentially defines the Bayesian Persuasion framework and is present in all extensions of \cite{Kamenica}, from those considering multiple senders \cite{wang2013bayesian} and/or receivers \cite{GENTZKOW2017411,castiglioni2021multi}, to costly messages \cite{nguyen2021bayesian} and online settings \cite{castiglioni2020online,castiglioni2021multi,zu2021learning}, to the possibility of Bob  acquiring additional information \cite{matyskova2018bayesian,dworczak2022preparing,hu2021robust}. 

The second crucial element in \cite{Kamenica} is the assumption that Bob is Bayesian, i.e., that  he updates his prior into a posterior using Bayes' rule upon receiving Alice's message. This Bayesianity not only delineates the kind of situations captured by the model, but also plays a central role in enabling the computation of Alice's signaling policy. Indeed, exploiting a result of Aumann \& Maschler \cite{aumann1995repeated}, Kamenica \& Gentzkow show how to fully parametrize the set of posteriors that can be held by Bob upon receiving a message from Alice which, in turn, makes it possible to reformulate her program into a theoretically tractable form. This reformulation and, hence, Bob's Bayesianity, have been instrumental in most methods aimed at determining Alice's policy (such as, e.g.,   \cite{dughmi2016algorithmic,KamenicaLinear,matyskova2018bayesian,candogan}).

Given the importance of the specific way in which Bob is assumed to update his prior in \cite{Kamenica}, multiple recent works such as, e.g., \cite{de2022non,dworczak2022preparing,hu2021robust,alonso2016bayesian,kazikli2021optimal,kosterina2022persuasion,ball2021experimental,chen2023persuading} have tried to reconcile the framework of \cite{Kamenica} with the empirical fact (confirmed in many behavioral economics experiments such as \cite{camerer1998bounded,benjamin2019errors}) that human decision makers can and often do fail to be perfectly Bayesian, either through lack of access to a correct prior, or by accessing  or incorrectly (according to Bayes' rule) processing information.

The present work is closest in spirit to \cite{de2022non} in the sense that we directly consider Bob to be non-Bayesian, and to \cite{chen2023persuading} in that Bob is close to being Bayesian, which turns out to be equivalent to being almost best-responding in the linear-quadratic game setting of this article. In contrast with most of \cite{de2022non}, however, we do not make any explicit assumption regarding the process replacing Bayes rule. Instead, we model Bob's possible posteriors via a generic \emph{robust hypothesis}, in a manner resembling the notion of an almost-maximizing agent \cite{dixon2017surfing,chen2023persuading}. More precisely, we assume that, upon receiving Alice's message, Bob's posterior lies in a suitably defined neighborhood of the correct Bayesian posterior, regardless of the specific way in which it was computed.  In so doing, we formalize the notion of ``almost-Bayesianity'' suggested at the end of \cite{de2022non} and set ourselves apart from other models which either rely on parametric uncertainty (which assume that Bob's thought process is known to Alice, save for a set of parameters such as unknown mismatched prior \cite{alonso2016bayesian,kazikli2021optimal,kosterina2022persuasion}) or make Alice account for the fact that Bob may receive private side information, be it before \cite{hu2021robust} or after \cite{dworczak2022preparing} her message.

While we believe that this robust hypothesis approach has potential to model lack of Bayesianity in general persuasion and SIT problems, we focus on a particular linear quadratic setting in this work. This is to emphasize that the operationalization of the notion of neighborhood of posteriors held by Bob matters for the resolution of Alice's program, as well as because even this relatively simple case presents interesting non-trivial features: much like the celebrated Witsenhausen's counterexample \cite{witsenhausen1968counterexample}, it presents a ``linear-quadratic-Gaussian'' situation in which linear policies may not be  optimal. In addition, and in contrast with  Witsenhausen's counterexample, finding the optimal linear policy is itself challenging.

More precisely, we consider the specific class of Bayesian persuasion games introduced by \cite{Tamura}, which has also seen many variants and applications \cite{akyol2016information,sayin2021bayesian,nadendla2018effects,sezer2023robust,ui2020lqg,lambert2018quadratic}. In this setting, the state of the world $x$ is a random vector, Bob's action $a$ is an affine function of his estimation, and Alice receives a reward quadratic in $(x,a)$. Naturally, this is referred to as linear-preference quadratic-reward Bayesian persuasion, or quadratic persuasion to remain concise. Under these assumptions, Alice's objective is linear in the covariance of the estimate, although the set of covariances Alice can induce is unclear for general priors. When the prior $\nu$ is Gaussian, this set is simply determined by two linear matrix inequalities, as shown in \cite{Tamura}. Little is known otherwise, and in fact, even when $\nu$ is finitely supported, one must resort to a relaxation of the program, \cite{sayin2021bayesian}. We first extend the results of \cite{Tamura} to slightly richer priors, then set to study the case where Bob is almost Bayesian.

In order to set the stage for this class of problems, we first present, in \Cref{sec:cex}, a solvable example of linear-quadratic communication problem in which the receiver is not exactly Bayesian. \Cref{sec:bayesian} then presents the general problem of interest; we recall Bayesian persuasion, introduce the abstract notion of almost-Bayesian agent, and further develop quadratic persuasion. In \Cref{sec:ellipsoidal}, we provide a more concrete characterization of almost-Bayesian agents in the present context. Tractability concerns push us to adopt an ``ellipsoidal'' hypothesis to contain Bob's erroneous beliefs, under which we provide optimistic and pessimistic bounds matching up to a multiplicative ratio. \Cref{sec:solving} is dedicated to analyzing the approximate programs; we first derive important structural facts, then propose a numerical solution. \Cref{sec:illustration} first confronts our approximation bounds with two analytically solvable cases, whereas its last subsection illustrates the structural results obtained in previous sections. Finally, \Cref{sec:conclusion} discusses the significance of our results. Another article of ours \cite{massicot2023almost}, whose findings are discussed with regards to those of this article in \Cref{sec:conclusion}, is devoted to the entirely solvable scalar case.

\section{A tractable example}
\label{sec:cex}

\subsection{A simple strategic communication problem}

Let us consider the following persuasion game. The state of nature $x$ is a random variable in $\mathbb R^n$ distributed according to the standard multivariate Gaussian distribution $\mathcal N(0, I_n)$. Alice knows the realization of this random variable and wants to send a message $y$ so as to lead Bob to estimate $kx$, where $k$ is a constant real number. More precisely, if Bob estimates $\hat x = \mathbb E[x\!\mid\!y]$, Alice's associated cost is $\|\hat x - kx\|^2$. 

As is customary in Bayesian persuasion, the message $y$ is a random variable whose conditional distribution given $x$ is fixed, chosen in advance by Alice and known to Bob. In other words, Alice commits to a disclosing mechanism (a policy), this in turn allows a Bayesian agent to update his prior belief to a posterior belief. The problem Alice faces is to find the optimal policy, namely the conditional law for $y$ given $x$ that minimizes her expected cost. In all generality, this could be a challenging problem, however in this simple example, it is quite easy to derive. 

This derivation mostly relies on the specificity of the problem: Alice's reward is quadratic in Bob's action ($\hat x$), and Bob's action is affine in the estimate $\hat x$. The study of such problems is the scope of linear-preference quadratic-reward persuasion as introduced by \cite{Tamura}. In our specific example,
\begin{align}
	\mathbb E[\|\hat x-kx\|^2]
	&= \Tr\Sigma - 2k\Tr \mathbb E[\hat x x^\top] + k^2 \Tr I_n \\
	&= \Tr\Sigma - 2k\Tr \mathbb E[\mathbb E[\hat x x^\top\!\mid\!\hat x]] + k^2 n \\
	&= \Tr\Sigma - 2k\Tr \mathbb E[\hat x \mathbb E[x\!\mid\!\hat x]^\top] + k^2 n \\
	&= (1-2k)\Tr\Sigma + k^2 n,
\end{align}
where $\Sigma$ is the covariance of $\hat x$ and $\Tr$ denotes the trace, noting that
\begin{equation}
	\mathbb E[x\!\mid\!\hat x] 
	= \mathbb E[\mathbb E[x\!\mid\!y]\!\mid\!\hat x]
	= \mathbb E[\hat x\!\mid\!\hat x]
	= \hat x.
\end{equation}
In general however, the objective takes a more defined form, $\Tr(D\Sigma) + c$, where $D$ is a constant symmetric matrix and $c$ is a constant real number. 

For now, notice that $\Sigma\succeq0$ as it is the covariance of $\hat x$, and notice that $I_n-\Sigma\succeq0$ as it is the covariance of $x-\hat x$. On the other hand $\Sigma=0$ can be produced by the ``no-information policy,'' sending $y=0$ at all time, whereas $\Sigma = I_n$ results from the ``full-information policy,'' signaling $y=x$ as then $\hat x = y = x$. As a result, either $1-2k>0$, $\Sigma=0$ is the only solution, sending no information is optimal; either $1-2k=0$, this is a degenerate case where all policies yield the same reward; or $1-2k<0$, $\Sigma=I_n$ is the unique solution, achieved by the full-information policy.

This instance is in accordance with the general theory of linear-preference quadratic-reward persuasion with Gaussian priors: there always exists a noisy linear policy (i.e., $y=Ax+v$ for some matrix $A$ and $v$ an independent normal variable) that is optimal. In fact, once the mean and covariance of $x$ have been reduced to $0$ and $I_n$ respectively, one can even take $A$ orthogonal projection matrix and $v=0$ without loss of generality, we term such policies ``projective policies.'' One can wonder whether this stands when Bob is not truly Bayesian.

\subsection{When Bob is not Bayesian}

The previous derivation, and in fact linear-preference quadratic-reward persuasion, both rely on the fact that Bob is Bayesian. For the purposes of this motivating example, we may relax this assumption by simply assuming that Bob's estimate $\tilde x$ is never farther than $\epsilon>0$ from $\hat x$, and let Alice plan for the worst.

Concretely, Alice can first express her expected cost by using the towering property of expectation as 
\begin{equation}
	\mathbb E[\|\tilde x-kx\|^2]
	= \mathbb E[\mathbb E[\|\tilde x-kx\|^2\!\mid\! y]].
\end{equation}
She can then assume that Bob's erroneous estimate $\tilde x$ at each $y$ maximizes her conditional cost, namely her goal is to minimize
\begin{equation}
	\mathbb E\!\left[\max_{\tilde x \in \hat x + \epsilon \mathcal B} ~\mathbb E[\|\tilde x-kx\|^2\!\mid\!y]\right],
\end{equation}
having denoted the closed Euclidean unit-ball by $\mathcal B$. The inner maximization can be developed, noting the error $\eta = \tilde x - \hat x$,
\begin{align}
	\mathbb E[\|\tilde x-kx\|^2\!\mid\!y]
	&= \mathbb E[\|\eta + \hat x-kx\|^2\!\mid\!y] \\
	&= \|\eta\|^2 + 2(1-k)\eta^\top \hat x + \mathbb E[\|\hat x-kx\|^2\!\mid\! y].
\end{align}
The last term does not depend on $\eta$, we can take it out of the maximization and average it, it becomes the original Bayesian objective. All in all, Alice tries to minimize
\begin{equation}
	(1-2k)\Tr\Sigma + k^2 n
	+ \mathbb E\!\left[\max_{\eta \in \epsilon \mathcal B} ~\|\eta\|^2 + 2(1-k)\eta^\top \hat x\right].
\end{equation}

In this simple illustrative example (and in contrast to the general case), the nested maximum can be analytically found. Therefore, Alice seeks to minimize
\begin{equation}
\label{eq:exobj}
	(1-2k)\Tr\Sigma + k^2 n + \epsilon^2+2\epsilon|1-k|\mathbb E[\|\hat x\|].
\end{equation}
The program is now much more complicated as the objective picked up a term in the mean absolute deviation, $\mathbb E[\|\hat x\|]$. However, when $k\leq\nicefrac12$, the cost of Alice is at least as large as
\begin{equation}
	k^2 n + \epsilon^2,
\end{equation}
which is achieved by revealing no information so that $\hat x =0$. Just like in the Bayesian case, sending no information is optimal. When $k=1$, the last term in \eqref{eq:exobj} vanishes and so sending the information wholly is optimal, again just like when Bob is Bayesian. In the remainder of this section, we thus consider cases where $k>\nicefrac12$ and $k\neq1$, so that there is an antagonism between maximizing $\Tr\Sigma = \mathbb E[\|\hat x\|^2]$ and minimizing $\mathbb E[\|\hat x\|]$.

The following two subsections delve into the details of how to find the optimal linear policy, and explore ``radius-threshold policies'' as an other alternative. Together, they prove the following maybe surprising result.

\begin{lemma}
	The linear policy achieving the lowest value of \eqref{eq:exobj} (i.e., Alice's ``optimal linear policy'') is either no- or full-information, with value
	\begin{equation}
	k^2 n + \epsilon^2+
	\Big(
	(1-2k)n +2\epsilon|1-k| \,\mathbb E[\|x\|] \Big)^-,
\end{equation}
	where $(.)^- = \min(.,0)$.
	When $k>\nicefrac12$ is different than $1$ and $\epsilon$ is large enough, this amounts to $k^2n+\epsilon^2$. For all these $k$, there exists a radius-threshold policy whose value is strictly better.
\end{lemma}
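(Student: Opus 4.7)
The plan is to attack the lemma in three steps: reduce the search over linear policies to a one-dimensional integer program via rotational symmetry and concavity; show that program is concave in the rank, so its minimum lies at $r=0$ or $r=n$; then exhibit a radius-threshold policy that strictly improves on no-information whenever the latter is optimal among linear policies.

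For the reduction, both the prior $\mathcal N(0,I_n)$ and the objective \eqref{eq:exobj} depend on $\hat{x}$ only through rotationally invariant quantities, and for a Gaussian prior the achievable posterior covariances form $\{\Sigma:0\preceq\Sigma\preceq I_n\}$ (cf.\ \cite{Tamura}). So one may assume $\Sigma$ is diagonal with entries $\lambda_1,\dots,\lambda_n\in[0,1]$ and $\hat{x}\sim\mathcal N(0,\Sigma)$, giving a variable cost
\[
(1-2k)\sum_i\lambda_i \;+\; 2\epsilon|1-k|\,\mathbb E\sqrt{\textstyle\sum_i\lambda_i Z_i^2},
\]
with $Z_i$ iid standard normal. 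This is concave in $\lambda$ (the second term is the expectation of a square-root of an affine function), so its minimum on $[0,1]^n$ is attained at a vertex; WLOG the optimal linear policy is projective, and by permutation symmetry only its rank matters. The problem becomes $\min_{0\le r\le n}f(r)$ with $f(r):=(1-2k)r+2\epsilon|1-k|c_r$ and $c_r:=\mathbb E\|Z_r\|$ for $Z_r\sim\mathcal N(0,I_r)$.

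The crux is that $c_r$ is concave in $r$. Coupling so that $\|Z_{r+1}\|^2=\|Z_r\|^2+G^2$ for $G\sim\mathcal N(0,1)$ independent of $Z_r$, one writes
\[
c_{r+1}-c_r \;=\; \mathbb E\,h(\|Z_r\|),\qquad h(R):=\mathbb E_G\bigl[\sqrt{R^2+G^2}-R\bigr].
\]
Differentiating under the expectation gives $h'(R)=\mathbb E_G[R/\sqrt{R^2+G^2}]-1\le 0$, so $h$ is non-increasing; combined with $\|Z_r\|\le\|Z_{r+1}\|$ almost surely (from the same coupling), the increments $c_{r+1}-c_r$ are non-increasing in $r$, i.e., $c_r$ is concave. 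Hence $\min f=\min(f(0),f(n))$, which, via $\min(a,b)=a+(b-a)^-$ and the identification $f(n)-f(0)=(1-2k)n+2\epsilon|1-k|\mathbb E\|x\|$, matches the stated formula; for $\epsilon$ large enough this negative part vanishes and no-information is the optimal linear policy with value $k^2n+\epsilon^2$.

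For the last assertion, consider the policy that sends $y=x$ on $\{\|x\|>t\}$ and is silent otherwise. Spherical symmetry forces $\hat{x}=0$ under silence and $\hat{x}=x$ otherwise, so $\Tr\Sigma=\mathbb E[\|x\|^2\mathbf 1_{\|x\|>t}]$ and $\mathbb E\|\hat{x}\|=\mathbb E[\|x\|\mathbf 1_{\|x\|>t}]$. The pointwise bound $\|x\|\mathbf 1_{\|x\|>t}\le\|x\|^2\mathbf 1_{\|x\|>t}/t$ gives a variable cost at most $\Tr\Sigma\bigl[(1-2k)+2\epsilon|1-k|/t\bigr]$. For $k>1/2$, $k\neq1$, this bracket is strictly negative as soon as $t>2\epsilon|1-k|/(2k-1)$, while $\Tr\Sigma>0$ for any finite $t$, beating $k^2n+\epsilon^2$. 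The main obstacle is the concavity of $c_r$; direct Gamma-function manipulations are unwieldy, but the coupling argument sidesteps them.
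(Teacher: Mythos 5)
Your proof is correct and follows essentially the same route as the paper: reduce to projective policies by concavity and rotational invariance, observe that only the rank matters, show concavity in the rank to conclude the optimum is no- or full-information, and then beat it with a radius-threshold policy via the same pointwise bound $\|x\|\mathds 1_{\{\|x\|>t\}}\le \|x\|^2\mathds 1_{\{\|x\|>t\}}/t$. The one place you add genuine content is the coupling argument proving concavity of $r\mapsto\mathbb E\|Z_r\|$, a step the paper only asserts (``after further inspection''), and your argument for it is sound.
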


In other words, even if we can find the optimal linear policy---and this is quite challenging in general---, it may not be optimal over all.

\subsection{Linear policies with noise}

It is quite difficult to envision which pairs $(\Tr\Sigma,\mathbb E[\|\hat x\|])$ Alice can produce through signaling. We can nonetheless explore noisy linear policies with a certain ease. When $y=Ax+v$ is sent, where $A$ is a matrix and $v$ an independent normal random variable, $\hat x$ is normal as well so
\begin{equation}
	\mathbb E[\|\hat x\|] = \mathbb E[\sqrt{z^\top \Sigma z}],
\end{equation}
where $z\sim\mathcal N(0,I_n)$ is a dummy standard variable. As all covariances $0 \preceq \Sigma \preceq I_n$ can be produced with such noisy linear policies \cite{Tamura}, the program of Alice can be written entirely in terms of $\Sigma$. In other words, after dropping the constant terms, she is interested in solving
\begin{equation}
\label{eq:sigmaeq}
	\min_{0 \preceq \Sigma \preceq I_n} ~(1-2k)\Tr\Sigma +2\epsilon|1-k|\mathbb E[\sqrt{z^\top \Sigma z}].
\end{equation}
Since the objective is strictly concave in $\Sigma$, solutions are all extreme points of the constraint set, namely they are orthogonal projection matrices. Moreover, the objective is invariant by rotation (namely $\Sigma$ and $O\Sigma O^\top$ have the same value when $O$ is orthogonal), thus the objective value at an extreme point depends only on its rank $r$. After further inspection, the objective is concave in $r$, thus the solution is either $\Sigma = 0$ (the no-information policy), or $\Sigma = I_n$ (the full-information policy). Plugging values corresponding to both policies in \eqref{eq:sigmaeq} shows that when $\epsilon$ is large enough, Alice chooses to not disclose any information.

The lowest cost Alice can get with linear policies is thus
\begin{equation}
	k^2 n + \epsilon^2+
	\Big(
	(1-2k)n +2\epsilon|1-k| \,\mathbb E[\|x\|] \Big)^-.
\end{equation}
At fixed $k$, when $\epsilon$ is large enough the expression in brackets is positive and so her optimal linear cost becomes $k^2n + \epsilon^2$.

\subsection{An outperforming radius-threshold policy}

Alice could consider another type of message: she fixes a radius threshold $R>0$ and signals $y=x$ when $\|x\|\geq R$, $y=0$ otherwise. This policy generalizes the optimal policy obtained in the entirely solvable case $n=1$ from a recent study of ours \cite{massicot2023almost}. This fact in itself shows that linear policies are not always optimal, but as it relies on a completely different set of mathematical tools, we present an elementary argument here.

Upon receiving $y\neq0$, the conditional distribution of $x$ is $\delta_y$, hence $\hat x = y$, and when $y=0$ is sent, the conditional distribution of $x$ is symmetric, hence $\hat x =0$. In any case, $y=\hat x$ and one can estimate that
\begin{align}
	R \mathbb E[\|\hat x\|] 
	&= \mathbb E[R\|x\| \mathds1_{\{\|x\|\geq R\}}] \\
	&\leq \mathbb E[\|x\|^2 \mathds1_{\{\|x\|\geq R\}}]
	= \mathbb E[\|\hat x\|^2]. 
\end{align}
Choosing $R>R^* = \nicefrac{2\epsilon|1-k|}{2k-1}$,
\begin{align}
	&(1-2k)\Tr\Sigma + k^2 n + \epsilon^2+2\epsilon|1-k|\mathbb E[\|\hat x\|] \\
	&\quad< k^2 n + \epsilon^2 +(2k-1) (R^*-R)\mathbb E[\|\hat x\|] \\
	&\quad< k^2 n + \epsilon^2.
\end{align}
Therefore, the optimal value of Alice's program without restricting it to linear policies is strictly better than $k^2 n + \epsilon^2$, which is the value of the best linear policy for all $\epsilon$ large enough. This elementary argument shows that linear policies are not optimal.

\subsection{Discussion and a preview of things to come}

In summary, there is a class of Bayesian persuasion problems, for which optimal solutions are easily computed. Moreover, these solutions have a specific form: not only are they noisy linear policies, they are projective, that is they mute some channels by projecting the state $x$ orthogonally. When the Bayesian assumption is relaxed, however, the optimal policy fails to remain linear. 

This fact may seem reminiscent of the Witsenhausen counterexample, but with the important distinction that in the current situation even computing the optimal linear strategy is challenging. Indeed, the example presented above was chosen specifically because it could be solved in closed form, and there are multiple hurdles in the general case. The inner maximization cannot be solved analytically, and yet we are to take its average over all $\hat x$, and finally optimize over all policies.

Nonetheless, in this article we strive to do just this, with few caveats. By framing the non-Bayesian term between two bounds whose ratio is at most two, we obtain a pessimistic and an optimistic program. The pessimistic program provides an upper bound that holds for all policies, linear or not, yet surprisingly is solved by a projective policy. Since Alice prepares for the worst, this is the program that she solves. This establishes that the pessimistic solution is nearly optimal. Note that this still does not imply that projective policies are optimal, merely that they are almost optimal. We also derive a lower bound valid for linear policies, yielding an optimistic program that reflects more closely the true value of linear policies.

\section{General problem of interest}
\label{sec:bayesian}

For the purposes of making this paper self-contained, we start by reviewing the basic formulation of Bayesian persuasion from \cite{Kamenica}, before introducing and justifying the almost-Bayesian framework. We also review and expand the specific linear-quadratic persuasion setting first studied in \cite{Tamura}. Specifically, we give an exact characterization of priors for which Alice's program is solved by projective policies.

\subsection{Review of Kamenica \& Gentzkow's setup}

As mentioned in the introduction, a Bayesian persuasion game consists of two players. Alice, the sender, has access to more information than Bob, the receiver, and reveals her information according to an established scheme. After receiving the message, Bob interprets it and plays an action in order to minimize his expected cost. This action defines the loss of Alice.

To fix things, consider $(\Omega, \mathcal{F}, \nu)$ a probability space, $\mathcal{A}$ an action set for Bob, $\mathcal{M}$ a message space for Alice, and $\mathcal P(\mathcal{M})$ a space of probability measures on $\mathcal M$. The loss of both receiver and sender, $u(a, \omega)$ and $v(a, \omega)$ respectively, depend on the action taken by Bob $a$ and on $\omega$, the state of the world, observed by Alice.

Alice having chosen a disclosing mechanism $\sigma\colon \Omega \to \mathcal P(\mathcal{M})$, Bob, when Bayesian, can compute his expected cost with respect to the conditional probability (the posterior belief). His action will then be
\begin{equation}
a(m) \in \argmin_{a \in \mathcal{A}} ~\mathbb{E}[u(a,\omega) \! \mid \! m].
\end{equation}
Note that this only depends on the probability law $\mathbb{P}[ . \! \mid \! m]$. To emphasize this, we denote by $\mu$ the posterior belief held by Bob. Thus, the action of Bob is actually $a(\mu)$ (if he is indifferent, we let him choose the action that is most favorable to Alice). Further denote by $\tau$ the distribution of posteriors. The expected utility of Alice is now
\begin{equation}
\mathbb{E}_\tau[ \mathbb{E}_\mu[ v(a(\mu), \omega) ] ]
= 
\mathbb{E}_\tau[ \underset{\triangleq \hat{v}(\mu)}{\underbrace{v(a(\mu), \mu) }} ],
\end{equation}
where we used the standard notation $v(.,\mu) = \mathbb E_\mu[v(.,\omega)]$.

As pointed out in \cite{Kamenica}, exploring the case where $\Omega$ is finite, it is illuminating to write Alice's program with the distribution $\tau$ of posteriors as a variable for two reasons. First, the objective depends affinely in $\tau$, second the set $\mathcal T_\nu$ of distributions of posteriors that can be generated by a policy from the prior $\nu$, is easily described, again affinely in $\tau$. Both facts have geometric consequences which bring new light to the structure of the program. At a higher level, this simply means that Alice may instead focus on $\tau$, solve
\begin{equation}
	\label{prog:Bayes}
	\min_{\tau \in \mathcal T_\nu} ~\mathbb E_\tau[\hat v(\mu)],
\end{equation}
and later retrieve $\sigma$.

Characterizing $\mathcal T_\nu$ when $\nu$ is not finitely supported is challenging, nonetheless it is worth noting that in some cases the statistics relevant for the objective that are embedded in $\tau$ can be described simply. Gentzkow and Kamenica \cite{KamenicaLinear} explore this when $\Omega = \mathbb R$, Bob's response depends only on his estimate of the state, and Alice's loss is state-independent. More relevantly to the present work, in linear-preference quadratic-reward persuasion, only the covariance of the estimate matters and in some cases their range is well-known.

\subsection{Approximate Bayesianity}
\label{sec:bayes_inf}

While \eqref{prog:Bayes} is instrumental in revealing the structure of Alice's optimal  messaging policy for some families of function $\hat{v}$, it is only available when Bob is truly Bayesian. One way in which this assumption may fail to hold is if Bob is \textit{trying} to apply Bayes rule, yet fails because, e.g., he makes computations errors in doing so, if the computation is costly, or if the representation of the posterior distributions are not accurate in the formula. Alternatively, if one thinks of this game as a stage of a repeated process in which $\sigma$ is learned over time, there might be an error in Bob's learning, resulting in the use of an erroneous $\sigma$ in (a possibly otherwise correct) Bayes' rule.

A natural question, then, is to try and characterize the posterior beliefs that Bob may hold, as a result of such errors. To this end, we consider that Bob's erroneous posterior lies within a given safety set, parametrized by the Bayesian posterior, formally
\begin{equation}
	\mu' \in \Lambda(\mu),
\end{equation}
without further specifying how $\mu'$ is generated. This idea appeared recently in the literature, for instance as a generalization of parametric models, \cite{de2022non}. One can think of $\Lambda(\mu)$ as the set of posteriors Alice finds credible. We will refer to the correspondence $\Lambda$ as Alice's \emph{robust hypothesis.}

Realizing Bob will fail to produce accurate posteriors, Alice may want to account for the worst of his possible mistakes. To do so, Alice could expect a worst-case loss for each belief $\mu$, 
\begin{equation}
	\hat v'(\mu) 
	\triangleq 
	\sup_{\mu'\in \Lambda(\mu)}	
	~v(a(\mu'), \mu).
\end{equation}
This would naturally lead to a ``classical'' Bayesian persuasion program such as \eqref{prog:Bayes}, with $\hat v'$ replacing $\hat{v}$, i.e.
\begin{equation}
	\label{prog:almost_Bayes}
	\min_{\tau\in\mathcal T_\nu}	~\mathbb{E}_\tau[\hat v'(\mu)].
\end{equation}

Alternatively, Alice could want to account for the worst of Bob's mistakes, for every realization $\omega$. This would yield a more robust program as it would capture the worst mistake of Bob for each realization of $\omega$, and not merely for each message $m$. However, we deem this approach too conservative since Bob never observes $\omega$ before taking action, and his mistakes might thus not be correlated with $\omega$ further than through the knowledge of $m$.

Our hypothesis also singularly differs from parametric uncertainty, where Bob behaves in a specific coherent way, unknown to Alice. In this case, she would rather account for this uncertainty at the root, and not at the belief level. Informally, if $\theta\in\Theta$ is the unknown parameter and $\hat v_\theta$ denotes the conditional utility of Alice when Bob is of type $\theta$, the program of Alice should rather be
\begin{equation}
	\min_{\tau\in\mathcal T_\nu} ~\sup_{\theta\in\Theta} ~\mathbb E[\hat v_\theta(\mu)].
\end{equation}
It is nonetheless possible to consider the perhaps overly robust program
\begin{equation}
	\min_{\tau\in\mathcal T_\nu} ~\mathbb E\!\left[\sup_{\theta\in\Theta} ~\hat v_\theta(\mu)\right]
\end{equation}
which fits in our framework. It is arguably too conservative, yet it could prove useful if more amenable to analysis than the previous approach. On this topic, we refer the interested reader to our discussion in \Cref{sec:hypotheses}.

In order to make progress in characterizing how solutions of \eqref{prog:almost_Bayes} would differ from those of \eqref{prog:Bayes}, we now consider a special setup, as introduced in \cite{Tamura}. We later relax the Bayesian hypothesis, and consider the specific case of linear-quadratic persuasion.

\subsection{Linear-quadratic persuasion}

This section reviews linear-quadratic persuasion as introduced by Tamura in \cite{Tamura}, the setting of our study (albeit with an almost-Bayesian Bob). In a general linear-quadratic persuasion game, Alice observes the state of nature $x\in\mathbb R^n$ distributed according to $\nu$, a Borel probability measure on $\mathbb R^n$ centered and of covariance $I_n$ without loss of generality. She then sends a message $y\sim\sigma(x)$ with $\sigma \colon \mathbb R^n \to \mathcal P(\mathcal M)$ fixed, known by Bob and chosen by Alice. Bob then plays his best response, assumed to be affine in his estimation $\hat x = \mathbb E[x\!\mid\!y]$,
\begin{equation}
\label{eq:bobbr}
	a(\hat x) = B\hat x + b \in \mathbb R^k.
\end{equation}
Finally, Alice suffers the quadratic loss
\begin{equation}
\label{eq:alicecost}
	v(a,x) 
	= \begin{bmatrix} x\\a \end{bmatrix}^\top M \begin{bmatrix} x\\a \end{bmatrix} 
	+ p^\top \begin{bmatrix} x\\a \end{bmatrix} + q,
\end{equation}
where $M$ is symmetric and $a$ is the action played by Bob. The theoretical appeal of this model is that 1) Bob's response can be motivated as resulting from a quadratic loss as well, 2) for a given policy $\sigma$, Alice's loss only depends on the covariance of $\hat x$ as detailed in the following lemma.

\begin{lemma}[from \cite{Tamura}]
\label{lem:rewriteTamura}
	For $\sigma$ fixed, Alice's cost is
	\begin{equation}
		\mathbb E[v(a(\hat x),x)]
		= 
		\Tr(D\Sigma) + c,
	\end{equation}
	where $c$ is a constant, $D= M_{12}B+B^\top  M_{21}+B^\top  M_{22} B$ is a constant symmetric matrix, and $\Sigma$ is the covariance of $\hat x$ under policy $\sigma$.
\end{lemma}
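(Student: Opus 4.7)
The plan is a direct computation: substitute Bob's response $a(\hat x) = B\hat x + b$ into the quadratic loss \eqref{eq:alicecost}, expand using the block decomposition $M = \begin{bmatrix} M_{11} & M_{12} \\ M_{21} & M_{22}\end{bmatrix}$ compatible with the $(x,a)$ partition, and take expectations, collecting all $\Sigma$-dependent terms into $\Tr(D\Sigma)$ and everything else into $c$.

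First I would dispose of the straightforward pieces. Since $\nu$ is centered with covariance $I_n$, $\mathbb E[x^\top M_{11} x] = \Tr(M_{11})$ is constant, and every affine-in-$x$ term vanishes. By the tower property, $\mathbb E[\hat x] = \mathbb E[\mathbb E[x\mid y]] = 0$, so the affine-in-$\hat x$ contributions coming from $b$, $p$, and $q$ are also constants and get absorbed into $c$. The pure quadratic-in-$\hat x$ term $\mathbb E[\hat x^\top B^\top M_{22} B \hat x]$ evaluates to $\Tr(B^\top M_{22} B\,\Sigma)$ via $\mathbb E[\hat x \hat x^\top] = \Sigma$.

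The crux is the two cross terms $\mathbb E[x^\top M_{12} B \hat x]$ and $\mathbb E[\hat x^\top B^\top M_{21} x]$. The key step is the identification $\mathbb E[\hat x x^\top] = \Sigma$, which follows from the tower property applied to the $\sigma(y)$-measurable factor $\hat x$:
\[
    \mathbb E[\hat x x^\top] = \mathbb E\!\left[\hat x\, \mathbb E[x^\top \mid y]\right] = \mathbb E[\hat x \hat x^\top] = \Sigma.
\]
With this in hand, the cross terms contribute $\Tr(M_{12} B \Sigma) + \Tr(B^\top M_{21} \Sigma)$, and summing all $\Sigma$-dependent pieces yields $\Tr\bigl((M_{12}B + B^\top M_{21} + B^\top M_{22} B)\Sigma\bigr) = \Tr(D\Sigma)$. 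This step---recognizing that $\mathbb E[\hat x x^\top]$ collapses to the covariance of $\hat x$ rather than being a genuine cross-covariance---is the only non-routine observation; everything else is bookkeeping and standard trace identities.
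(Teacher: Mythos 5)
Your proposal is correct and follows essentially the same route the paper uses (the lemma is cited from Tamura, but the identical computation appears in the opening example of \Cref{sec:cex}, whose pivotal step is exactly your identity $\mathbb E[\hat x x^\top]=\mathbb E[\hat x\,\mathbb E[x\mid y]^\top]=\mathbb E[\hat x\hat x^\top]=\Sigma$). The remaining bookkeeping---absorbing the $M_{11}$, $b$, $p$, $q$ contributions into $c$ using $\mathbb E[x]=\mathbb E[\hat x]=0$ and noting $D$ is symmetric because $M_{21}=M_{12}^\top$---is handled correctly.
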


The covariance of $\hat x$ always lies in $\mathcal S \triangleq \{\Sigma \succeq 0, ~\Sigma\preceq I_n\}$, and both bounds can be reproduced exactly with respectively no- and full-information disclosure. If we call $\mathcal S_\nu \subset \mathcal S$ the set of covariances of $\hat x$ produced by any policy, Alice's quest amounts to first finding $\Sigma$ that solves
\begin{equation}
\label{prog:Bayes_linquad_nonisotropic}
	\min_{\Sigma\in \mathcal S_\nu} ~\Tr(D\Sigma) + c,
\end{equation}
then retrieving a policy $\sigma$ that generates this covariance. At this stage, it remains unclear how to perform either step. 

The author of \cite{Tamura} notes that
\begin{equation}
\label{prog:Bayes_linquad}
\tag{BP}
	\min_{0 \preceq \Sigma \preceq I_n} 
	~\Tr(D\Sigma) + c
\end{equation}
is an upper bound on Alice's best performance (i.e., a lower bound of her lowest expected cost), and when $\mathcal S_\nu = \mathcal S$, actually equals it. Program \eqref{prog:Bayes_linquad} is immediate to solve, either numerically by recognizing it is a semi-definite program (SDP), or analytically by resorting to the following lemma, of which we will make frequent use.\footnote{This lemma is more or less already present under a different form in the proof of Theorem 1 of \cite{Tamura}, but this specific formulation is more helpful to us.}

\begin{lemma} \label{lem:trsols}
	Solutions of
	\begin{equation}
		\min_{0\preceq X\preceq I_n} ~\Tr(DX),
	\end{equation}
	are exactly all $P_D^{<0} \preceq X \preceq P_D^{\leq0}$, where $P_D^{<0},P_D^{\leq0}$ are respectively the orthogonal projection matrix on the negative and on the non-positive eigenspace of $D$ (i.e., the space spanned by the eigenvectors of $D$ associated to negative and respectively non-positive eigenvalues).
	
	The solution is unique when $P_D^{<0} = P_D^{\leq0}$ (corresponding to $D$ non-singular). This situation arises generically, however, when it is not the case, $P_D^{<0}$ is the only solution of minimal rank.
\end{lemma}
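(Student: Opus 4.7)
The plan is to reduce the program to its diagonal form via spectral decomposition. Write $D = O \Lambda O^\top$ with $O$ orthogonal and $\Lambda = \mathrm{diag}(\lambda_1, \ldots, \lambda_n)$. Changing variables to $Y = O^\top X O$ preserves the feasible set (since $O^\top I_n O = I_n$ and $Y \succeq 0 \iff X \succeq 0$) and rewrites the objective as $\Tr(\Lambda Y) = \sum_i \lambda_i Y_{ii}$. Hence it suffices to solve
\begin{equation}
  \min_{0 \preceq Y \preceq I_n} ~\sum_{i=1}^{n} \lambda_i Y_{ii},
\end{equation}
and then push the answer back through $X = O Y O^\top$.

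From there I would read off the optimal value by a one-line pointwise bound: the constraints $Y \succeq 0$ and $I_n - Y \succeq 0$ force $Y_{ii} \in [0,1]$, so
\begin{equation}
  \sum_i \lambda_i Y_{ii} \;\geq\; \sum_{\lambda_i < 0} \lambda_i \;=\; \Tr(D P_D^{<0}),
\end{equation}
with equality if and only if $Y_{ii} = 1$ for every $i$ with $\lambda_i < 0$ and $Y_{ii} = 0$ for every $i$ with $\lambda_i > 0$. Achievability is immediate by taking $Y = P_\Lambda^{<0}$.

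The substantive step is to characterize \emph{all} optimizers. I would rely on the standard fact (via the $2\times 2$ principal-minor test) that a PSD matrix $A$ with $A_{ii} = 0$ has its $i$-th row and column identically zero. Applied to $Y$ at indices where $\lambda_i > 0$, this annihilates those rows and columns; applied to $I_n - Y$ at indices where $\lambda_i < 0$, it forces those rows and columns of $Y$ to coincide with those of $I_n$. So, relative to the orthogonal decomposition of $\mathbb R^n$ into negative, zero, and positive eigenspaces of $\Lambda$, any optimal $Y$ is block-diagonal, equal to the identity on the negative block, zero on the positive block, and an arbitrary PSD matrix $Y_0$ satisfying $0 \preceq Y_0 \preceq I$ on the middle block. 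Transporting this back by $O$ gives precisely $P_D^{<0} \preceq X \preceq P_D^{\leq 0}$, and every such $X$ is verified to attain the bound by direct computation.

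The uniqueness statements then follow immediately. When $D$ is nonsingular the middle block is empty, so $X = P_D^{<0}$ is the only solution. Otherwise every optimizer is $P_D^{<0}$ plus a PSD matrix supported on $\ker D$, which has rank strictly greater than $\rk P_D^{<0}$ unless that extra piece vanishes. I expect no real obstacle here: the only nontrivial ingredient is the PSD zero-diagonal lemma, which is classical; the rest is a rewriting in eigenbasis.
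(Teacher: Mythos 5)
Your proposal is correct and follows essentially the same route as the paper's proof: diagonalize $D$, obtain the optimal value from the entrywise bounds $Y_{ii}\in[0,1]$ (the paper phrases this via $\Tr(D^+X)\geq0$ and $\Tr(D^-X)\leq\Tr(D^-)$, which is the same estimate), and then use the PSD zero-diagonal fact to force the block structure of any optimizer, from which the characterization $P_D^{<0}\preceq X\preceq P_D^{\leq0}$ and the minimal-rank claim follow identically.
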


A direct consequence of this lemma, when $P^{<0}_D \in \mathcal S_\nu$, is that Alice has incentive to send some information (i.e., a policy other than no revelation) if and only if $D \not\succeq0$.

The solution of minimal rank corresponds to the situation in which Alice's policy uses the minimal number of channels while remaining optimal. It is worth noting that since \eqref{prog:Bayes_linquad} is a concave program (the objective of the minimization is concave, on a convex domain), there always is a solution that is an extreme point of the domain, thus in this case, is an orthogonal projection matrix. It is appreciable that the unique solution of minimal rank is also an orthogonal projection matrix.

\smallskip
Figuring out $\mathcal S_\nu$ for a given prior can be challenging. Nevertheless, it is possible to check whether $\mathcal S = \mathcal S_\nu$ in practice thanks to the following theorem.

\begin{thm}
\label{thm:isotrop}
	The four following statements are equivalent,
	\begin{enumerate}[(i)]
		\item $\mathcal S = \mathcal S_\nu$;
		\item Program \eqref{prog:Bayes_linquad_nonisotropic} and \eqref{prog:Bayes_linquad} have same value for all $D$;
		\item for all rotation matrix $R$, $Rx$ is distributed according to $\nu$;
		\item for all orthogonal projection matrix $P$, $\mathbb E[x\!\mid\!Px] = Px$.
	\end{enumerate}
\end{thm}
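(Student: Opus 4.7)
\smallskip
My plan is to close the equivalences via (i) $\Leftrightarrow$ (ii), (iii) $\Leftrightarrow$ (iv), and (i) $\Leftrightarrow$ (iv), using throughout that $\mathcal S_\nu \subseteq \mathcal S$ (by the law of total variance, $\Cov(\hat x) = I_n - \mathbb E[\Cov(x\mid y)] \preceq I_n$) and that $\mathcal S_\nu$ is convex (by randomizing between policies and appending the coin flip to the message). The direction (i) $\Rightarrow$ (ii) is immediate; for the converse, both sets are compact convex in the symmetric matrices, and equality of $\min_\Sigma \Tr(D\Sigma)$ for every symmetric $D$ identifies their support functions and hence the sets themselves.

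For (iii) $\Rightarrow$ (iv), I would first promote $SO(n)$-invariance of $\nu$ to $O(n)$-invariance for $n \geq 2$ via transitivity of $SO(n)$ on the unit sphere (the case $n = 1$ is vacuous), then apply the reflection $R = 2P - I \in O(n)$: from $Rx \sim x$ and $PRx = Px$ one gets $(Rx, Px) \sim (x, Px)$, so $\mathbb E[Rx \mid Px] = \mathbb E[x \mid Px]$, and expanding $Rx = Px - (I-P)x$ yields $\mathbb E[(I-P)x \mid Px] = 0$. For (iv) $\Rightarrow$ (iii), I would specialize (iv) to $P = uu^\top$ for each unit $u$, and translate the resulting conditional identity, by differentiating $\phi(t) = \mathbb E[e^{i\langle t, x \rangle}]$ along $t = \lambda u$, into $\nabla \phi(\lambda u) \parallel u$ for every $\lambda, u$. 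A radial gradient field forces $\phi$ to be constant on spheres, so $\nu$ is rotationally invariant.

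For (iv) $\Rightarrow$ (i), the deterministic policy $\sigma(x) = \delta_{Px}$ yields $\hat x = \mathbb E[x \mid Px] = Px$ with covariance $P$, placing every orthogonal projection into $\mathcal S_\nu$; convexity of $\mathcal S_\nu$ and the fact that $\mathcal S$ is the convex hull of its projections (via eigendecomposition) closes the equality. For (i) $\Rightarrow$ (iv), given a policy achieving $\Cov(\hat x) = P$, the vanishing variance of $\hat x$ orthogonal to $\Ima P$ forces $\hat x \in \Ima P$ a.s., after which a short computation $\Cov(Px - \hat x) = P + P - 2P = 0$ gives $\hat x = Px$ a.s., and the tower property returns $\mathbb E[x \mid Px] = \mathbb E[\hat x \mid Px] = Px$. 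The main obstacle is (iv) $\Rightarrow$ (iii): upgrading a family of one-dimensional conditional-expectation conditions to full rotational invariance of $\nu$ requires a global argument, and the characteristic-function route---showing $\nabla \phi$ is everywhere radial---is the cleanest path I see.
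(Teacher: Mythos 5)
Your proof is correct, and on the hard direction it takes a genuinely different and leaner route than the paper. The equivalences (i) $\Leftrightarrow$ (ii) $\Leftrightarrow$ (iv) match the paper's argument in substance (convexity of $\mathcal S_\nu$ by randomization, $\mathcal S$ as the convex hull of its projections, and the covariance computation forcing $\hat x = Px$); for (ii) $\Rightarrow$ (i) the paper instead tests $D = I_n - 2P$ and invokes \Cref{lem:trsols} to conclude $P$ is the unique minimizer over $\mathcal S$, whereas you appeal to support functions --- both versions quietly assume the infimum over $\mathcal S_\nu$ is attained (equivalently, that $\mathcal S_\nu$ is closed), so you are at the same level of rigor as the paper there. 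The real divergence is (iv) $\Rightarrow$ (iii). The paper mollifies $\nu$ against a radial bump to get a proxy density $f_\varphi$, shows its Radon transform is radial via a path-derivative argument on the sphere, passes to the Fourier transform, and invokes Fourier inversion plus a measure-determination-by-balls result to conclude. You apply essentially the same path-derivative trick directly to the characteristic function: condition (iv) with $P = uu^\top$ gives $\nabla\phi(\lambda u) = i\,u\,\mathbb E[(u^\top x)e^{i\lambda u^\top x}] \parallel u$, a radial $C^1$ gradient field (differentiability is guaranteed by the finite second moment), hence $\phi$ is constant on spheres for $n \geq 2$ and $\nu$ is rotation-invariant by uniqueness of characteristic functions. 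Because $e^{i\lambda s}$ is bounded, no bump-function truncation or Radon/Fourier-inversion machinery is needed; this is a genuine simplification. Your (iii) $\Rightarrow$ (iv) via the reflection $R = 2P - I$ and the identity $(Rx, Px) \sim (x, Px)$ is likewise a clean, self-contained alternative to the paper's appeal to a cited result on conditional laws of isotropic distributions (you correctly note that one must first promote $SO(n)$-invariance to $O(n)$-invariance, since $\det(2P-I)$ may be $-1$).
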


A simple yet useful byproduct of this theorem is that when $\nu$ is rotationally-invariant (i.e., condition (iii) applies) and the message is $y = Px$, with $P$ an orthogonal projection matrix, the covariance of $\hat x$ is $P$. Indeed, $\hat x = Px$ and thus $\Sigma = PI_nP^\top = P$. Hence under condition (iii), \eqref{prog:Bayes_linquad} exactly represents Alice’s program, sending $y=P^{<0}_D x$ is optimal. For this reason, we define the projective policy of (orthogonal projection) matrix P as the signaling policy $y = Px$.

Note that condition (iii) only involves rotation matrices, not all orthogonal matrices. In dimension $n\geq2$, (iii) is equivalent to the isotropy of $\nu$ (i.e., its invariance under all orthogonal transformations, see \cite{isotropic} for a study). The subtlety only occurs in the case $n=1$ for which condition (iii) is trivial, whereas the same condition allowing all orthogonal matrices further entails $\nu$ is symmetric.

This theorem uncovers the exact reasons behind the success of linear (and, in fact, projective) policies under Gaussian priors, and their failure in the counterexample provided in \cite{Tamura} with $\nu$ uniform over the unit square. As such, this theorem extends the results obtained by \cite{Tamura} for Gaussian priors to all rotationally-invariant priors, and, in fact, proves that no other prior enjoys the same properties. Given the importance of rotationally-invariant priors and the desirable properties they possess, $\nu$ will be assumed isotropic (but not necessarily Gaussian) in the remainder of this article. In this case, \eqref{prog:Bayes_linquad} is the Bayesian Program, the one Alice solves when Bob is Bayesian.

\section{Approximating Alice's program under an ellipsoidal hypothesis}
\label{sec:ellipsoidal}

When Bob is not exactly Bayesian, Alice's program is not quite as simple as explained in \Cref{sec:bayesian} since his response is not just linear in $\hat x$. In order to make progress in this case, we first need to discuss the robust hypothesis $\Lambda$ in more details. The last part of this section is dedicated to approximating the non-Bayesian term under these hypotheses.

\subsection{Tractable robust hypotheses}

We conveniently denote the average by $\bar\mu \triangleq \mathbb E_\mu[x]$ when $\mu$ is a probability measure over $\mathbb R^n$. We will also call $\Sigma_\mu = \mathbb E_\mu[(x-\bar\mu)(x-\bar\mu)^\top]$, the covariance of $x$ under belief $\mu$. In particular, $\bar\nu=0$ and $\Sigma_\nu = I_n$. 

Let $\mu'_y$ be an erroneous belief of Bob after receiving message $y$, then $\tilde x = \bar\mu'_y$ is Bob's inaccurate estimation of $x$ given $y$, whereas the Bayesian estimate is $\hat x=\bar\mu_y$. A cautious Alice tries to account for this inaccuracy. She realizes that since Bob's action only depends on $\tilde x$, she need not worry about $\mu'_y$ entirely but solely about its mean. For this reason, she only really needs to consider the set of means induced by $\Lambda(\mu)$, i.e.
\begin{equation}
	\bar\Lambda(\mu) \triangleq \{\bar\mu', ~\mu'\in\Lambda(\mu)\}.
\end{equation}
This set can take various forms depending on the specific way Bob fails to be Bayesian, according to Alice. Nonetheless, the following kind of hypotheses, in addition to lending itself to some degree of tractability, as shown later, also captures a number of `natural' ways in which Bob fails to be Bayesian, as explained in \Cref{sec:hypotheses}.

\begin{defi}
	The ellipsoidal hypothesis of parameter $C$ (and of shape $CC^\top$) is the correspondence $\bar\Lambda$ defined by
	\begin{equation}
		\bar\Lambda(\mu) = \bar\mu + C\mathcal B.
	\end{equation}
\end{defi}

Since $\bar\Lambda$ defined above only depends on $\mu$ through its mean $\bar\mu$, we henceforth will abuse notation by writing $\bar\Lambda(\bar\mu)$. Note that the ellipsoidal hypothesis of parameter $0$ is none other than the Bayesian hypothesis.

\subsection{Rewriting the program under an ellipsoidal hypothesis}

With this definition in hand, we are now in position to tackle Alice's program. To propose a sharper analysis of Alice's cost, we assume that her cost is non-negative. Under this assumption, we have the following rewriting.

\begin{lemma}
\label{lem:nonneg}
	There exist $Q\succeq0$, $l$ a vector and $r\geq0$ such that
	\begin{equation}
	\label{eq:alicecostnonneg}
		v(a(\tilde x),x)= \left(\begin{bmatrix} x\\\tilde x \end{bmatrix} - l\right)^\top Q \left(\begin{bmatrix} x\\\tilde x \end{bmatrix} - l \right) + r.
	\end{equation}
\end{lemma}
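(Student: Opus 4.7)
The plan is to expand $v(a(\tilde x), x)$ as a quadratic form in the joint vector $z = \begin{bmatrix} x \\ \tilde x \end{bmatrix}$ and then invoke a standard completion-of-squares argument for non-negative quadratic polynomials.

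First, I would substitute Bob's affine response $a(\tilde x) = B\tilde x + b$ from \eqref{eq:bobbr} into the quadratic cost \eqref{eq:alicecost}. Writing
\begin{equation*}
    \begin{bmatrix} x \\ a(\tilde x) \end{bmatrix} = T z + t, \qquad
    T = \begin{bmatrix} I_n & 0 \\ 0 & B \end{bmatrix}, \quad
    t = \begin{bmatrix} 0 \\ b \end{bmatrix},
\end{equation*}
and expanding yields an expression of the form $v(a(\tilde x), x) = z^\top \widetilde M z + \tilde p^\top z + \tilde q$, where $\widetilde M = T^\top M T$ is symmetric, and $\tilde p, \tilde q$ are obtained from $M, p, q, b, B$ by straightforward bookkeeping.

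Next, I would use the non-negativity hypothesis to constrain $(\widetilde M, \tilde p, \tilde q)$. Since the expression is non-negative for every $z \in \mathbb R^{2n}$, standard reasoning (testing directions $tz_0$ as $t\to\pm\infty$) forces $\widetilde M \succeq 0$, and then forces $\tilde p$ to lie in $\mathrm{range}(\widetilde M)$: otherwise, there would exist $z_0 \in \ker \widetilde M$ with $\tilde p^\top z_0 \neq 0$, and the affine function $t \mapsto v(a(\tilde x_t), x_t)$ along the corresponding line would tend to $-\infty$, a contradiction. Once $\tilde p \in \mathrm{range}(\widetilde M)$, I can pick $l$ with $\widetilde M l = -\tfrac12 \tilde p$ and rewrite
\begin{equation*}
    z^\top \widetilde M z + \tilde p^\top z + \tilde q = (z - l)^\top \widetilde M (z - l) + r
\end{equation*}
with $r = \tilde q - l^\top \widetilde M l$. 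Setting $Q = \widetilde M$ gives the desired $Q \succeq 0$; moreover, evaluating the identity at $z = l$ yields $r = v(a(\tilde x_l), x_l) \geq 0$.

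The only subtle step is the range condition $\tilde p \in \mathrm{range}(\widetilde M)$, which is exactly where the non-negativity assumption does real work—without it the quadratic form could be unbounded below on an affine subspace, and no $l$ would exist. Everything else is algebraic repackaging of \eqref{eq:alicecost} under the substitution $a = B\tilde x + b$. I expect no substantive obstacle beyond carefully verifying the range condition and checking that the remainder is non-negative by evaluating at a minimizer.
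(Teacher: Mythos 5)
Your proposal is correct and follows essentially the same route as the paper's proof: substitute $a = B\tilde x + b$ to get a quadratic form in $(x,\tilde x)$, deduce $Q\succeq0$ and the range condition on the linear coefficient from non-negativity along lines, complete the square, and evaluate at $l$ to see $r\geq0$. No substantive differences.
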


Recall that Alice's objective is $\mathbb E_\tau[\hat{v}'(\mu) ]$, where
\begin{equation}
\label{prog:hatv_lin}
	\hat{v}'(\mu) 
	=
	\sup_{\bar\mu'\in\bar\Lambda(\bar\mu)}	
	~v(a(\bar\mu'), \mu).
\end{equation}
Since \eqref{prog:hatv_lin} only depends on $\bar\mu$, the objective of Alice is only a function of the distribution $\bar\tau$ of estimates, rather than the distribution $\tau$ of the whole beliefs. Accordingly, we denote by $\bar{\mathcal T}_\nu$ the set of distributions of estimates that can be generated by a policy from the prior $\nu$. In this context, $\delta_{\bar\nu}\in\bar{\mathcal T}_\nu$ is the distribution of estimates resulting from the no-information policy, and $\nu\in\bar{\mathcal T}_\nu$ is the distribution of estimates resulting from the full-information policy. With this notation in hand, we rewrite Alice's program in the following lemma.

\begin{lemma}
\label{lem:rewrite}
	Under ellipsoidal hypothesis of parameter $C$, the program of Alice takes the form
\begin{equation}
	\label{prog:lin_almost_Bayes}
	\tag{ABP}
		\min_{\bar\tau\in\bar{\mathcal T}_\nu} 
		~\Tr(D\Sigma) +c+ \mathbb{E}_{\bar\tau}\!\left[\max_{\eta\in C\mathcal B} 
		~w(\eta, \bar\mu)\right],
\end{equation}
where explicitly
	\begin{equation}
		w(\eta,\bar\mu) = 2((Q_{21}+Q_{22})\bar\mu-Q_{21}l_1-Q_{22}l_2)^\top \eta +  \eta^\top Q_{22} \eta.
	\end{equation}
The term $\Tr(D\Sigma) + c$, where
\begin{equation}
	D= Q_{12}+Q_{21}+Q_{22}, ~c =r + l^\top Q l + \Tr Q_{11},
\end{equation}
corresponds to the Bayesian case, as can be seen by setting $C=0$. The remaining term is the penalty induced by the imprecise knowledge of Alice over Bob's belief.
\end{lemma}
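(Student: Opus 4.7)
The plan is to plug the representation from Lemma~\ref{lem:nonneg} into the worst-case objective, separate the variables $\eta$ and $\bar\mu$ through a straightforward algebraic decomposition, and then average under $\tau$ using the law of total variance.

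First I would write, for any candidate estimate $\tilde x = \bar\mu + \eta$ with $\eta\in C\mathcal{B}$, the conditional cost $v(a(\tilde x),\mu)=\mathbb{E}_\mu[v(a(\tilde x),x)]$. Using Lemma~\ref{lem:nonneg} and the block decomposition $Q=\begin{bmatrix}Q_{11}&Q_{12}\\Q_{21}&Q_{22}\end{bmatrix}$, $l=\begin{bmatrix}l_1\\l_2\end{bmatrix}$, one expands the quadratic form, takes $\mathbb{E}_\mu$ which only acts on the $x$-block, and uses $\mathbb{E}_\mu[(x-l_1)(x-l_1)^\top]=\Sigma_\mu+(\bar\mu-l_1)(\bar\mu-l_1)^\top$. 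Substituting $\tilde x-l_2=(\bar\mu-l_2)+\eta$, the cost splits into three pieces: a term $\Tr(Q_{11}\Sigma_\mu)$, the purely-Bayesian quadratic $\bigl(\begin{smallmatrix}\bar\mu\\\bar\mu\end{smallmatrix}\bigr)$-term, and a remainder linear-quadratic in $\eta$ with coefficients depending on $\bar\mu$.

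Collecting the $\eta$-dependent terms yields exactly
\begin{equation}
w(\eta,\bar\mu)=2\bigl[(\bar\mu-l_1)^\top Q_{12}+(\bar\mu-l_2)^\top Q_{22}\bigr]\eta+\eta^\top Q_{22}\eta,
\end{equation}
which after using $Q_{21}=Q_{12}^\top$ matches the expression in the statement. Since this is the only place $\eta$ appears, the supremum over $\eta\in C\mathcal{B}$ commutes with the other terms, giving $\hat v'(\mu)$ as a sum of a $\mu$-dependent ``Bayesian'' part plus $\max_{\eta\in C\mathcal{B}}w(\eta,\bar\mu)$.

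Next I would take expectation under $\tau$. For the $\Tr(Q_{11}\Sigma_\mu)$ term, the law of total variance together with $\Sigma_\nu=I_n$ gives $\mathbb{E}_\tau[\Sigma_\mu]=I_n-\Sigma$, producing $\Tr Q_{11}-\Tr(Q_{11}\Sigma)$. For the $\bar\mu$-quadratic contribution, I use $\mathbb{E}_{\bar\tau}[\bar\mu]=\bar\nu=0$ and $\mathbb{E}_{\bar\tau}[\bar\mu\bar\mu^\top]=\Sigma$, so the linear term in $l$ vanishes and the quadratic term contributes $\Tr\bigl((Q_{11}+Q_{12}+Q_{21}+Q_{22})\Sigma\bigr)$. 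Adding the constants $r+l^\top Q l$ and combining with the $Q_{11}$ pieces, the $Q_{11}\Sigma$ terms cancel and one obtains exactly $\Tr(D\Sigma)+c$ with $D=Q_{12}+Q_{21}+Q_{22}$ and $c=r+l^\top Q l+\Tr Q_{11}$. Finally, since $\max_{\eta\in C\mathcal{B}}w(\eta,\bar\mu)$ depends only on $\bar\mu$, its expectation under $\tau$ reduces to an expectation under $\bar\tau$, yielding the stated form \eqref{prog:lin_almost_Bayes}.

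The only real difficulty is the bookkeeping of the block expansion and making sure the cross terms between $(x-l_1)$, $(\bar\mu-l_2)$, and $\eta$ are grouped correctly so that the Bayesian-only part can be recognized as $\Tr(D\Sigma)+c$; everything else is a direct computation. Setting $C=0$ makes $\eta=0$ and recovers the Bayesian formula of Lemma~\ref{lem:rewriteTamura}, which provides a useful internal consistency check.
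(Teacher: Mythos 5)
Your proposal is correct and follows essentially the same route as the paper's proof: substitute $\tilde x = \bar\mu + \eta$, expand the quadratic form of \Cref{lem:nonneg} to isolate the $\eta$-dependent part as $w(\eta,\bar\mu)$, and average the remaining (Bayesian) part to obtain $\Tr(D\Sigma)+c$. The only cosmetic difference is that you condition first and invoke the law of total variance to handle $\mathbb E_\tau[\Sigma_\mu]=I_n-\Sigma$, whereas the paper evaluates $\mathbb E\bigl[\begin{bmatrix}x\\\bar\mu\end{bmatrix}^\top Q\begin{bmatrix}x\\\bar\mu\end{bmatrix}\bigr]$ directly via $\mathbb E[\bar\mu x^\top]=\Sigma$; both yield the same cancellation of the $Q_{11}\Sigma$ terms.
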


Before exploring approximations, we should mention that under the ellipsoidal hypothesis, Alice has no incentive to share information to an almost-Bayesian agent if she has none to share information to a Bayesian agent. 

\begin{thm}
\label{thm:noinf}
	When an optimal strategy is to not reveal any information to the Bayesian agent (equivalently, when $D\succeq0$), the same is true for almost-Bayesian agents. More formally put, if $\Sigma=0$ is a solution of the Bayesian program \eqref{prog:Bayes_linquad}, then $\bar\tau = \delta_{\bar\nu}$ is a solution of the Almost-Bayesian Program \eqref{prog:lin_almost_Bayes}.
\end{thm}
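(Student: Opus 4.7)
The plan is to exploit the convexity of the non-Bayesian penalty in $\bar\mu$ together with the martingale structure of Bayesian posteriors (i.e., $\mathbb{E}_{\bar\tau}[\bar\mu] = \bar\nu = 0$).

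First, I would compute the value of the candidate optimal policy $\bar\tau = \delta_{\bar\nu} = \delta_0$. For this policy, $\Sigma = 0$, so the objective in \eqref{prog:lin_almost_Bayes} evaluates to
\begin{equation}
    c + \max_{\eta \in C\mathcal{B}} w(\eta, 0).
\end{equation}

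Next, I would observe that, for every fixed $\eta$, the map $\bar\mu \mapsto w(\eta, \bar\mu)$ is affine, since by inspection of the explicit expression for $w$ in \Cref{lem:rewrite} only the linear term in $\eta$ depends on $\bar\mu$. Consequently, $\bar\mu \mapsto \max_{\eta \in C\mathcal{B}} w(\eta, \bar\mu)$ is a supremum of affine functions, hence convex on $\mathbb{R}^n$. This convexity is the crux of the argument.

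Then, for an arbitrary $\bar\tau \in \bar{\mathcal{T}}_\nu$, I would apply Jensen's inequality. Since $\bar\tau$ is a distribution of posterior means generated from the prior $\nu$, the tower property gives $\mathbb{E}_{\bar\tau}[\bar\mu] = \mathbb{E}[\mathbb{E}[x\!\mid\!y]] = \bar\nu = 0$. Hence
\begin{equation}
    \mathbb{E}_{\bar\tau}\!\left[\max_{\eta \in C\mathcal{B}} w(\eta, \bar\mu)\right] \geq \max_{\eta \in C\mathcal{B}} w(\eta, 0).
\end{equation}
Simultaneously, the hypothesis that $\Sigma = 0$ solves \eqref{prog:Bayes_linquad} translates, via \Cref{lem:trsols}, to $D \succeq 0$, which together with $\Sigma \succeq 0$ yields $\Tr(D\Sigma) \geq 0$. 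Adding the two inequalities shows that the full objective at any $\bar\tau$ is at least $c + \max_\eta w(\eta, 0)$, which is exactly the value achieved by $\delta_{\bar\nu}$.

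I do not expect any real obstacle here; the entire proof hinges on recognizing that the inner supremum is convex in $\bar\mu$ and that the Bayesian-martingale property pins the mean of any attainable $\bar\tau$ to $\bar\nu$. No appeal to isotropy of $\nu$ or to the structure of $\bar{\mathcal{T}}_\nu$ beyond this mean-preservation is needed, which is why the result is this clean.
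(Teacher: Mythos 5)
Your proposal is correct and follows essentially the same route as the paper: convexity of $\bar\mu \mapsto \max_{\eta\in C\mathcal B} w(\eta,\bar\mu)$ as a supremum of affine functions, the martingale property $\mathbb E_{\bar\tau}[\bar\mu]=\bar\nu$, and Jensen's inequality. The only cosmetic difference is that the paper folds the Bayesian term into the same Jensen step by writing $\Tr(D\Sigma)=\mathbb E_{\bar\tau}[\bar\mu^\top D\bar\mu]$ and using convexity of that quadratic form under $D\succeq0$, whereas you dispatch it separately via $\Tr(D\Sigma)\geq0$; both are valid.
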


In general, it remains unclear how to determine whether Alice would profit at all from sending a message compared to not communicating any information. Nonetheless, there are cases for which we can certify Alice wants to communicate with Bob. Having defined
\begin{equation}
\label{eq:defs}
    \begin{aligned}
    	\bar\lambda &= \overline\lambda(C^\top Q_{22} C)\\
    	E &= 4(Q_{12}+Q_{22}) CC^\top(Q_{21}+Q_{22})\\
    	f &= 4(l_1^\top Q_{12} + l_2^\top Q_{22})CC^\top (Q_{21}l_1+Q_{22}l_2),
    \end{aligned}
\end{equation}
we prove the following.

\begin{thm}
\label{thm:inf}
 	When $C^\top Q_{22} C$ is not a scaling of the identity and
	\begin{equation}
		D \prec  -\frac{f + \Tr E}{4(\bar\lambda - \bar\lambda_2)}I_n,
	\end{equation}
	where $\bar\lambda_2$ denotes the second largest eigenvalue of $C^\top Q_{22} C$,
	$\bar\tau = \delta_{\bar\nu}$ is \emph{not} a solution of \eqref{prog:lin_almost_Bayes}, even restricting to projective policies.
\end{thm}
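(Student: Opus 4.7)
The plan is to exhibit a rank-one projective policy $y=Px$ with $P=vv^\top$, $v$ a unit vector, that strictly beats no information. Since $\nu$ is isotropic, \Cref{thm:isotrop} gives $\bar\mu=Px$ and $\Sigma=P$, and \Cref{lem:rewrite} reduces the value gap over the no-information policy to
\begin{equation}
\Delta(v) = v^\top D v + \mathbb E[V(Px)-V(0)], \qquad V(\bar\mu) \triangleq \max_{\eta\in C\mathcal B}~w(\eta,\bar\mu).
\end{equation}
The task is to exhibit a unit $v$ with $\Delta(v)<0$.

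The heart of the argument is a sharp upper bound on $V(\bar\mu)-V(0)$ via Lagrangian duality. Writing $M = C^\top Q_{22}C$ and $b(\bar\mu) = C^\top[(Q_{21}+Q_{22})\bar\mu - Q_{21}l_1 - Q_{22}l_2]$, the inner problem reads $V(\bar\mu)=\max_{\|z\|\leq 1}[z^\top M z + 2b(\bar\mu)^\top z]$, and weak duality gives $V(\bar\mu)\leq\lambda + b(\bar\mu)^\top(\lambda I - M)^{-1} b(\bar\mu)$ for all $\lambda>\bar\lambda$. Let $P_1$ be the orthogonal projection onto the top eigenspace of $M$, which is one-dimensional under the (effective) hypothesis $\bar\lambda>\bar\lambda_2$. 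The key choice $\lambda=\bar\lambda+\|P_1 b(\bar\mu)\|$ together with $(\lambda I - M)^{-1}\preceq\|P_1 b(\bar\mu)\|^{-1}P_1 + (\bar\lambda-\bar\lambda_2)^{-1}(I-P_1)$ then yields
\begin{equation}
V(\bar\mu) \leq \bar\lambda + 2\|P_1 b(\bar\mu)\| + \frac{\|(I-P_1)b(\bar\mu)\|^2}{\bar\lambda-\bar\lambda_2}.
\end{equation}
Evaluating the primal at a unit vector in the top eigenspace aligned with $P_1 b(0)$ supplies the matching lower bound $V(0)\geq \bar\lambda + 2\|P_1 b(0)\|$. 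Subtracting and using the reverse triangle inequality $\|P_1 b(\bar\mu)\| - \|P_1 b(0)\|\leq \|P_1 A\bar\mu\|$, with $A\triangleq C^\top(Q_{21}+Q_{22})$, delivers
\begin{equation}
V(\bar\mu)-V(0)\leq 2\|P_1 A\bar\mu\| + \frac{\|(I-P_1)b(\bar\mu)\|^2}{\bar\lambda-\bar\lambda_2}.
\end{equation}

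The final step is to choose $v$ so as to kill the linear term. The condition $P_1 A v = 0$ reduces to a single linear equation $v\perp A^\top e_1$ (with $e_1$ a unit top eigenvector of $M$), so a unit solution exists whenever $n\geq 2$. For such $v$, $P_1 A(vv^\top x)=0$ almost surely, and an elementary second-moment computation using $\mathbb E[xx^\top]=I_n$ bounds $\mathbb E[\|(I-P_1)b(Px)\|^2]\leq \|Av\|^2+\|C^\top(Q_{21}l_1+Q_{22}l_2)\|^2 = (v^\top E v + f)/4$. Consequently $\mathbb E[V(Px)-V(0)]\leq (v^\top E v + f)/(4(\bar\lambda-\bar\lambda_2))$, and coupling this with the strict inequality $v^\top D v < -(f+\Tr E)/(4(\bar\lambda-\bar\lambda_2))$ granted by the hypothesis and with the trivial $v^\top E v\leq \Tr E$ (from $E\succeq 0$) yields $\Delta(v)<0$, as desired. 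The delicate step is the Lagrangian bound itself: only the non-obvious choice $\lambda=\bar\lambda+\|P_1 b(\bar\mu)\|$ produces the mixed linear-and-quadratic shape needed so that the linear piece cancels against $V(0)$, leaving a residue of the right order; any coarser choice (e.g.\ a constant $\lambda$ bounded away from $\bar\lambda$) picks up an irreducible $\bar\lambda-\bar\lambda_2$ slack that would defeat the closing inequality.
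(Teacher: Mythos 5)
Your proof is correct and follows essentially the same route as the paper's: the same dual/S-procedure upper bound with the key choice $\lambda=\bar\lambda+\|P_1 b\|$, the same matching primal lower bound at no information, and the same rank-one projective policy chosen in the kernel of $P_1 A$ so that the linear terms cancel and only the $(\bar\lambda-\bar\lambda_2)^{-1}$ residue survives. The only cosmetic differences are that you invoke weak duality where the paper uses its exact S-procedure lemma, and you phrase the cancellation via the reverse triangle inequality where the paper observes directly that $P_1 b$ is constant along the policy.
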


Note that this condition remains unchanged as $C$ is scaled homothetically. In other words, in this case, Alice would never cease to send information to Bob as he becomes less and less Bayesian, even if she is restricted to projective policies.

\subsection{The framing programs}

Solving Alice's program (ABP) exactly is challenging as even parametrizing the feasible set is non-trivial. Fortunately, we can resort to well-controlled upper and lower bounds of the objective that only depend on the covariance of the estimate, whose feasible set has been exactly characterized in Theorem 1. Before presenting them, we recall all the assumptions made so far. We have assumed that the prior $\nu$ is isotropic once centered and reduced (i.e., so that $\bar\nu=0$ and $\Sigma_\nu=I_n$), that Bob's action is affine in his estimate $\tilde x = \bar\mu'$, that
\begin{equation}
	\bar\mu' \in \bar\mu + C\mathcal B,
\end{equation}
and that Alice's loss is quadratic in $(x,a)$ and non-negative, so that incorporating the coefficient of Bob's affine best-response in her cost, it takes the form \eqref{eq:alicecostnonneg}.

With all this in place, we can now focus on bounding Alice's cost. In the first theorem, we derive a general lower and upper bound.

\begin{thm}
\label{thm:boundany}
	For any $\bar\tau\in\bar{\mathcal T}_\nu$, namely for any policy,
	\begin{align}
		c+\Tr(D\Sigma)+\bar\lambda
		&\leq \mathbb E_{\bar\tau}[\hat v'(\bar\mu)] \\
		&\leq c+\Tr(D\Sigma)+\bar\lambda+\sqrt{f+\Tr(E\Sigma)} \\
		&\leq 2(c+\Tr(D\Sigma)+\bar\lambda)
	\end{align}
	with $\bar\lambda, E, f$ as in \eqref{eq:defs}, and $\Sigma = \Sigma_{\bar\tau}$ the covariance of the estimate under $\bar\tau$. In particular, the cost of projective policies solutions of \eqref{prog:Bayes_linquad} and \eqref{prog:pessimistic} (defined below) is at most twice the optimal cost.
\end{thm}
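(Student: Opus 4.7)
The plan is to analyze each link of the chain separately, using the representation from \Cref{lem:rewrite}, namely $\mathbb E_{\bar\tau}[\hat v'(\bar\mu)] = c + \Tr(D\Sigma) + \mathbb E_{\bar\tau}[\max_{\eta \in C\mathcal B} w(\eta, \bar\mu)]$. After parametrizing $\eta = Ct$ with $t \in \mathcal B$, the inner maximum becomes $\max_{t \in \mathcal B} p(\bar\mu)^\top t + t^\top B t$ with $B = C^\top Q_{22} C \succeq 0$ and $p(\bar\mu) = 2C^\top((Q_{21}+Q_{22})\bar\mu - Q_{21}l_1 - Q_{22}l_2)$. Choosing $t$ to be a top unit eigenvector of $B$ with sign aligned to $p^\top t$ yields the pointwise lower bound $\bar\lambda$ on this maximum, while $p^\top t + t^\top B t \leq \|p\|\|t\| + \bar\lambda\|t\|^2 \leq \|p\| + \bar\lambda$ supplies the matching pointwise upper bound. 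Integrating against $\bar\tau$, the first inequality of the theorem is immediate, and the second follows from Jensen's inequality $\mathbb E[\|p\|] \leq \sqrt{\mathbb E[\|p\|^2]}$ combined with a direct second-moment computation---using $\mathbb E[\bar\mu] = 0$ and $\mathbb E[\bar\mu\bar\mu^\top]=\Sigma$---that delivers $\mathbb E[\|p\|^2] = f + \Tr(E\Sigma)$.

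For the third inequality, the plan is to reduce $\sqrt{f + \Tr(E\Sigma)} \leq c + \Tr(D\Sigma) + \bar\lambda$ to the pointwise claim $\|p(\bar\mu)\|^2 \leq 4\bar\lambda\, v(a(\bar\mu), \mu)$. Integrating this claim gives $f + \Tr(E\Sigma) \leq 4\bar\lambda(c + \Tr(D\Sigma))$, and then AM-GM---applicable since $c + \Tr(D\Sigma) = \mathbb E_{\bar\tau}[v(a(\bar\mu),\mu)] \geq 0$ by the non-negativity of $v$ from \Cref{lem:nonneg}, and $\bar\lambda \geq 0$---delivers $(c + \Tr(D\Sigma) + \bar\lambda)^2 \geq 4\bar\lambda(c + \Tr(D\Sigma)) \geq f + \Tr(E\Sigma)$, and taking the square root closes the chain.

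The main obstacle, and the real content of the theorem, is this pointwise bound $\|p(\bar\mu)\|^2 \leq 4\bar\lambda\, v(a(\bar\mu), \mu)$. The cleanest route is to write $q(\bar\mu) = J\zeta$ with $J = [Q_{21}\ Q_{22}]$ and $\zeta$ the stacked vector with blocks $\bar\mu - l_1$ and $\bar\mu - l_2$; then $\|p\|^2 = 4\zeta^\top J^\top CC^\top J \zeta$, while $v(a(\bar\mu),\mu) \geq \zeta^\top Q \zeta$ because the discarded $r$ and $\Tr(Q_{11}\Sigma_\mu)$ are non-negative. It therefore suffices to prove the PSD inequality $J^\top CC^\top J \preceq \bar\lambda\, Q$. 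Factoring $Q = LL^\top$ with $L$ stacked from blocks $L_1$ and $L_2$, one has $Q_{ij} = L_i L_j^\top$ and $J = L_2 L^\top$, so $J^\top CC^\top J = L(L_2^\top CC^\top L_2)L^\top$ while $\bar\lambda Q = L(\bar\lambda I)L^\top$. The gap $L(\bar\lambda I - L_2^\top CC^\top L_2)L^\top$ is PSD because $\bar\lambda(L_2^\top CC^\top L_2) = \bar\lambda(CC^\top L_2 L_2^\top) = \bar\lambda(CC^\top Q_{22}) = \bar\lambda$, using that $AB$ and $BA$ share non-zero eigenvalues.
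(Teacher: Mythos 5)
Your proof is correct, and while it follows the same skeleton as the paper's (pointwise bounds on the inner maximum, Jensen's inequality for the middle link, and a comparison of $\sqrt{f+\Tr(E\Sigma)}$ against $c+\Tr(D\Sigma)+\bar\lambda$ for the last link), the two hard steps are realized by genuinely different and more elementary means. For the first two inequalities the paper passes through the S-procedure (\Cref{lem:sproc}) and \Cref{lem:prettybound1} specialized to $\beta=0$; you instead bound $\max_{t\in\mathcal B} p^\top t + t^\top B t$ directly by an aligned top eigenvector from below and by Cauchy--Schwarz from above, which gives exactly the same pointwise sandwich $\bar\lambda \leq \cdot \leq \bar\lambda + \|p\|$ without the semidefinite machinery (that machinery is only really needed for the sharper $\beta>0$ bound of \Cref{thm:boundproj}). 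For the factor-of-two inequality the paper invokes \Cref{lem:posbound} at $\beta=0$, whose proof factors $Q$ through $K=Q_{22}^{\nicefrac12}$ and optimizes a ratio over a scalar $\zeta$; you instead establish the operator inequality $J^\top CC^\top J \preceq \bar\lambda\, Q$ via the factorization $Q=LL^\top$, $J=L_2L^\top$, and the fact that $L_2^\top CC^\top L_2$ and $C^\top Q_{22}C$ share their largest eigenvalue (they are $A^\top A$ and $AA^\top$ for $A=C^\top L_2$), then integrate the resulting pointwise bound $\|p(\bar\mu)\|^2 \leq 4\bar\lambda\, v(a(\bar\mu),\mu)$ and close with AM--GM. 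This is the same structural fact the paper uses (its bound $\sqrt{f+\Tr(E\Sigma)}\leq 2st$), but your packaging as a single PSD inequality plus $(a+b)^2\geq 4ab$ is cleaner and avoids the ratio optimization, at the cost of not generalizing to the $\bar\gamma<2$ constant of \Cref{thm:boundproj}. The only items you leave implicit are minor: the final ``in particular'' clause, which follows routinely since \eqref{prog:genoptimistic} lower-bounds every policy's cost and \eqref{prog:pessimistic} upper-bounds it by at most twice that quantity, and the identification $\mathbb E_{\bar\tau}[\zeta^\top Q\zeta]\leq c+\Tr(D\Sigma)$, which your non-negativity observations already justify.
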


Fortunately, the lower bound is strong enough to always match the upper bound up to a fixed ratio of $2$. This is due to the fact that, even though the penalty term may not be well approximated alone, it remains well controlled considering the contribution of the Bayesian term. Turning to the more congenial class of projective policies, we find an even stronger lower bound.

\begin{thm} 
\label{thm:boundproj}
	For any projective policy (and corresponding orthogonal projection covariance matrix $\Sigma$) and $\beta\in[0,1]$,
	\begin{equation}
	\label{eq:sandwich}
	\begin{aligned}
		&c+\Tr(D\Sigma)+(1-\beta^2)\bar\lambda+\beta\kappa\sqrt{f+\Tr(E\Sigma)}\\
		&\quad\leq \mathbb E_{\bar\tau}[\hat v'(\bar\mu)] \\
		&\quad\leq c+\Tr(D\Sigma)+\bar\lambda+\sqrt{f+\Tr(E\Sigma)} \\
		&\quad\leq \bar\gamma(c+\Tr(D\Sigma)+(1-\bar\beta^2)\bar\lambda+\bar\beta\kappa\sqrt{f+\Tr(E\Sigma)}),
	\end{aligned}
	\end{equation}
	where explicitly
	\begin{equation}
		\kappa = \frac{\mathbb E[|x_1|]}{\sqrt{1+\mathbb E[|x_1|]^2}},
		~\bar\beta = \frac{\kappa}{1+\kappa^2},
		~\bar\gamma = 1+\frac1{1+\kappa^2}.
	\end{equation}
\end{thm}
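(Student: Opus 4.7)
The plan has three main pieces; throughout I abbreviate $U := c+\Tr(D\Sigma)$, $L_1 := \bar\lambda$, $L_2 := \sqrt{f+\Tr(E\Sigma)}$ so the sandwich reads $U + (1-\beta^2)L_1 + \beta\kappa L_2 \leq \mathbb E_{\bar\tau}[\hat v'] \leq U + L_1 + L_2 \leq \bar\gamma(U + (1-\bar\beta^2)L_1 + \bar\beta\kappa L_2)$.

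The first piece is a pointwise lower bound on the inner maximum $M(\bar\mu) := \max_{\zeta\in\mathcal B}[a(\bar\mu)^\top\zeta + \zeta^\top K\zeta]$, where by \Cref{lem:rewrite} and the substitution $\eta = C\zeta$ I identify $K := C^\top Q_{22}C$ and $a(\bar\mu) := 2C^\top((Q_{21}+Q_{22})\bar\mu - Q_{21}l_1-Q_{22}l_2)$. Let $v$ be a unit top-eigenvector of $K$ (so $Kv=\bar\lambda v$) and for each $\bar\mu$ pick the sign $s\in\{\pm1\}$ with $sv^\top a(\bar\mu)\geq 0$; I test the feasible candidate $\zeta = \zeta_0/\|\zeta_0\|$ with $\zeta_0 := \beta\hat a(\bar\mu) + \sqrt{1-\beta^2}\,sv$. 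With $t := |v^\top\hat a(\bar\mu)|\in[0,1]$, elementary polynomial identities---each reducing to factoring a nonnegative quantity of the form $(1-\beta^2)(\ldots)$---give $a(\bar\mu)^\top\zeta\geq\beta\|a(\bar\mu)\|$ and $\zeta^\top K\zeta\geq(1-\beta^2)\bar\lambda$. Hence $M(\bar\mu)\geq\beta\|a(\bar\mu)\|+(1-\beta^2)\bar\lambda$ pointwise, and $\mathbb E_{\bar\tau}[M]\geq (1-\beta^2)L_1 + \beta\,\mathbb E_{\bar\tau}[\|a\|]$ after averaging.

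The second piece is a reverse-Jensen step promoting $\mathbb E[\|a\|]$ to $\kappa L_2$. For the projective policy $y = Px$, $\bar\mu=Px$, so $a(\bar\mu) = APx - h$ with $A := 2C^\top(Q_{21}+Q_{22})$ and $h := 2C^\top(Q_{21}l_1+Q_{22}l_2)$, and $\mathbb E\|a\|^2 = \Tr(APA^\top)+\|h\|^2 = L_2^2$. The scalar reverse-Jensen $\mathbb E|\alpha X-\gamma|\geq\kappa\sqrt{\alpha^2+\gamma^2}$, valid for symmetric $X$ with $\mathbb EX^2=1$, I would prove by combining the identity $|u-w|+|u+w| = 2\max(|u|,|w|)$ (giving $\mathbb E|\alpha X-\gamma|=\mathbb E\max(|\alpha X|,|\gamma|)\geq\max(|\alpha|m,|\gamma|)$ with $m:=\mathbb E|X|$) and the pointwise inequality $(1+m^2)\max(p,q)^2\geq m^2 p^2+q^2$. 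Applying it to $u^\top a(\bar\mu) = (PA^\top u)^\top x - u^\top h$ for a well-chosen direction $u$, and exploiting the isotropy of $\nu$ to write $(PA^\top u)^\top x \stackrel{d}{=} \|PA^\top u\|\,X$, then transfers to $\mathbb E\|a(\bar\mu)\|\geq\kappa L_2$. Combined with the first piece this yields the leftmost inequality.

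The third piece dispatches the remaining two inequalities. The middle one is exactly the upper bound already established in \Cref{thm:boundany}. After substituting $\bar\beta=\kappa/(1+\kappa^2)$ and $\bar\gamma = 1+1/(1+\kappa^2)$, the right inequality simplifies to $(1+\kappa^2)^2 U + L_1\geq (1+\kappa^2)L_2$, which I would derive from $U\geq 0$ (the Bayesian cost is nonnegative since $v\geq 0$), the bound $L_2\leq U+L_1$ inherent in the factor-$2$ tail of \Cref{thm:boundany}, and an AM--GM rearrangement exploiting the structural relation between $D$, $E$ and $\bar\lambda$ induced by $Q\succeq 0$. The main obstacle is the reverse-Jensen step: lifting the one-dimensional scalar bound to the \emph{trace} form $L_2 = \sqrt{\Tr(E\Sigma)+f}$, rather than a weaker largest-eigenvalue version, is exactly where the projection structure of $\Sigma$ and the isotropy of $\nu$ are essential, and where the specific value of $\kappa$ appears.
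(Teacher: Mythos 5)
Your first piece is fine and is a genuinely different route to the lower bound of \Cref{lem:prettybound1}: you exhibit an explicit feasible test direction $\zeta_0=\beta\hat a+\sqrt{1-\beta^2}\,sv$ in the primal maximization, whereas the paper passes to the dual one-dimensional program via the S-procedure (\Cref{lem:sproc}) and computes the worst-case ratio over all $(Q,v)$. Your computation checks out ($\|\zeta_0\|^2=1+2\beta\sqrt{1-\beta^2}\,t$ and both the linear and quadratic terms survive the normalization), and the middle inequality of the sandwich is indeed inherited from \Cref{thm:boundany}. The problems are in pieces two and three.

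The second piece has a genuine gap, which you yourself flag but do not close. Testing a single direction $u$ gives $\mathbb E\|a\|\ge\mathbb E|u^\top a|\ge\kappa\sqrt{\mathbb E[(u^\top a)^2]}=\kappa\sqrt{u^\top\mathbb E[aa^\top]u}$, and even optimizing over $u$ this yields only $\kappa\sqrt{\overline\lambda(\mathbb E[aa^\top])}$, not the required $\kappa\sqrt{\Tr\mathbb E[aa^\top]}=\kappa L_2$; when $\mathbb E[aa^\top]$ has rank larger than one the largest-eigenvalue version is strictly weaker and the leftmost inequality of \eqref{eq:sandwich} does not follow. The paper's \Cref{lem:prettybound2} closes exactly this hole by working with the full norm: for the linear part it writes $\mathbb E\|Lx\|/\sqrt{\Tr(L^\top L)}=\mathbb E[\sqrt{x^\top S x}]$ with $S=L^\top L/\Tr(L^\top L)$, uses concavity of $S\mapsto\sqrt{x^\top Sx}$ on the trace-one PSD simplex together with isotropy of $\nu$ to lower-bound the infimum over $S$ by the value at $S=e_1e_1^\top$, namely $\mathbb E[|x_1|]$; the offset is then absorbed by the symmetrization $\mathbb E\|v_0+Lx\|\ge\max(\|v_0\|,\mathbb E\|Lx\|)$ and the same elementary $\max$-versus-quadratic-mean inequality you use in the scalar case. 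Your scalar reverse-Jensen is correct but is only the last, easy step of that argument.

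The third piece also does not go through as stated. Your algebraic reduction of the last inequality to $(1+\kappa^2)^2U+L_1\ge(1+\kappa^2)L_2$ is correct, but the ingredients you list ($U\ge0$ and $L_2\le U+L_1$) do not imply it: $U=0$, $L_1=L_2=1$ satisfies both and violates the target. What is actually needed is the Cauchy--Schwarz-type relation $L_2\le 2\sqrt{\bar\lambda\,U}$, i.e.\ $f+\Tr(E\Sigma)\le4\bar\lambda\,(c+\Tr(D\Sigma))$, after which your AM--GM step does finish; this relation is precisely what \Cref{lem:posbound} extracts from $Q\succeq0$ via the factorization $D=L^\top L-J^\top J$, $E=4L^\top\Gamma^\top\Gamma L$, $f=4\|\Gamma\lambda_0\|^2$ (with the reduction $Q_{11}=Q_{12}Q_{22}^{-1}Q_{21}$, $r=0$), and it is not ``inherent in the factor-$2$ tail of \Cref{thm:boundany}.'' You gesture at ``the structural relation between $D$, $E$ and $\bar\lambda$ induced by $Q\succeq0$,'' which is the right direction, but the proof of that relation is the substance of the step and is missing.
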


The ratio $\bar\gamma$ depends on the prior distribution, and lies between $\nicefrac53$ and $2$. For Gaussian priors, $\bar\gamma$ is independent of the dimension and approximately equals $1.72$. A more precise statement, \Cref{prop:gamma}, is formulated in \Cref{sec:gamma}.

\medskip
For future reference, we now gather all four programs of interest in one list:
\begin{enumerate}
	\item the \emph{Bayesian Program} is
	\begin{equation}
		\min_{0 \preceq \Sigma \preceq I_n} 
		~\Tr(D\Sigma) + c; \tag{\ref{prog:Bayes_linquad}}
	\end{equation}
	\item the \emph{Pessimistic Program} is
	\begin{equation}
        \label{prog:pessimistic}
        \tag{PP}
        	\min_{0\preceq\Sigma\preceq I_n}~ \Tr(D\Sigma) + c + \bar\lambda + \sqrt{f+\Tr(E\Sigma)};
        \end{equation}
        \item the \emph{Universal Optimistic Program} is
    	\begin{equation}
        \label{prog:genoptimistic}
        \tag{UOP}
        	\min_{0\preceq\Sigma\preceq I_n}~ \Tr(D\Sigma) + c + \bar\lambda;
        \end{equation}
	\item and the \emph{Projective Optimistic Program} is
	\begin{equation}
        \label{prog:optimistic}
        \tag{POP}
        \min_{0\preceq\Sigma\preceq I_n}~ \Tr(D\Sigma) + c+(1-\beta^2)\bar\lambda + \beta\kappa \sqrt{f+\Tr(E\Sigma)}.
        \end{equation}
\end{enumerate}

The Pessimistic Program and Universal Optimistic Program correspond to the upper-bound and lower-bound obtained in \Cref{thm:boundany}, respectively. \eqref{prog:genoptimistic} has the same solution as \eqref{prog:Bayes_linquad}. This implies that, as discussed in \Cref{thm:boundany}, both the Bayesian and the Pessimistic solution yield a cost at most twice the optimal one. In spite of the fact that both solutions offer the same worst-case guarantee, twice the Universal Optimistic Program (later referred to as (2UOP)) is in all generality a weaker upper bound of the True Program than the Pessimistic Program, which justifies to search for the optimal solution of \eqref{prog:pessimistic}. Finally, the Projective Optimistic Program, which is a lower bound on the cost of Alice when using projective policies, is derived from \Cref{thm:boundproj}. Although the theorem only speaks of projective policies, and hence of covariances that are extremal in $\mathcal S$, the objective of the minimization is concave, thus the constraint set can be extended to $\mathcal S$ entirely. This program is mostly identical to \eqref{prog:pessimistic}, save for the multiplicative constants that adorn the error terms. In this respect, being able to solve \eqref{prog:optimistic} amounts to being able to solve \eqref{prog:pessimistic}. For this reason, and since Alice is preparing for the worst, \eqref{prog:pessimistic} remains our main object of study.

In summary, \eqref{prog:pessimistic} is a universal lower bound on Alice's best performance, whereas we dispose of two optimistic programs, \eqref{prog:genoptimistic} and \eqref{prog:optimistic}, depending on whether we allow any policy---not just projective policies---to be implemented. The cost of a policy being pinned down up to a ratio of a half (for all priors), finding a solution to \eqref{prog:pessimistic} appears to be a good proxy for solving the true program \eqref{prog:lin_almost_Bayes}. A projective policy solves this program, and for those, we dispose of an improved bound. \eqref{prog:genoptimistic} seems to indicate that there could be better non-projective policies, however they remain inaccessible as it already proves arduous to even represent such general policies.

\section{Analysis of the Pessimistic Program}
\label{sec:solving}

This section sheds light on the Pessimistic Program \eqref{prog:pessimistic} and the structure of its solutions. It also presents a numerical method to solve it. We then verify that the structure of the numerical solutions agrees with theoretical predictions.

\subsection{Structural facts}

Much like for the Bayesian Program, there are a few things that can be said about the solutions of \eqref{prog:pessimistic}. First of all, the program is concave, so just like in the Bayesian case, there exists a solution that is an extreme point of $\mathcal S$, thus corresponding to a projective policy. 

In contrast with the Bayesian case, it may so happen that \eqref{prog:pessimistic} has multiple solutions. However, as the next proposition states, all solutions of minimal rank are orthogonal projection matrices just like in the Bayesian case. We again use rank as a proxy for the amount of information shared by Alice, since when $P$ is an orthogonal projection matrix, its rank denoted $\rk P$ corresponds to the number of active channels in the policy $y=Px$.

\begin{prop} \label{prop:orth}
	Solutions of minimal rank of \eqref{prog:pessimistic} are all orthogonal projection matrices.
\end{prop}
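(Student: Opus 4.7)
The plan is to exploit the concavity of the \eqref{prog:pessimistic} objective in $\Sigma$ together with a rank-reducing perturbation along the eigendirections of a putative minimizer. First I would verify that $\Phi(\Sigma) \triangleq \Tr(D\Sigma) + c + \bar\lambda + \sqrt{f + \Tr(E\Sigma)}$ is concave on $\mathcal S$. The affine term is trivial; for the square root, it suffices to observe that $E = 4(Q_{12}+Q_{22})CC^\top(Q_{12}+Q_{22})^\top \succeq 0$ and $f = 4\|C^\top(Q_{21}l_1+Q_{22}l_2)\|^2 \geq 0$, using the symmetry of $Q$ (so that $Q_{21} = Q_{12}^\top$ and $Q_{22}^\top = Q_{22}$). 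Consequently $f + \Tr(E\Sigma) \geq 0$ on $\mathcal S$, and the composition of the concave increasing map $\sqrt{\cdot}$ with an affine map of $\Sigma$ is concave.

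Next, let $\Sigma^*$ be a solution of \eqref{prog:pessimistic} of minimal rank $r$, and diagonalize $\Sigma^* = \sum_{i=1}^r \lambda_i e_i e_i^\top$ in an orthonormal basis $(e_i)_{i=1}^n$, with $\lambda_i \in (0,1]$ for $i \leq r$. Assume for contradiction that $\Sigma^*$ is not an orthogonal projection, so that some $\lambda_j \in (0,1)$. I would then consider the one-parameter family
\begin{equation*}
\Sigma(t) = \Sigma^* + t\, e_j e_j^\top, \qquad t \in [-\lambda_j,\, 1-\lambda_j].
\end{equation*}
Because $e_j$ is an eigenvector of $\Sigma^*$ and the other eigenvalues are untouched, $\Sigma(t) \in \mathcal S$ throughout the interval, while $\Sigma(-\lambda_j)$ has rank $r-1$. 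Setting $g(t) = \Phi(\Sigma(t))$, which is concave as the restriction of a concave function to a line, the minimality of $\Sigma^* = \Sigma(0)$ gives $g(0) \leq g(-\lambda_j)$ and $g(0) \leq g(1-\lambda_j)$. Writing $0 = (1-\lambda_j)(-\lambda_j) + \lambda_j(1-\lambda_j)$ as a convex combination of the endpoints, concavity yields $g(0) \geq (1-\lambda_j)\, g(-\lambda_j) + \lambda_j\, g(1-\lambda_j)$, which together with the previous two inequalities forces $g(0) = g(-\lambda_j) = g(1-\lambda_j)$. In particular $\Sigma(-\lambda_j)$ is itself a minimizer of \eqref{prog:pessimistic} of rank $r-1$, contradicting the minimality of $r$. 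Hence every eigenvalue of $\Sigma^*$ lies in $\{0,1\}$, i.e., $\Sigma^*$ is an orthogonal projection matrix.

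The only delicate step is the concavity check, which hinges on the signs $E \succeq 0$ and $f \geq 0$; once concavity is in hand, the remainder is a standard extreme-point/rank-reduction argument for concave minimization on the matrix interval $[0, I_n]$, in the same spirit as \Cref{lem:trsols}. I do not anticipate a substantial obstacle beyond being careful about the symmetry of $Q$ when verifying those two inequalities.
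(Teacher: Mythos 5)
Your proof is correct, but it follows a genuinely different route from the paper's. You argue by concave perturbation along an eigendirection: if a minimal-rank solution $\Sigma^*$ had an eigenvalue $\lambda_j\in(0,1)$, the segment $t\mapsto\Sigma^*+t\,e_je_j^\top$ stays in $\mathcal S$, and writing $0$ as a convex combination of the endpoints $-\lambda_j$ and $1-\lambda_j$ forces the endpoint $\Sigma(-\lambda_j)$ (of rank $r-1$) to be a solution as well, contradicting minimality. The concavity check is sound since $E\succeq0$ and $f\geq0$ follow from the symmetry of $Q$, so $f+\Tr(E\Sigma)\geq0$ on $\mathcal S$ and the square root of an affine map is concave there. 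The paper instead first reduces to $D\prec0$ via $X\mapsto P^{<0}_DXP^{<0}_D$, disposes of the case where $0$ is a solution, and then characterizes solutions through the first-order condition $P^{<0}_{\nabla g(X)}\preceq X\preceq P^{\leq0}_{\nabla g(X)}$ (via \Cref{lem:trsols}), ruling out the upper projection by a strict-concavity argument in $\Tr(EZ)$. Your argument is shorter, avoids any differentiability issue at $\Tr(EX)=0$ (which the paper must sidestep by restricting to the domain where $\Tr(EX)>0$), and is really just the extreme-point principle for concave minimization localized to one eigenvalue; what it does not deliver is the explicit identification of the minimal-rank solution as $P^{<0}_{D+\lambda E}$ for some $\lambda>0$, which the paper's variational characterization provides and then reuses in the proof of \Cref{thm:monotonicity}. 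As a standalone proof of \Cref{prop:orth}, yours is complete.
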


Having decided to use the rank of an orthogonal projection matrix as a measure of information provided by Alice, it is natural to inspect how the minimal rank of a solution varies as the hypothesis grows weaker, i.e., as $\bar\Lambda$ grows larger with respect to the inclusion order. In all generality, there may be no monotonicity. Nevertheless, it turns out that the minimal rank of a solution decreases as the hypothesis grows weaker, provided it grows homothetically.

\begin{thm} \label{thm:monotonicity}
	Let $\Sigma_1,\Sigma_2$ be solutions of minimal rank of \eqref{prog:pessimistic} under ellipsoidal hypothesis of respective shape $\epsilon_1^2 CC^\top$ and $\epsilon_2^2 CC^\top$. Then $\epsilon_1\leq\epsilon_2$ implies $\rk \Sigma_1 \geq \rk \Sigma_2$.
\end{thm}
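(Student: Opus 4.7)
The plan is to reason by contradiction on the ranks, after first observing that the two programs share the same $D$ and $c$ and differ only through a homothetic scaling of the penalty. Under hypothesis of shape $\epsilon^2 CC^\top$, each of $\bar\lambda$, $E$, $f$ from \eqref{eq:defs} scales by $\epsilon^2$, so the pessimistic objective may be written
\begin{equation*}
F_\epsilon(\Sigma) = \Tr(D\Sigma) + c + \epsilon^2 \bar\lambda_0 + \epsilon\sqrt{f_0 + \Tr(E_0\Sigma)},
\end{equation*}
with $\bar\lambda_0, E_0, f_0$ the base quantities (those at $\epsilon=1$). I suppose $\epsilon_1 \leq \epsilon_2$ but, for contradiction, $r_1 := \rk\Sigma_1 < \rk\Sigma_2 =: r_2$.

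A preliminary calculation first delivers $\Tr(E_0 \Sigma_1) \geq \Tr(E_0 \Sigma_2)$: I add the two optimality inequalities $F_{\epsilon_1}(\Sigma_1) \leq F_{\epsilon_1}(\Sigma_2)$ and $F_{\epsilon_2}(\Sigma_2) \leq F_{\epsilon_2}(\Sigma_1)$; the $\Tr(D\Sigma_i)$ and $\epsilon_i^2\bar\lambda_0$ pieces cancel and what survives is $(\epsilon_1 - \epsilon_2)(s_1 - s_2) \leq 0$ with $s_i = \sqrt{f_0 + \Tr(E_0 \Sigma_i)}$, which under $\epsilon_1 \leq \epsilon_2$ gives $s_1 \geq s_2$.

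Next I invoke \Cref{prop:orth} to view $\Sigma_1, \Sigma_2$ as orthogonal projections onto subspaces $V_1, V_2$ with $\dim V_1 < \dim V_2$. A dimension count yields $\dim V_2 + \dim V_1^\perp > n$, so I may pick a unit vector $v \in V_2 \cap V_1^\perp$; then $\Sigma_1 + vv^\top$ and $\Sigma_2 - vv^\top$ are admissible orthogonal projections of ranks $r_1+1$ and $r_2-1$. Optimality of $\Sigma_1$ at $\epsilon_1$ gives $F_{\epsilon_1}(\Sigma_1 + vv^\top) \geq F_{\epsilon_1}(\Sigma_1)$, while minimality of the rank of $\Sigma_2$ forces $F_{\epsilon_2}(\Sigma_2 - vv^\top) > F_{\epsilon_2}(\Sigma_2)$ \emph{strictly}, for otherwise the subtracted matrix would be an equally optimal solution at $\epsilon_2$ of strictly smaller rank. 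Rationalizing the square-root differences rewrites these as
\begin{equation*}
-\epsilon_1 \frac{v^\top E_0 v}{p_1} \;\leq\; v^\top D v \;<\; -\epsilon_2 \frac{v^\top E_0 v}{p_2},
\end{equation*}
with $p_i = \sqrt{f_0 + \Tr(E_0 \Sigma_i)} + \sqrt{f_0 + \Tr(E_0 \Sigma_i) \pm v^\top E_0 v}$, the sign being $+$ for $i=1$ and $-$ for $i=2$.

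To close, I eliminate $v^\top D v$. If $v^\top E_0 v = 0$ I am left with $0 \leq v^\top D v < 0$, an immediate contradiction; otherwise the inequality becomes $\epsilon_2 p_1 < \epsilon_1 p_2$, and the preliminary ordering $\Tr(E_0 \Sigma_1) \geq \Tr(E_0 \Sigma_2)$ makes $p_1 \geq p_2 > 0$ term by term (using $E_0 \succeq 0$), so $\epsilon_2 < \epsilon_1$, contradicting the hypothesis. The main hurdle, to my mind, is the choice of perturbation direction: $v$ must keep both $\Sigma_1 + vv^\top$ and $\Sigma_2 - vv^\top$ inside $\mathcal S$ and must simultaneously activate the ``optimality from below'' condition at $\Sigma_1$ and the ``strict minimality-of-rank from above'' condition at $\Sigma_2$; the choice $v \in V_2 \cap V_1^\perp$---made possible precisely by the assumption $r_1 < r_2$---achieves both requirements at once.
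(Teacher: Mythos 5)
Your argument is correct, and it takes a genuinely different route from the paper's. The paper reduces to $D\prec0$, reuses the first-order characterization established inside the proof of \Cref{prop:orth} (a nonzero minimal-rank solution must equal $P^{<0}_{D+aE_0}$ with $a=\nicefrac{\epsilon}{2\sqrt{f_0+\Tr(E_0X)}}$), shows $a_1\leq a_2$ via cross-optimality estimates, and concludes from the fact that the negative eigenspace of $D+aE_0$ shrinks as $a$ grows. You instead run a rank-one exchange: after obtaining the ordering $s_1\geq s_2$ directly by summing the two cross-optimality inequalities (which is in fact a cleaner derivation than the paper's two-step version), you perturb along a unit vector $v\in V_2\cap V_1^\perp$, whose existence is exactly what the contradiction hypothesis $r_1<r_2$ guarantees, and extract incompatible bounds on $v^\top Dv$ from optimality at $\Sigma_1$ and strict rank-minimality at $\Sigma_2$. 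Your route only uses the \emph{statement} of \Cref{prop:orth} (minimal-rank solutions are orthogonal projections), not the finer multiplier characterization buried in its proof, and it dispenses with the reduction to $D\prec0$; the paper's route is shorter once that machinery is in place. Two small points to tidy up: (a) the degenerate case $\epsilon_1=\epsilon_2$ should be dismissed up front (the two programs coincide and minimal-rank solutions trivially have equal rank), since your preliminary inequality $(\epsilon_1-\epsilon_2)(s_1-s_2)\leq0$ only yields $s_1\geq s_2$ when $\epsilon_1<\epsilon_2$ strictly, and the final step $\epsilon_2<\epsilon_1$ needs that strictness to be a contradiction; (b) you should note explicitly that $E_0\succeq0$ and $f_0\geq0$ (which hold by the structure of \eqref{eq:defs}) so that $f_0+\Tr(E_0(\Sigma_2-vv^\top))\geq0$ keeps the square root well defined and so that $p_1\geq p_2>0$ when $v^\top E_0v>0$. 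Neither point is a real gap.
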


This theorem admits a direct corollary which, in essence, states that Alice is willing to share more information to a Bayesian agent, less information to an almost-Bayesian agent when she is optimistic, and the least information when she is pessimistic. 

\begin{coro}
\label{coro:monotonicity}
	The minimal rank of a solution of \eqref{prog:Bayes_linquad} is larger than or equal to that of \eqref{prog:optimistic}, which itself is larger than or equal to that of the \eqref{prog:pessimistic}.
\end{coro}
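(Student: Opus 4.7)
The plan is to realize all three programs \eqref{prog:Bayes_linquad}, \eqref{prog:optimistic}, and \eqref{prog:pessimistic} as instances of \eqref{prog:pessimistic} under suitably rescaled ellipsoidal hypotheses, so that \Cref{thm:monotonicity} applies verbatim. The key observation is that additive constants in the objective do not affect the set of minimizers, nor therefore the rank of a minimizer. Hence the minimizers of \eqref{prog:Bayes_linquad}, \eqref{prog:optimistic}, and \eqref{prog:pessimistic} coincide, respectively, with those of
\begin{equation}
\Phi_\alpha(\Sigma) \triangleq \Tr(D\Sigma) + \alpha\sqrt{f+\Tr(E\Sigma)}
\end{equation}
on $\mathcal S = \{0 \preceq \Sigma \preceq I_n\}$, for the three coefficients $\alpha = 0$, $\alpha = \beta\kappa$, and $\alpha = 1$, respectively.

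Next, I would rescale the ellipsoidal hypothesis, replacing $C$ by $\epsilon C$ for some $\epsilon\geq 0$. Inspection of \eqref{eq:defs} reveals that this substitution sends $\bar\lambda \mapsto \epsilon^2\bar\lambda$, $E\mapsto\epsilon^2 E$, and $f\mapsto \epsilon^2 f$, so that $\sqrt{f+\Tr(E\Sigma)}\mapsto \epsilon\sqrt{f+\Tr(E\Sigma)}$. Consequently, \eqref{prog:pessimistic} under the rescaled hypothesis of shape $\epsilon^2 CC^\top$ has the same minimizers as $\Phi_\epsilon$ on $\mathcal S$. In particular, \eqref{prog:Bayes_linquad}, \eqref{prog:optimistic}, and \eqref{prog:pessimistic} share their sets of minimizers --- and in particular their minimal-rank minimizers --- with \eqref{prog:pessimistic} under ellipsoidal hypotheses of shapes $0$, $(\beta\kappa)^2 CC^\top$, and $CC^\top$, respectively.

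To conclude, I would note that $\kappa = \mathbb{E}[|x_1|]/\sqrt{1+\mathbb{E}[|x_1|]^2} < 1$ and $\beta \in [0,1]$, so $0 \leq \beta\kappa \leq 1$. Two applications of \Cref{thm:monotonicity} --- once with $(\epsilon_1,\epsilon_2) = (0,\beta\kappa)$ and once with $(\epsilon_1,\epsilon_2) = (\beta\kappa,1)$ --- then deliver the chain $\rk\Sigma_{\mathrm{BP}} \geq \rk\Sigma_{\mathrm{POP}} \geq \rk\Sigma_{\mathrm{PP}}$ claimed in the corollary. There is no real obstacle here: the heavy lifting has already been done in \Cref{thm:monotonicity}, and what remains is merely the invariance of $\argmin$ under additive constants, which makes the identification of the three programs with members of the one-parameter family $\{\Phi_\epsilon\}_{\epsilon\geq0}$ immediate.
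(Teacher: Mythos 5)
Your proposal is correct and takes essentially the same route as the paper: identify \eqref{prog:Bayes_linquad} and \eqref{prog:optimistic} with instances of \eqref{prog:pessimistic} under homothetically rescaled ellipsoidal hypotheses (additive constants being irrelevant to the argmin) and apply \Cref{thm:monotonicity} twice. Your scaling factor $(\beta\kappa)^2 CC^\top$ for \eqref{prog:optimistic} is in fact the right one --- the paper's own proof writes $\bar\gamma^2 CC^\top$, which appears to be a slip, since $\bar\gamma>1$ would orient the rank inequality the wrong way.
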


From this corollary, we recover the structural result of \Cref{thm:noinf} about the true program \eqref{prog:lin_almost_Bayes} in all our programs.

\begin{coro}
\label{cor:noinf_exp}
	Whenever $D\succeq0$, the minimal solution of \eqref{prog:pessimistic}, \eqref{prog:genoptimistic} and \eqref{prog:optimistic} is $\Sigma=0$, corresponding to the no-information policy.
\end{coro}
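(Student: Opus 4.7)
The plan is to combine \Cref{coro:monotonicity} with the explicit description of solutions of the Bayesian Program provided by \Cref{lem:trsols}. First, I would observe that when $D\succeq0$, \Cref{lem:trsols} gives $P_D^{<0}=0$, so that $\Sigma=0$ is the (unique) minimum-rank solution of \eqref{prog:Bayes_linquad}; in particular, the minimum rank across solutions of the Bayesian Program equals zero.

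The case of \eqref{prog:genoptimistic} then comes essentially for free: its objective $\Tr(D\Sigma)+c+\bar\lambda$ coincides with that of \eqref{prog:Bayes_linquad} up to the additive constant $\bar\lambda$, so the two programs share the same argmin set. Hence $\Sigma=0$ is likewise the minimum-rank solution of \eqref{prog:genoptimistic}.

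Finally, for \eqref{prog:pessimistic} and \eqref{prog:optimistic} I would invoke the rank chain supplied by \Cref{coro:monotonicity},
\begin{equation*}
  \rk \Sigma^\star_{\text{BP}} \;\geq\; \rk \Sigma^\star_{\text{POP}} \;\geq\; \rk \Sigma^\star_{\text{PP}},
\end{equation*}
where each $\Sigma^\star$ denotes a minimum-rank solution of the corresponding program. The preceding step pins the left-hand side to zero, forcing both ranks on the right to vanish as well. Since the zero matrix is the only rank-zero positive semidefinite matrix, $\Sigma=0$ is indeed the minimum-rank solution of both \eqref{prog:pessimistic} and \eqref{prog:optimistic}, which is exactly the no-information policy.

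There is no genuine obstacle at this stage: the non-trivial content (comparison of minimum ranks across the three programs) has been absorbed into \Cref{thm:monotonicity} and \Cref{coro:monotonicity}, and the present corollary amounts to plugging the baseline value $\rk\Sigma^\star_{\text{BP}}=0$ into that chain.
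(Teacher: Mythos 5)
Your proposal is correct and follows essentially the same route as the paper: identify $P_D^{<0}=0$ via \Cref{lem:trsols}, transfer to \eqref{prog:genoptimistic} by noting it differs from \eqref{prog:Bayes_linquad} only by a constant, and then use the rank chain of \Cref{coro:monotonicity} to force the minimal ranks for \eqref{prog:optimistic} and \eqref{prog:pessimistic} down to zero.
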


\subsection{Numerical solution}
\label{sec:numerical}

As it stands, \eqref{prog:pessimistic} is not in a convenient form. It is concave, and a square-root term sits cumbersomely in the midst of the objective. We cannot hope to directly solve the program with readily available methods, however we can introduce, for $t\geq0$,
\begin{equation}
	h(t) \triangleq \min_{\substack{0\preceq X\preceq I_n\\\text{s.t. }\Tr(EX)\leq t}} ~\Tr(DX).
\end{equation}
Evaluating $h$ at a given $t$ is relatively easy, as it is a semi-definite program (SDP). If we have a fine enough understanding and estimation of $h$ available, we may resort to the following proposition.

\begin{prop}
\label{prop:line}
	$Y\in\mathcal S$ solves \eqref{prog:pessimistic} if and only if $Y$ solves the program defining $h(\Tr(EY))$, and $\Tr(EY)$ solves the program
	\begin{equation}
	\label{prog:line}
		c + \bar\lambda + \min_{t\geq0} ~h(t) + \sqrt{f+t}.
	\end{equation}
	Moreover, both \eqref{prog:pessimistic} and \eqref{prog:line} have the same value.
\end{prop}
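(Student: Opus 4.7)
The plan is to decouple the matrix variable $\Sigma$ and the scalar quantity entering the square root by introducing the auxiliary variable $t=\Tr(E\Sigma)$. Write $F(\Sigma) = \Tr(D\Sigma) + c + \bar\lambda + \sqrt{f + \Tr(E\Sigma)}$ for the objective of \eqref{prog:pessimistic} and $G(t) = c + \bar\lambda + h(t) + \sqrt{f+t}$ for the objective of \eqref{prog:line}, and let $v_{\text{PP}}$, $v_\star$ be their respective optimal values. Everything will follow by showing $v_{\text{PP}} = v_\star$ and then tracing through the equality case. I would first verify attainment: $h(t)$ is a linear SDP on the compact set $\{0\preceq X\preceq I_n, \Tr(EX)\leq t\}$ (nonempty since $X=0$ is always admissible) so its minimum is attained; and $G(t)$ is continuous with $\sqrt{f+t}\to\infty$ while $h(t)$ is bounded below, so the one-dimensional minimum in \eqref{prog:line} is attained as well.

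For $v_\star \leq v_{\text{PP}}$, pick any feasible $\Sigma$ and set $t = \Tr(E\Sigma)$: since $\Sigma$ is admissible in the SDP defining $h(t)$, we have $h(t) \leq \Tr(D\Sigma)$, so $G(t) \leq F(\Sigma)$. Taking the infimum over $\Sigma$ yields the inequality. For the reverse inequality $v_{\text{PP}} \leq v_\star$, fix any $t\geq 0$ and let $X$ be optimal for $h(t)$; since $\Tr(EX)\leq t$, the square root satisfies $\sqrt{f+\Tr(EX)} \leq \sqrt{f+t}$, so $F(X) \leq G(t)$, and taking the infimum over $t$ closes the loop.

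The characterization then falls out of the equality case in these two bounds. If $Y$ solves \eqref{prog:pessimistic} and $t^\star = \Tr(EY)$, I argue $Y$ must attain $h(t^\star)$: otherwise some feasible $X$ for $h(t^\star)$ has $\Tr(DX) < \Tr(DY)$ together with $\Tr(EX) \leq t^\star$, giving $F(X) \leq \Tr(DX) + c + \bar\lambda + \sqrt{f+t^\star} < F(Y)$, contradicting optimality of $Y$. Hence $F(Y) = G(t^\star)$, and combined with $v_{\text{PP}} = v_\star \leq G(t^\star) \leq F(Y) = v_{\text{PP}}$ we conclude $t^\star$ minimizes \eqref{prog:line}. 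Conversely, if $Y$ attains $h(t^\star)$ with $t^\star = \Tr(EY)$ and $t^\star$ minimizes \eqref{prog:line}, then $F(Y) = h(t^\star) + c + \bar\lambda + \sqrt{f+t^\star} = G(t^\star) = v_\star = v_{\text{PP}}$, so $Y$ solves \eqref{prog:pessimistic}. There is no real obstacle here; the only mild subtlety is handling the asymmetry caused by defining $h$ with the inequality $\Tr(EX)\leq t$ rather than equality, which is precisely why the monotonicity $\sqrt{f+\Tr(EX)}\leq \sqrt{f+t}$ is needed in the second bound and why the forward characterization must rule out strict improvement rather than simply appeal to feasibility.
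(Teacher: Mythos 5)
Your proof is correct and follows essentially the same route as the paper's: both arguments amount to introducing the slack variable $t\geq\Tr(E\Sigma)$, using monotonicity of $t\mapsto\sqrt{f+t}$ to establish that \eqref{prog:pessimistic} and \eqref{prog:line} have equal value, and then tracing the equality case to obtain the characterization of solutions. Your version merely spells out the two inequalities and the attainment issues more explicitly, and phrases the forward direction as a contradiction rather than a chain of equalities, but the content is the same.
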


One can thus solve \eqref{prog:line} by a simple one-dimensional grid search, then retrieve an optimal argument of \eqref{prog:pessimistic}. In actuality however, one only obtains a suboptimal solution through grid search, so the objective of \eqref{prog:line} needs to be studied in order to provide adequate guarantees as to the suboptimality. Fortunately, $h$ enjoys many desirable properties that can be used to establish those guarantees, and we have the following proposition.

\begin{prop}
\label{prop:suboptim}
	Call $\bar t = \Tr(EP_D^{<0})$. Consider $(u_n)_{0\leq n\leq N}$ an increasing sequence with $u_0 = 0$ and $u_N\geq\bar t$. Call
	\begin{equation}
		\rho = \max_{0\leq n< N} ~\sqrt{f+u_{n+1}}-\sqrt{f+u_n},
	\end{equation}
	then 
	\begin{align}
		\min_{t\geq0} ~h(t) + \sqrt{f+t}
		&\leq \min_{0\leq n\leq N} ~h(u_n) + \sqrt{f+u_n} \\
		&\leq \min_{t\geq0} ~h(t) + \sqrt{f+t} + \rho.
	\end{align}
\end{prop}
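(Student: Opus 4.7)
The left-hand inequality is immediate: the discrete minimum is taken over a finite subset $\{u_0,\dots,u_N\}$ of the feasible set $[0,\infty)$ of the continuous minimization, so it can only be at least as large.

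For the right-hand inequality, the plan rests on two monotonicity facts about $h$ that I would verify first. First, $h$ is non-increasing, since enlarging $t$ enlarges the feasible set $\{X : 0\preceq X\preceq I_n,\ \Tr(EX)\leq t\}$. Second, $h$ is constant on $[\bar t,\infty)$: by \Cref{lem:trsols}, $P_D^{<0}$ attains $\min_{0\preceq X\preceq I_n}\Tr(DX)$, and at $t=\bar t=\Tr(EP_D^{<0})$ this matrix is already feasible in the constrained program, so no larger $t$ can further decrease $h$. Combined with continuity of $\sqrt{f+\cdot}$ and lower semicontinuity of $h$ (a standard property of the optimal value of an SDP continuously parametrized by $t$ over a compact set), these facts ensure that the continuous infimum is attained and that I may select a minimizer $t^{*}\in[0,\bar t]$.

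Since $u_0=0\leq t^{*}\leq \bar t\leq u_N$, there exists an index $n\in\{0,\dots,N-1\}$ with $u_n\leq t^{*}\leq u_{n+1}$. I would then compare the grid value at $u_{n+1}$ to the continuous optimum piece by piece: by non-increasingness of $h$,
\[
    h(u_{n+1}) \leq h(t^{*}),
\]
and by monotonicity of $\sqrt{f+\cdot}$ combined with the very definition of $\rho$,
\[
    \sqrt{f+u_{n+1}}-\sqrt{f+t^{*}}
    \leq \sqrt{f+u_{n+1}}-\sqrt{f+u_n}
    \leq \rho.
\]
Adding these two displays yields $h(u_{n+1})+\sqrt{f+u_{n+1}} \leq h(t^{*})+\sqrt{f+t^{*}}+\rho$, and taking the minimum over $n$ on the left delivers the claimed upper bound.

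The only slightly delicate step is the reduction to a minimizer in $[0,\bar t]$, i.e., checking that the continuous infimum is attained; everything else is a direct two-piece comparison using monotonicity of $h$ and a telescoping bound on the increments of $\sqrt{f+\cdot}$. I expect this attainment argument, which hinges on lower semicontinuity of $h$, to be the only nonroutine ingredient of the proof.
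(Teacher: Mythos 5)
Your proof is correct and follows essentially the same route as the paper: the paper isolates the needed facts (that $h$ is non-increasing, constant on $[\bar t,\infty)$, and continuous) in \Cref{lem:hproperties} and then compares each subinterval $[u_n,u_{n+1}]$ to its right endpoint, which is the same two-piece monotonicity comparison you perform after localizing the minimizer $t^*$ to one subinterval. The attainment issue you flag is handled in the paper by the continuity of $h$ established via the maximum theorem in \Cref{lem:hproperties}, so there is no gap.
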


As a result, a simple strategy for finding a $\rho$-suboptimal solution consists in first cutting $[\sqrt f, \sqrt{f+\bar t}]$ into smaller intervals of length $\rho$ of the form
\begin{equation}
	[\sqrt{f+u_n},\sqrt{f+u_{n+1}}].
\end{equation}
Then $h$ is evaluated at each $u_n$, and the point yielding the lowest value $h(u_n) + \sqrt{f+u_n}$ is selected. Over all, this takes
	\begin{equation}
		\left\lceil \frac{\sqrt{f+\bar t} - \sqrt f}{\rho} \right\rceil
	\end{equation}
calls to the SDP oracle.

\subsection{Consistency of structural and numerical results}

\Cref{prop:line,prop:suboptim} provide a numerical procedure to find \emph{a} suboptimum to \eqref{prog:pessimistic}, without guaranteeing it is an orthogonal projection matrix. However, knowing \Cref{prop:orth}, it would be natural to look for solutions of \eqref{prog:pessimistic} that are orthogonal projection matrices. On top of that, all the policies we have considered thus far are projective, whose covariances must be orthogonal projection matrices.

To remedy this apparent discrepancy, consider $X^*$ a suboptimal solution to \eqref{prog:pessimistic}. By diagonalizing it, it is relatively easy to write it as a convex combination of at most $n+1$ orthogonal projection matrices:
\begin{equation}
	X^* = \sum_{i=0}^n \lambda_i X_i.
\end{equation}
Since the objective of \eqref{prog:pessimistic} is concave, some $X_i$ must perform no worse than $X^*$, this provides Alice with a suboptimal projective policy.

In practice, however, we have noted that $X^*$ is a convex combination of at most two orthogonal projection matrices. Indeed, having reduced the problem so that
\begin{equation}
	\ker D \cap \ker E = \{0\},
\end{equation}
generically $\rk(D-\lambda E) \geq n-1$ for all $\lambda>0$, and the following proposition holds. The symbol $\ker$ here denotes the null-space.

\begin{prop}
\label{prop:coherence}
	If $\ker D \cap \ker E = \{0\}$ and $\rk(D-\lambda E) \geq n-1$ for all $\lambda>0$, then for all $t\in(0,\bar t)$, the program defining $h(t)$ has a unique solution, which is a convex combination of at most two orthogonal projection matrices. 
\end{prop}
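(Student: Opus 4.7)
The plan is to exploit strong duality for the SDP defining $h(t)$, then read off the minimizer's structure from Lemma \ref{lem:trsols} using both hypotheses.

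Fix $t \in (0, \bar t)$ and let $X^*$ solve the program defining $h(t)$. First I would show the budget is active: if $\Tr(EX^*) < t$, then $X^*$ also minimizes $\Tr(DX)$ over $\mathcal S$, so Lemma \ref{lem:trsols} forces $P_D^{<0} \preceq X^* \preceq P_D^{\leq 0}$; but $E \succeq 0$ then yields $\Tr(EX^*) \geq \Tr(EP_D^{<0}) = \bar t > t$, a contradiction. Since $X = 0$ is strictly feasible, strong duality holds and KKT provides a multiplier $\lambda > 0$ such that $X^*$ minimizes $\Tr((D - \lambda E)X)$ over $0 \preceq X \preceq I_n$. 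Applying Lemma \ref{lem:trsols} to this Lagrangian subproblem, the set of its minimizers is the order interval $\{X : P_{D-\lambda E}^{<0} \preceq X \preceq P_{D-\lambda E}^{\leq 0}\}$, and $X^*$ must lie in it.

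The rank hypothesis $\rk(D - \lambda E) \geq n-1$ now ensures $\dim \ker(D-\lambda E) \leq 1$. If the kernel is trivial, the order interval collapses to the singleton $\{P_{D-\lambda E}^{<0}\}$ and we are done: $X^*$ is uniquely determined and already an orthogonal projection. Otherwise, $\ker(D-\lambda E) = \mathrm{span}(v)$ for some unit $v$, and every minimizer reads $X = P_{D-\lambda E}^{<0} + s\, vv^\top$ with $s \in [0,1]$; the active constraint $\Tr(EX) = t$ then reduces to
\[
s\, v^\top E v = t - \Tr\bigl(E P_{D-\lambda E}^{<0}\bigr).
\]
The hypothesis $\ker D \cap \ker E = \{0\}$ is exactly what prevents $v^\top E v = 0$: since $E \succeq 0$, $v^\top E v = 0$ would give $Ev = 0$, whence $Dv = \lambda Ev = 0$ and $v \in \ker D \cap \ker E = \{0\}$, contradicting $\|v\| = 1$. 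Therefore $s$ is uniquely determined, and
\[
X^* = (1-s)\,P_{D-\lambda E}^{<0} + s\,P_{D-\lambda E}^{\leq 0}
\]
is the unique optimum, exhibited as a convex combination of two orthogonal projections.

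The main obstacle is the KKT/sign bookkeeping—ensuring the Lagrange multiplier is strictly positive (which is precisely what $t < \bar t$ buys us) and that the matrix appearing in the Lagrangian subproblem is the one the rank hypothesis covers. Once that is settled, the rest reduces to the order-interval description of Lemma \ref{lem:trsols} together with a one-variable linear equation whose solvability is secured by $\ker D \cap \ker E = \{0\}$.
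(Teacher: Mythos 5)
Your argument is correct up to the point where you fix the multiplier, but the final claim of uniqueness does not follow as written. What you establish is: \emph{for the particular optimum $X^*$ you started with}, there is a multiplier $\lambda>0$ such that $X^*$ is the unique point of the order interval $[P^{<0}_{D+\lambda E},P^{\leq0}_{D+\lambda E}]$ meeting $\Tr(EX)=t$. This pins $X^*$ down \emph{given its} $\lambda$, but a second optimum $X^{**}\neq X^*$ could a priori be certified by a different multiplier $\lambda'$ and live in a different order interval. This is exactly the case the paper spends the last third of its proof excluding: it shows that $\lambda\mapsto\Tr(EP^{<0}_{D+\lambda E})$ and $\lambda\mapsto\Tr(EP^{\leq0}_{D+\lambda E})$ interleave monotonically, and derives a contradiction if two distinct solutions correspond to $\lambda_1<\lambda_2$. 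Your write-up skips this entirely, so the sentence ``is the unique optimum'' is unjustified. The good news is that your route admits a one-line repair that is actually shorter than the paper's: since strong duality holds, \emph{any} dual-optimal $\lambda^*$ forms a saddle point with \emph{every} primal-optimal $X$ (the chain $p^*=\Tr(DX)\geq L(X,\lambda^*)\geq g(\lambda^*)=d^*=p^*$ collapses to equalities for each optimal $X$), so all primal optima minimize the same Lagrangian and lie in the same order interval; your active-constraint and non-degeneracy argument then applies to all of them at once. You need to say this explicitly.

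Two smaller points. First, the Lagrangian of $\min\Tr(DX)$ subject to $\Tr(EX)\leq t$ with multiplier $\lambda\geq0$ has matrix $D+\lambda E$, not $D-\lambda E$ (with your sign, increasing $\lambda$ would reward, rather than penalize, large $\Tr(EX)$); the paper's own proof works with $D+\lambda E$, and you should match it even though the proposition's hypothesis is typeset with a minus sign. Second, $X=0$ is not a Slater point for the constraint $X\succeq0$; take $X=\epsilon I_n$ with $\epsilon>0$ small instead, which is strictly feasible because $t>0$. Everything else---activeness of the budget via Lemma~\ref{lem:trsols}, strict positivity of $\lambda$ from $t<\bar t$, the rank-one kernel, and the use of $\ker D\cap\ker E=\{0\}$ to force $v^\top Ev>0$---matches the paper's reasoning and is fine.
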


\section{Illustrations}
\label{sec:illustration}

In order to illustrate the tightness of our approximation bounds, we first compare them against two cases we can entirely solve: the unidimensional case (i.e., when $n=1$), and the opening example. Specifically, we are interested in how the Pessimistic Program solution differs from the true optimal projective policy. The last subsection numerically solves an arbitrary instance.

\subsection{The unidimensional case}

Calling $l_0 = Q_{21}l_1+Q_{22}l_2$, the actual cost of Alice under a policy yielding $\bar\tau$ is,
\begin{align}
	&r + l_0^\top Q_{22}^{-1} l_0 + l_1^\top (Q_{11}-Q_{12}Q_{22}^{-1}Q_{21}) l_1 + \Tr Q_{11} \\
	&\quad+ \Tr(D\Sigma) + \mathbb E_{\bar\tau}\!\left[\max_{\eta\in C\mathcal B} ~w(\eta,\bar\mu)\right],
\end{align}
where
\begin{equation}
	w(\eta,\bar\mu) = 2((Q_{21}+Q_{22})\bar\mu-l_0)^\top \eta +  \eta^\top Q_{22} \eta.
\end{equation}
The other pessimistic and optimistic costs follow a similar pattern, only the error terms differ but remain independent of $Q_{11}$ and $l$ (once $l_0$ is set). It only requires simple algebra to see that the ratio of the true cost of a given policy over the true optimal cost is the largest when $r$ and $Q_{11}$ are the smallest. Hence, the ratio of the true cost of the pessimistic solution over the true optimal cost is maximum when considering $r=0$ and $Q_{11} = Q_{12}Q_{22}^{-1}Q_{21}$. This also applies to the optimistic solution.

\subsubsection{Tightness of approximations}

Looking back at how we derived \Cref{thm:boundany,thm:boundproj}, the first obstacle was to solve the inner maximization of \eqref{prog:lin_almost_Bayes}. We used two lemmas to help us, \Cref{lem:sproc,lem:prettybound1}. The first lemma turns the general $n$-dimensional optimization into a unidimensional convex program, it is exact and relies on an S-procedure followed by a Schur complement (see \cite{boyd1994linear}). The second lemma approximates the value of this simpler program, so that all in all, for all $\beta\in[0,1]$,
\begin{equation}
\label{eq:together}
\begin{aligned}
	(1-\beta^2)\bar\lambda + 2\beta\mathbb E[\|v\|]
	&\leq \mathbb E\!\left[\max_{\eta\in C\mathcal B} ~w(\eta, \bar\mu)\right] \\
	&\leq \bar\lambda + 2\mathbb E[\|v\|],
\end{aligned}
\end{equation}
where,
\begin{equation}
	v = C^\top((Q_{21}+Q_{22})\bar\mu - l_0).
\end{equation}

When $n=1$ the error term can be explicitly computed as
\begin{equation}
	\mathbb E\!\left[\max_{\eta\in C\mathcal B} ~w(\eta, \bar\mu)\right]
	= \bar\lambda + 2\mathbb E[\|v\|].
\end{equation}
So these first steps towards \eqref{prog:pessimistic}---the program Alice ultimately solves---are actually exact. We still cannot provide an optimal solution in all generality, but when $n=1$ we can find the best projective policy. Indeed, there are only two such policies: full- and no-information. In the first case, $\hat x = x$ and in the second case $\hat x = 0$. 

In the no-information case, the approximation
\begin{equation}
	\mathbb E[\|v\|] \leq \sqrt{\mathbb E[\|v\|^2]},
\end{equation}
is actually exact as the distribution of $v$ is a Dirac, so the Pessimistic Program matches the reality. In the full-information case, the relation between $\mathbb E[\|v\|]$ and $\sqrt{\mathbb E[\|v\|^2]}$ is a tad more complicated. Nonetheless, for unidimensional Gaussian priors, we have the following result.
\begin{lemma}
\label{lem:gaussian1d}
When $\nu\sim\mathcal N(0,1)$, $a,b\in\mathbb R$, 
\begin{equation}
	\sqrt{\frac2\pi} \sqrt{\mathbb E[(a+bx)^2]} \leq \mathbb E[|a+bx|] \leq \sqrt{\mathbb E[(a+bx)^2]},
\end{equation}
the lower bound occurring exactly when $a=0$.
\end{lemma}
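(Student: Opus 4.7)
The plan is to handle the two bounds separately. The upper bound $\mathbb E[|a+bx|]\leq\sqrt{\mathbb E[(a+bx)^2]}$ is immediate from Jensen's inequality (or Cauchy--Schwarz) applied to the concave function $\sqrt{\cdot}$, and it holds for any random variable and in particular independently of the Gaussian assumption. So the real content lies in the lower bound.

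For the lower bound, I would first dispose of the trivial case $b=0$, where both sides equal $|a|$ and the inequality holds strictly (since $1>\sqrt{2/\pi}$). For $b\neq0$, by homogeneity (scaling both sides by $|b|$) and the symmetry $|a+bx|=|(-a)+(-b)(-x)|$ coupled with the fact that $-x\sim x$, it suffices to prove the bound for $b=1$ and $t\triangleq a\geq0$. Define
\begin{equation}
  f(t)=\mathbb E[|t+x|],\qquad g(t)=\frac{f(t)^{2}}{t^{2}+1},
\end{equation}
so that the lower bound amounts to $g(t)\geq g(0)=2/\pi$ with equality only at $t=0$.

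The main step is to show that $g$ is (strictly) increasing on $[0,\infty)$. Splitting the integral at $-t$ and using $\int x\phi(x)\,dx=-\phi(x)$ together with $\phi(-t)=\phi(t)$ and $\Phi(-t)=1-\Phi(t)$, one obtains the closed form
\begin{equation}
  f(t)=t(2\Phi(t)-1)+2\phi(t),\qquad f'(t)=2\Phi(t)-1.
\end{equation}
A direct computation of $g'$ then gives
\begin{equation}
  g'(t)=\frac{2f(t)}{(t^{2}+1)^{2}}\bigl[(2\Phi(t)-1)-2t\phi(t)\bigr],
\end{equation}
after simplification. Since $f(t)>0$ for all $t\geq0$, the sign of $g'(t)$ matches that of $h(t)\triangleq 2\Phi(t)-1-2t\phi(t)$. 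Noting $h(0)=0$ and $h'(t)=2\phi(t)-2\phi(t)+2t^{2}\phi(t)=2t^{2}\phi(t)\geq0$, we conclude that $h(t)\geq0$ for $t\geq0$ with strict inequality for $t>0$, hence $g$ is strictly increasing on $[0,\infty)$. Therefore $g(t)\geq g(0)=(2\phi(0))^{2}=2/\pi$ with equality if and only if $t=0$.

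The hard part is genuinely just identifying the right monotonicity argument; once $g'$ is written down, the function $h$ is elementary to analyze. Care must be taken with the equality case: in the reduction to $b=1$, the condition $a=0$ in the original problem corresponds to $t=0$ exactly when $b\neq0$, while if $b=0$ the lower bound is strict. Hence equality in the lower bound holds if and only if $a=0$ and $b\neq0$, which recovers the claimed characterization.
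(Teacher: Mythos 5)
Your proof is correct and follows essentially the same route as the paper: reduce to $b=1$ and $a\geq0$, compute the folded-normal mean in closed form, and conclude by a derivative-sign argument (the paper differentiates the difference of the two sides divided by $a$, whereas you differentiate the squared ratio $g$, which costs one extra layer via $h$ but is equally valid). One small correction to your equality discussion: when $a=b=0$ both sides vanish, so the lower bound is attained there too; the case $b=0$ is only strict for $a\neq0$, and the lemma's characterization ``exactly when $a=0$'' is the accurate one.
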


\subsubsection{Comparing Optimistic, True and Pessimistic solutions}

In the no-information case, the true objective is the same as in the Pessimistic Program. In the full-information case however, the three programs ascribe different values, which we want to compare to each other. Following the reduction mentioned at the beginning this section and rescaling the cost to obtain $Q_{22}=1$, we have 
\begin{equation}
	r=0 \text{ and } Q = \begin{bmatrix} k^2 & -k \\ -k & 1 \end{bmatrix},
\end{equation}
where $k,l_0$ are yet to be chosen. With these parameters, $D=1-2k$, thus we focus our attention on the cases where $1-2k<0$. Moreover, when $k=1$, $v$ is a constant and all programs make the same prediction, we disregard this case. In addition, as discussed in \Cref{lem:gaussian1d}, the pessimistic value for the full-information policy is the most conservative (and so the optimistic value is closer to the true value) when $l_0=0$. In the interest of showing how the Pessimistic Program performs at its worst, we study this very case. The programs then assume forms where the exact value of $k$ only changes the relative importance of the scaling of the hypothesis, $\epsilon=|C|$, we let then $k=2$.

This results in the various costs values presented in \Cref{tab:objs}. In accordance with \Cref{thm:monotonicity}, full-information is optimal for lower $\epsilon$, and no-information becomes optimal past a threshold. The threshold corresponding to the true program is
\begin{equation}
	\epsilon^* = \frac{3\sqrt{2\pi}}4
\end{equation}
while the pessimistic and \emph{projective} optimistic thresholds are respectively
\begin{equation}
		\epsilon^- = \frac32 = \underset{\approx0.80}{\underbrace{\sqrt{\frac2\pi}}} \epsilon^*, ~
		\epsilon^+ = \frac{3(4+\pi)}4 
		= \underset{\approx2.85}{\underbrace{\frac{4+\pi}{\sqrt{2\pi}}}} \epsilon^*.
\end{equation}
The fact that $\epsilon^- \leq \epsilon^+$ agrees with the prediction of \Cref{coro:monotonicity}. Indeed, when no-information is optimal for \eqref{prog:optimistic} at a given value of $\epsilon$, it also is the case for \eqref{prog:pessimistic}.

As a result, when $\epsilon \leq \epsilon^-$ or $\epsilon \geq \epsilon^+$, all strategies agree. When $\epsilon \in (\epsilon^-, \epsilon^*)$, however, the pessimistic strategy is suboptimal whereas the optimistic strategy is optimal. When $\epsilon \in (\epsilon^*, \epsilon^+)$, the opposite happens. Qualitatively, the pessimistic solution is better in the sense that the range in which it is dominated by the optimistic solution is smaller than the converse.

\begin{table}
\caption{Objective values at NI ($\Sigma=0$) and FI ($\Sigma=1$), the only two projective policies when $n=1$.}
\label{tab:objs}
\centering
\begin{tabular}{|c|c|c|}
\hline
& NI & FI \\ \hline
\eqref{prog:lin_almost_Bayes} & $4+\epsilon^2$ & $1+2\sqrt{\frac2\pi}\epsilon + \epsilon^2$ \\
\eqref{prog:pessimistic} & $4+\epsilon^2$ & $1+2\epsilon + \epsilon^2$ \\
\eqref{prog:optimistic} & $4+(1-\beta^2)\epsilon^2$ & $1+2\beta\kappa \epsilon + (1-\beta^2) \epsilon^2$ \\\hline
\end{tabular}
\end{table}

\subsubsection{Graphical comparisons}

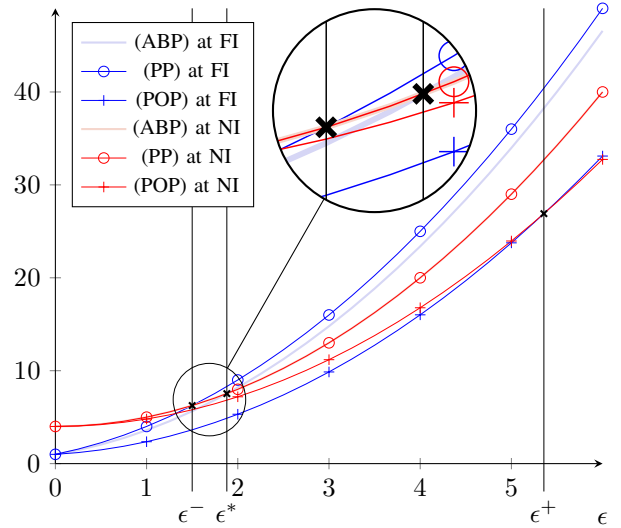
\begin{figure}
\centering
\begin{tikzpicture}[spy using outlines={circle, magnification=2.8, connect spies}]
\begin{axis}[clip=false,
					enlargelimits=false,
					axis lines=left,
		            xlabel=$\epsilon$,
					xlabel style={
					  at={(ticklabel* cs:1)},
				      anchor=north west,
				      below
				    },
					xmin=0,
					xmax=6,
					xtick={0,1,...,5},
		            ymin=0,
		            ymax=49,
		            mark repeat=4,
		            legend pos=north west,
					legend style={font=\footnotesize},
		            width=\columnwidth]
	
	\def\onemb{(12+6*pi+pi^2)/(4+pi)^2}

	\addplot[blue!80!yellow!20,thick,domain={0:6}] {1+2*sqrt(2/pi)*x+x^2};
	\addlegendentry{\eqref{prog:lin_almost_Bayes} at FI}
	\addplot[blue,very thin,mark=o,domain={0:6}] {1+2*x+x^2};
	\addlegendentry{\eqref{prog:pessimistic} at FI}
	\addplot[blue,very thin,mark=+,domain={0:6}] {1+4*x/(4+pi)+\onemb*x^2};
	\addlegendentry{\eqref{prog:optimistic} at FI}

	\addplot[red!80!green!20,thick,domain={0:6}] {4+x^2};
	\addlegendentry{\eqref{prog:lin_almost_Bayes} at NI}
	\addplot[red,very thin,mark=o,domain={0:6}] {4+x^2};
	\addlegendentry{\eqref{prog:pessimistic} at NI}
	\addplot[red,very thin,mark=+,domain={0:6}] {4+\onemb*x^2};
	\addlegendentry{\eqref{prog:optimistic} at NI}
			
	\addplot[very thin, opacity=0.5] coordinates {(3*sqrt(2*pi)/4,-3) (3*sqrt(2*pi)/4,49)};
	\addplot[very thin, opacity=0.5] coordinates {(3/2,-3) (3/2,49)};
	\addplot[very thin, opacity=0.5] coordinates {(3*(4+pi)/4,-3) (3*(4+pi)/4,49)};
	
	\node[below] at (axis cs: {3/2},-3) {$\epsilon^-$};
	\node[below] at (axis cs: {3*sqrt(2*pi)/4},-3.3) {$\epsilon^*$};
	\node[below] at (axis cs: {3*(4+pi)/4},-3) {$\epsilon^+$};

	\draw[thick, opacity=0.5] (axis cs: {3/2},{4+(3/2)^2}) node[cross=2pt, opacity=0.5]{};
	\draw[thick, opacity=0.5] (axis cs: {3*sqrt(2*pi)/4},{4+(3*sqrt(2*pi)/4)^2}) node[cross=2pt, opacity=0.5]{};
	\draw[thick, opacity=0.5] (axis cs: {3*(4+pi)/4},{4+\onemb*(3*(4+pi)/4)^2}) node[cross=2pt, opacity=0.5]{};
	
	\coordinate (spypoint) at (axis cs:1.69,6.9);
	\coordinate (magnifyglass) at (axis cs:3.5,38);
\end{axis}
	
	\spy [size=2.7cm] on (spypoint) in node[fill=white] at (magnifyglass);
\end{tikzpicture}
\caption{Plot of all the objectives.}
\label{fig:q01objs}
\end{figure}

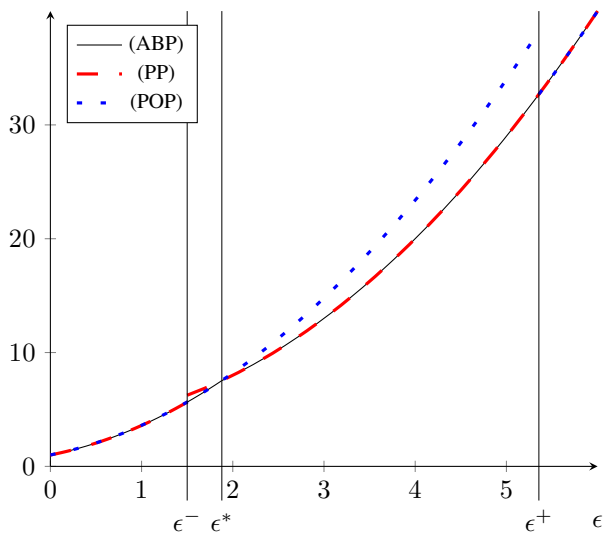
\begin{figure}
\centering
\begin{tikzpicture}
\begin{axis}[clip=false,
					enlargelimits=false,
					axis lines=left,
		            xlabel=$\epsilon$,
					xlabel style={
					  at={(ticklabel* cs:1)},
				      anchor=north west,
				      below
				    },
					xmin=0,
					xmax=6,
					xtick={0,1,...,5},
		            ymin=0,
		            ymax=40,
		            ytick={0,10,20,30},
		            mark repeat=4,
		            legend pos=north west,
					legend style={font=\footnotesize},
		            width=\columnwidth]
	
	\def\onemb{(12+6*pi+pi^2)/(4+pi)^2}

	\addplot[domain={0:3*sqrt(2*pi)/4},forget plot] {1+2*sqrt(2/pi)*x+x^2};
	\addplot[domain={3*sqrt(2*pi)/4:6}] {4+x^2};
	\addlegendentry{\eqref{prog:lin_almost_Bayes}}
	
	\addplot[very thick,dashed,dash pattern=on 8pt off 8pt,red,domain={0:3/2},forget plot] {1+2*sqrt(2/pi)*x+x^2};
	\addplot[very thick,dashed,dash pattern=on 8pt off 8pt,red,domain={3/2:6}] {4+x^2};
	\addlegendentry{\eqref{prog:pessimistic}}
	
	\addplot[very thick,dotted,dash pattern=on 2pt off 7pt,blue,domain={0:3*(4+pi)/4},forget plot] {1+2*sqrt(2/pi)*x+x^2};
	\addplot[very thick,dotted,dash pattern=on 2pt off 7pt,blue,domain={3*(4+pi)/4:6}] {4+x^2};
	\addlegendentry{\eqref{prog:optimistic}}
	
	\addplot[very thin, opacity=0.5] coordinates {(3*sqrt(2*pi)/4,-3) (3*sqrt(2*pi)/4,40)};
	\addplot[very thin, opacity=0.5] coordinates {(3/2,-3) (3/2,40)};
	\addplot[very thin, opacity=0.5] coordinates {(3*(4+pi)/4,-3) (3*(4+pi)/4,40)};
	
	\node[below] at (axis cs: {3/2},-3) {$\epsilon^-$};
	\node[below] at (axis cs: {3*sqrt(2*pi)/4},-3.3) {$\epsilon^*$};
	\node[below] at (axis cs: {3*(4+pi)/4},-3) {$\epsilon^+$};
\end{axis}
	
\end{tikzpicture}
\caption{Plot of the true cost of the optimal solutions to each program.}
\label{fig:q01vals}
\end{figure}

We plot the various objectives with $\epsilon\geq0$ varying in \Cref{fig:q01objs}. In red, we represent the values of the no-information policy and in blue, the values of the full-information policy. The thick pastel lines represent the true values, the lines with \tikz[baseline=-0.5ex]\draw (0,0) circle (2pt); marks represent the pessimistic bound, and the ones with \tikz[baseline=-0.5ex]{\draw (-0.1,0) -- (0.1,0); \draw (0,-0.1)--(0,0.1);} marks represent the projective optimistic value. The true values are much closer to the pessimistic bound since the upper bound in \eqref{eq:together} is exact.

\Cref{fig:q01vals} represents the loss of Alice (measured by the true cost as in \eqref{prog:lin_almost_Bayes}) when she plays optimally, pessimistically and optimistically. 
In both figures, the thresholds $\epsilon^-\leq \epsilon^*\leq \epsilon^+$ are represented by gridlines. They correspond to the size of the robust hypothesis at which the cost of full- and no-information, according to each respective program, are equal. This fact is marked by a \tikz[baseline=-0.5ex]{\draw[opacity=1] (0,0) node[cross=3pt, opacity=1] {};} mark at each of the three crossings.

\subsection{The opening example}

We examine the opening example, specifically with parameter $k>\nicefrac12$ and $k\neq1$, through the same lens as the unidimensional case. For this instance, \eqref{eq:together} becomes
\begin{align}
	(1-\beta^2)\epsilon^2 + 2\bar\beta\epsilon|1-k|\mathbb E[\|\hat x\|]
	&\leq \mathbb E\!\left[\max_{\eta\in C\mathcal B} ~w(\eta, \bar\mu)\right] \\
	&\leq \epsilon^2 + 2\epsilon |1-k|\mathbb E[\|\hat x\|],
\end{align}
to be compared with the exact value
\begin{equation}
	\epsilon^2 + 2\epsilon|1-k|\mathbb E[\|\hat x\|].
\end{equation}
Once again, the first pessimistic approximation is exact. In addition, while proving \Cref{thm:boundproj} (via \Cref{lem:prettybound2}), we have obtained the following approximation
\begin{equation}
	\mathbb E[\|\hat x\|]
	\geq \mathbb E[|x_1|] \sqrt{\mathbb E[\|\hat x\|^2]} = \mathbb E[|x_1|] \sqrt{\Tr\Sigma}.
\end{equation}
This bound is slightly tighter than the one used to derive the Projective Optimistic Program thanks to the fact that $l_2=0$ in this specific instance. These two arguments strengthen our expectation that the Pessimistic Program is more accurate than the Projective Optimistic Program.

The Pessimistic Program is strictly concave in $\Tr\Sigma$, hence the solution is either $\Sigma=0$ or $\Sigma=I_n$, thus it suffices to consider these two policies. In the no-information scenario, Jensen's inequality is an equality and so once more, the pessimistic value of the no-information policy is exact. In the full-information scenario, the approximation is not exact, however
\begin{equation}
	1 
	\geq \frac{\mathbb E[\|x\|]}{\sqrt{\mathbb E[\|x\|^2]}} 
	= \underset{\to_n1}{\underbrace{\frac{\sqrt2 \Gamma(\nicefrac{n+1}2)}{\sqrt n \Gamma(\nicefrac n2)}}}
	\geq \sqrt{\frac2\pi}.
\end{equation}

\begin{table}
\centering
\caption{Value of the penalty term according to each program.}
\label{tab:objs2}

\begin{tabular}{|c|c|c|}
\hline
& NI & FI \\ \hline
\eqref{prog:lin_almost_Bayes} & $\epsilon^2$ & $\epsilon^2+2\sqrt2\epsilon |1-k| \frac{\Gamma(\nicefrac{n+1}2)}{\Gamma(\nicefrac n2)}$\\
\eqref{prog:pessimistic} & $\epsilon^2$ & $\epsilon^2 + 2\epsilon |1-k| \sqrt n$ \\
\eqref{prog:optimistic} & $(1-\bar\beta^2)\epsilon^2$ & $(1-\bar\beta^2)\epsilon^2 + 2\bar\beta\kappa\epsilon |1-k| \sqrt n$ \\\hline
\end{tabular}
\end{table}

\Cref{tab:objs2} contains the value of the penalty term of each program for both policies. Once again, in each case, full-information is optimal until a certain threshold is met. The optimal threshold is
\begin{equation}
	\epsilon^* = \frac{(2k-1)n}{2\sqrt2 |1-k| \frac{\Gamma(\nicefrac{n+1}2)}{\Gamma(\nicefrac n2)}},
\end{equation}
whereas the pessimistic and optimistic thresholds are
\begin{equation}
	\epsilon^- = \underset{\to_n 1}{\underbrace{\frac{\sqrt2 \Gamma(\nicefrac{n+1}2)}{\sqrt n \Gamma(\nicefrac n2)}}} \epsilon^*,
	~\epsilon^+ = \underset{\approx3.57}{\underbrace{\frac1{\bar\beta\kappa}}} \epsilon^- = \frac{\sqrt2 \Gamma(\nicefrac{n+1}2)}{\bar\beta\kappa\sqrt n \Gamma(\nicefrac n2)} \epsilon^*.
\end{equation}
The conclusion we drew for the unidimensional setting also applies to the opening example: \eqref{prog:pessimistic} is qualitatively better suited to represent \eqref{prog:lin_almost_Bayes}.

\subsection{A numerical example}

To illustrate the numerical procedure described in \Cref{sec:numerical}, we consider a case where $n=3$, there is no linear term or constant term, and
\begin{equation}
	Q
	= \begin{bmatrix}
		31&-33&51&-5&2&-3\\
		-33&67&-80&4&-9&6\\
		51&-80&112&-7&8&-11\\
		-5&4&-7&1&0&0\\
		2&-9&8&0&2&0\\
		-3&6&-11&0&0&4
	\end{bmatrix} \succ0.
\end{equation}
In this case,
\begin{equation}
	D = \begin{bmatrix}
		-9&6&-10 \\
		6&-16&14 \\
		-10&14&-18
	\end{bmatrix} \prec0,
\end{equation}
The parameters are indeed chosen so that Alice reveals the information fully when $\epsilon=0$, though they are rather arbitrary beyond that. To keep things simple, consider the ellipsoidal hypothesis of parameter $C = \epsilon I_3$. Then, leaving $\epsilon$ out as a factor, $\bar\lambda = 4$, $f=0$ and
\begin{equation}
	E = \begin{bmatrix}
		116& -192& 260\\ 
		-192& 404& -504\\ 
		260& -504& 648
	\end{bmatrix} \succ0.
\end{equation}
The Pessimistic Program is
\begin{equation}
	\Tr Q_{11}+\epsilon^2 \bar\lambda + \min_{0\preceq X\preceq I_3} ~\Tr(DX) + \epsilon\sqrt{\Tr(E\Sigma)}.
\end{equation}

Following the procedure laid out in \Cref{prop:line,prop:suboptim}, we compute the solution at varying $\epsilon$. In \Cref{fig:rank}, we plot the rank of the optimal solution of the Pessimistic Program. Just as shown in \Cref{thm:monotonicity}, the rank never increases with $\epsilon$. At small $\epsilon$ the rank of the solution remains equal to that of the Bayesian solution, whereas at large $\epsilon$ the rank is null as $E\succ0$. Precisely, \Cref{prop:noinf_lim} predicts that whenever
\begin{equation}
	\epsilon \geq \frac{(\sqrt f - \Tr(P_D^{<0} D))^2-f}{\underline\lambda(E)} \approx 6.72,
\end{equation}
$\Sigma=0$ is a solution of the Pessimistic Program. As can be seen on \Cref{fig:rank}, this actually occurs as soon as $\epsilon\geq1.7$.

In \Cref{fig:values}, we plot the values given by the three different programs \eqref{prog:optimistic}, \eqref{prog:genoptimistic}, \eqref{prog:pessimistic}, along with twice the value of \eqref{prog:genoptimistic} (referred to as (2UOP)). We also include the value of the Strong Projective Optimal Program \eqref{prog:spop}, which is a refinement of \eqref{prog:optimistic} using the bound of \Cref{thm:boundproj} with the largest $\beta\in[0,1]$. Technically speaking, its value is
\begin{equation}
	\label{prog:spop}
	\tag{SPOP}
	\min_{0\preceq\Sigma\preceq I_n}~ \max_{\beta\in[0,1]}~ \Tr(D\Sigma) + c+(1-\beta^2)\bar\lambda + \beta\kappa \sqrt{f+\Tr(E\Sigma)}
\end{equation}
which is computed by first resolving the inner maximization, then using similar techniques to those that allow us to solve \eqref{prog:pessimistic} numerically, details are in \Cref{sec:numerical_app}. The dashed red lines denote pessimistic bounds of the optimal cost, the solid blue ones denote optimistic bounds. The \tikz[baseline=-0.5ex]\draw (0,0) circle (2pt); marks highlight the tightest known bounds on the optimal cost restricting to projective policies, whereas the \tikz[baseline=-0.5ex]{\draw[opacity=1] (0,0) node[cross=3pt, opacity=1] {};} marks represent the ones without restriction to projective policies.

For the purposes of bounding the objective of \eqref{prog:Bayes_linquad}, only the tightest robust bound matters. However, the other bounds may provide additional instance-specific guarantees, which might prove---as e.g. in this example---much more informative than the worst-case guarantees laid in \Cref{thm:boundany,thm:boundproj}. Specifically, although the pessimistic bound could be as high as twice the true optimal cost (or at least twice that of \eqref{prog:genoptimistic}), there is a substantial gain in using \eqref{prog:pessimistic} as a robust proxy program for \eqref{prog:lin_almost_Bayes}, as evidenced by the gap between \eqref{prog:pessimistic} and (2UOP). Moreover, \eqref{prog:spop} (and \eqref{prog:genoptimistic}, to a lesser degree) remains quite close to \eqref{prog:pessimistic} so that the true optimal cost, restricting or not to projective policies, remains well controlled---the shaded area represents the uncertainty about the exact value of \eqref{prog:lin_almost_Bayes}.

\begin{figure}
\centering
\begin{tikzpicture}
\begin{axis}[clip=false,
					enlargelimits=false,
					axis lines=left,
		            xlabel=$\epsilon$,
					xlabel style={
					  at={(ticklabel* cs:1)},
				      anchor=north west,
				      below
				    },
					xmin=0,
					xmax=2.5,
					xtick={0,0.5,...,2},
					ytick={0,1,2,3},
		            ymin=0,
		            ymax=3.2,
		            ymajorgrids=true,
		            width=\columnwidth,
		            height=0.5\columnwidth]

	\pgfplotstableread{python/ranks.txt}\ranks;
	\addplot[thick, no marks, blue] table {\ranks};

\end{axis}
\end{tikzpicture}
\caption{Plot of the rank of the solution to the pessimistic program.}
\label{fig:rank}
\end{figure}
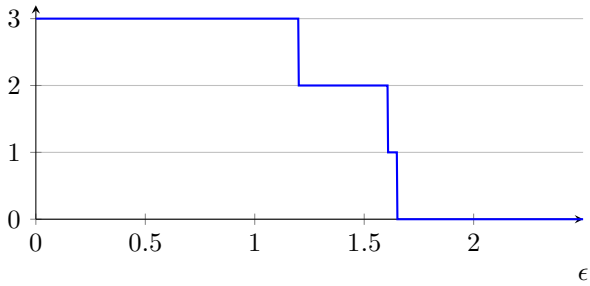

\begin{figure}
\centering
\begin{tikzpicture}
\begin{axis}[clip=false,
					enlargelimits=false,
					axis lines=left,
		            xlabel=$\epsilon$,
					xlabel style={
					  at={(ticklabel* cs:1)},
				      anchor=north west,
				      below
				    },
					xmin=0,
					xmax=2.5,
					xtick={0,0.5,...,2},
					ytick={0,50,...,350},
		            ymin=0,
		            ymax=400,
		            ymajorgrids=true,
		            legend pos=south east,
					legend style={font=\footnotesize},
		            width=\columnwidth]

	\addplot[no marks, red, dashed] file {./python/upess.txt};
	\addlegendentry{(2UOP)}
	\addplot[name path=pess, mark=otimes, mark repeat=200, mark phase = 50, mark options={solid}, red, dashed] file {./python/pess.txt};
	\addlegendentry{\eqref{prog:pessimistic}}
	\addplot[mark=o, mark repeat=200, mark phase = 116, blue] file {./python/spopt.txt};
	\addlegendentry{\eqref{prog:spop}}
	\addplot[no marks, blue] file {./python/popt.txt};
	\addlegendentry{\eqref{prog:optimistic}}
	\addplot[name path=uopt, mark=x, mark repeat=200, mark phase=183, blue] file {./python/uopt.txt};
	\addlegendentry{\eqref{prog:genoptimistic}}
	\addplot[orange!30, opacity=0.5] fill between [of = pess and uopt, split];
	\addlegendentry{\eqref{prog:lin_almost_Bayes}}
\end{axis}
\end{tikzpicture}
\caption{Plot of the various bounds.}
\label{fig:values}
\end{figure}
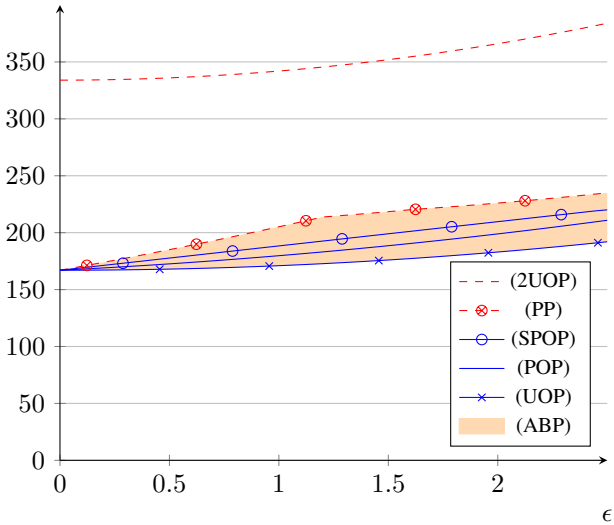

\section{Conclusion}
\label{sec:conclusion}

We have developed and explored the concept of almost-Bayesian agent in a specific persuasion setting: quadratic persuasion. In contrast with previous work, our approach does not assume that the thought process of the Receiver is given and known, but instead that his actions are relatively close to those of a Bayesian agent. This robust concept allows the Sender to account for possible small mistakes the Receiver could commit, either for his inaccuracy in estimating probabilities, or for his failure to exactly optimize his expected utility. Such description of an agent is independent of the form of the event space, the prior or the utilities, and as such is readily transposable to other Bayesian persuasion problems, even though the analysis could greatly differ. 

Even the simplest case of almost-Bayesian quadratic persuasion, exposed in \Cref{sec:cex}, proved to be exactly intractable. Indeed, linear policies---the only practical class of policies for rotationally-invariant priors---have been shown to not be optimal, moreover finding the optimal linear policy is more than challenging. Nonetheless, we could approximate Alice's program (thanks to \Cref{thm:boundany} and \ref{thm:boundproj}) and solve it numerically. In addition, we have uncovered some structural properties of the program, allowed by the specific setting we have chosen. Alice is less keen to share information as Bob's thought process is increasingly departing from Bayesian updating, both truly (\Cref{thm:noinf}) and in approximation (\Cref{thm:monotonicity}). In this case then, failing to be rigorously Bayesian can be detrimental to Bob.

Some of the insights gained in this article are specific to the instance on which we chose to demonstrate the almost-Bayesian agent concept, and partly also to the approximations we derived. In the absence of additional structure however, we suspect that Alice's strategy facing an increasingly less Bayesian would not change consistently. This is similar in spirit to the findings of \cite{de2022non}: over all Bayesian persuasion problems, Alice does not consistently prefer a type of agent, yet, considering more defined instances such as situations with common interest, comparisons can be drawn.

\section*{References}
\bibliography{refs}

\clearpage
\appendices
\crefalias{section}{appendix}
\crefalias{subsection}{appendix}

\section{Origins of the hypothesis class}
\label{sec:hypotheses}
\subsection{Examples of robust hypotheses}

A first natural idea is for Alice to assume that Bob's erroneous posterior lies within a given distance from the Bayesian posterior, as measured by some statistical metric. In the case of the Wasserstein distance, we can state the following result.

\begin{prop}
\label{prop:Wasserstein}
	Let $W_p(\mu',\mu)$ denote the usual Wasserstein distance of order $p\in[1,\infty)$ (see \cite{villani} for a formal definition) between $\mu'$ and $\mu$, and let the robust hypothesis $\Lambda$ be given by
	\begin{equation} \label{eq:lambdawasserstein}
		\Lambda(\mu) = \{\mu', ~W_p(\mu',\mu)\leq\epsilon\},
	\end{equation}
	then
	\begin{equation}
		\bar\Lambda(\mu) = \bar\mu + \epsilon\mathcal B.
	\end{equation}
\end{prop}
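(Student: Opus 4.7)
The claim has two inclusions to verify, and both are essentially consequences of Jensen's inequality plus a simple translation argument. My plan is to treat each direction separately.

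For the inclusion $\bar\Lambda(\mu) \subseteq \bar\mu + \epsilon\mathcal B$, I would take $\mu' \in \Lambda(\mu)$ and any coupling $\pi$ of $\mu$ and $\mu'$ (with marginals $X \sim \mu$ and $X' \sim \mu'$). Writing $\bar\mu' - \bar\mu = \mathbb E_\pi[X' - X]$ and applying Jensen's inequality to the norm (and then to $t \mapsto t^p$, which is convex on $\mathbb{R}_+$ for $p \geq 1$), I would obtain
\begin{equation}
\|\bar\mu' - \bar\mu\| \leq \mathbb E_\pi[\|X' - X\|] \leq \bigl(\mathbb E_\pi[\|X' - X\|^p]\bigr)^{1/p}.
\end{equation}
Taking the infimum over all couplings and using the definition of the Wasserstein distance yields $\|\bar\mu' - \bar\mu\| \leq W_p(\mu', \mu) \leq \epsilon$, so $\bar\mu' \in \bar\mu + \epsilon\mathcal B$.

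For the reverse inclusion $\bar\mu + \epsilon\mathcal B \subseteq \bar\Lambda(\mu)$, I would exhibit an explicit witness. Given any $z \in \bar\mu + \epsilon\mathcal B$, let $T\colon x \mapsto x + (z - \bar\mu)$ and define $\mu' = T_\# \mu$, the pushforward of $\mu$ by the translation $T$. Then $\bar\mu' = \bar\mu + (z - \bar\mu) = z$. Using the deterministic coupling $\pi = (\mathrm{Id},T)_\# \mu$, I would compute
\begin{equation}
W_p(\mu',\mu)^p \leq \mathbb E_\pi[\|X'-X\|^p] = \|z-\bar\mu\|^p \leq \epsilon^p,
\end{equation}
so $\mu' \in \Lambda(\mu)$ and $z = \bar\mu' \in \bar\Lambda(\mu)$.

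Neither step presents a real obstacle: the first direction is a standard fact about moments controlled by Wasserstein distance, and the second is a one-line translation argument. The only subtlety to flag is the convexity invocation used to pass from $\mathbb E_\pi[\|X'-X\|]$ to $(\mathbb E_\pi[\|X'-X\|^p])^{1/p}$, which requires $p \geq 1$ (consistent with the hypothesis $p \in [1,\infty)$).
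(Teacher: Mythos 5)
Your proof is correct and follows essentially the same route as the paper: Jensen's inequality over an arbitrary coupling for the forward inclusion, and a translated copy of $\mu$ with the deterministic coupling as witness for the reverse inclusion. The only cosmetic differences are that you apply Jensen in two steps (to the norm, then to $t\mapsto t^p$) where the paper applies it once to the convex map $(x,y)\mapsto\|x-y\|^p$, and that you only need the upper bound $W_p(\mu',\mu)\leq\|z-\bar\mu\|$ for the translation rather than the exact equality the paper records.
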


In words, $\Lambda$ as in \eqref{eq:lambdawasserstein} corresponds to the robust hypothesis that ``Bob's posterior is always within $W_p$-distance $\epsilon$ from the true posterior.'' It is remarkable that, in terms of means, it induces a simple Euclidean ball. Moreover, $\bar\Lambda(\mu)$ only depends on the mean $\bar\mu$ of $\mu$.

\medskip
Regarding the choice of statistical distance, one can also consider the broad family of $f$-divergences (see \cite{liese1987convex} for a reference). Let $f\colon (0,\infty)\to\mathbb R$ be convex with $f(1)=0$, and interpret $f(0)$ as the limit of $f(\epsilon)$ as $\epsilon>0$ vanishes. We denote the $f$-divergence of $\mu'$ from $\mu$ by
\begin{equation}
	\df{\mu'}{\mu}
	= \int_{\mathbb R^n} f\circ\frac{\mathrm d\mu'}{\mathrm d\mu} \,\mathrm d\mu,
\end{equation}
whenever $\mu' \ll \mu$. For simple instances, with $f(t)=t \ln t$, one recovers the Kullback-Leibler divergence, and with a little more care, one recovers the R\'enyi divergences. 

As it turns out, $f$-divergences prove to be a little more difficult to work with, and we have to restrict our attention to posteriors $\mu$ that stem from a projective policy. Even then, unlike the case of \Cref{prop:Wasserstein}, the mean set $\bar\Lambda(\mu)$ also depends on the covariance $\Sigma_\mu$. More precisely, we can show the following.

\begin{prop}
\label{prop:fdivergence}
	Let the robust hypothesis $\Lambda$ be given by
	\begin{equation}
		\Lambda(\mu) = \{\mu'\ll\mu, ~\df{\mu'}{\mu}\leq\epsilon\},
	\end{equation}
	and let $\mu$ be a Bayesian posterior obtained by a projective policy $P$, then
	\begin{equation}
		\bar\Lambda(\mu) = \bar\mu + \delta (I_n-P)\,\mathcal B,
	\end{equation}
	where the scalar $\delta$ could be infinite (in which case $\bar\Lambda(\mu) = \mathbb R^n$) and implicitly depends on $f$, $\epsilon$ and $\bar\mu$. The ball $\mathcal B$ could be closed.
\end{prop}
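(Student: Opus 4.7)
The plan is to exploit the rotational symmetry that $\mu$ inherits from the isotropy of $\nu$, together with the invariance of $f$-divergences under joint pushforwards and their joint convexity. First, I would pin down $\mu$ explicitly: under the projective policy $y = Px$, given the observation $y = \bar\mu \in \mathrm{range}(P)$, Bob's Bayesian posterior $\mu$ is supported on the affine fiber $\bar\mu + \ker P = \bar\mu + \mathrm{range}(I_n - P)$, and its mean is $\bar\mu = y$ by \Cref{thm:isotrop}(iv). For any orthogonal matrix $R$ that fixes $\mathrm{range}(P)$ pointwise---equivalently, whose restriction to $\ker P$ is an arbitrary rotation---the event $\{Px = \bar\mu\}$ is preserved, so isotropy of $\nu$ (\Cref{thm:isotrop}(iii)) passes to the conditional distribution, yielding $R_* \mu = \mu$. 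In particular, $\mu$ is invariant under every rotation of the fiber about $\bar\mu$.

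Next, I would observe that any admissible $\mu' \ll \mu$ is supported on the same fiber, so $\bar\mu' - \bar\mu \in \ker P$ and $\bar\Lambda(\mu) - \bar\mu \subseteq \ker P$. Applying such a rotation $R$ to $\mu'$ gives $R_*\mu' \ll R_*\mu = \mu$ with $D_f(R_*\mu' \| \mu) = D_f(\mu' \| \mu) \le \epsilon$ by the standard invariance of $f$-divergences under common measurable transformations, and the mean of $R_*\mu'$ equals $\bar\mu + R(\bar\mu' - \bar\mu)$. Therefore $\bar\Lambda(\mu) - \bar\mu$ is rotationally invariant within $\ker P$.

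Then, I would leverage convexity. By joint convexity of $D_f$, the set $\Lambda(\mu)$ of admissible measures is convex, and since $\mu' \mapsto \bar\mu'$ is affine, so is $\bar\Lambda(\mu)$. Taking $\mu' = \mu$ shows $\bar\mu \in \bar\Lambda(\mu)$. A convex subset of $\ker P$ that contains the origin and is invariant under all rotations of $\ker P$ must be a Euclidean ball centered at the origin. Setting
\begin{equation}
    \delta = \sup\bigl\{ \|\bar\mu' - \bar\mu\| \,:\, \mu' \ll \mu,\; D_f(\mu' \| \mu) \le \epsilon \bigr\} \in [0,\infty],
\end{equation}
one concludes $\bar\Lambda(\mu) - \bar\mu$ is the open or closed ball of radius $\delta$ in $\ker P$, which is exactly $\delta (I_n - P)\mathcal{B}$ with $\mathcal B$ open or closed.

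The principal obstacle is twofold. The first is making the rotational invariance of the conditional distribution fully rigorous: this requires that the conditioning $\sigma$-algebra generated by $Px$ be preserved by every $R$ stabilizing $\mathrm{range}(P)$, which is immediate but should be explicitly invoked through a disintegration argument. The second concerns edge cases: $\delta$ can genuinely be infinite (for instance when $D_f$ is uniformly bounded---as with total variation---and $\epsilon$ is large enough, so that $\bar\Lambda(\mu)$ fills the whole fiber), and whether the supremum is attained depends delicately on $f$ and on the tail behavior of $\mu$, dictating whether $\mathcal B$ should be taken open or closed in the final statement.
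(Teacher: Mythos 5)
Your proof is correct and follows essentially the same route as the paper's: confinement of the supports to the fiber $\bar\mu+\ker P$, invariance of the posterior and of the $f$-divergence under orthogonal maps fixing $\Ima P$, and the resulting rotational invariance of $\bar\Lambda(\mu)-\bar\mu$. The only difference is that you make explicit the convexity argument needed to conclude that a rotationally invariant set containing the origin is a ball, a step the paper leaves implicit.
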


When $\nu$ is Gaussian, $\mu = \mathcal N(\bar\mu, I_n-\Sigma)$ is Gaussian as well. What is remarkable is that once centered, all $\mu$ are the same distribution $\mathcal N(0,I_n-\Sigma)$. In this specific case then, $\delta$ does not depend on $\bar\mu$.

\medskip
Another way in which a set of erroneous posteriors can be generated is if Bob is Bayesian but that his computation costs him. In this event, he may very well trade off accuracy for efficacy, and thus be content with a suboptimal solution. As mentioned earlier, Bob's in-game loss is often considered quadratic in linear-preference persuasion, that is
\begin{equation}
\label{eq:Bobquadloss}
	u(a,x) = \begin{bmatrix} x \\ a \end{bmatrix}^\top R \begin{bmatrix} x \\ a \end{bmatrix} + m^\top \begin{bmatrix} x \\ a \end{bmatrix} + s,
\end{equation}
where $R_{22}\succ0$ and, to suit our construction, $R_{12}$ is assumed non-singular. Under belief $\mu$ and with no computation cost, Bob's best-response is,
\begin{equation}
	a^*(\bar\mu) = -R_{22}^{-1}(\nicefrac{m_2}2 + R_{21} \bar\mu).
\end{equation}
With this notation, we can state the following.

\begin{prop}
\label{prop:costlyupdate}
	When Bob's loss is as in \eqref{eq:Bobquadloss},
	\begin{equation}
		\{a, ~u(a,\mu) \leq u(a^*(\bar\mu),\mu) + \epsilon\}
		= a^*\!\left(\bar\mu + \sqrt\epsilon R_{21}^{-1} \sqrt{R_{22}} \mathcal B\right).
	\end{equation}
\end{prop}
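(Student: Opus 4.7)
The plan is to exploit the quadratic structure of $u(\cdot,\mu)$ to show that the $\epsilon$-sublevel set around $a^*(\bar\mu)$ is an ellipsoid, and then to pull this ellipsoid back through the affine map $\delta \mapsto a^*(\bar\mu+\delta)$.

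First, I would expand the expected loss. Since $R$ is symmetric with blocks $R_{ij}$, taking the expectation under $\mu$ in \eqref{eq:Bobquadloss} gives, after collecting the $a$-dependent terms,
\begin{equation}
    u(a,\mu) = a^\top R_{22} a + (2 R_{21}\bar\mu + m_2)^\top a + K(\mu),
\end{equation}
where $K(\mu)$ does not depend on $a$. Since $R_{22}\succ 0$, completing the square about the minimizer $a^*(\bar\mu)=-R_{22}^{-1}(R_{21}\bar\mu + m_2/2)$ yields the identity
\begin{equation}
    u(a,\mu) - u(a^*(\bar\mu),\mu) = (a-a^*(\bar\mu))^\top R_{22}\,(a-a^*(\bar\mu)).
\end{equation}
Thus the condition $u(a,\mu)\leq u(a^*(\bar\mu),\mu)+\epsilon$ is equivalent to $\|\sqrt{R_{22}}(a-a^*(\bar\mu))\|\leq\sqrt\epsilon$, so the sublevel set rewrites as $a^*(\bar\mu)+\sqrt\epsilon\,R_{22}^{-1/2}\mathcal B$.

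It remains to express this ellipsoid through the affine map $\delta\mapsto a^*(\bar\mu+\delta)$. Using linearity, $a^*(\bar\mu+\delta)=a^*(\bar\mu)-R_{22}^{-1}R_{21}\delta$, so any point $a = a^*(\bar\mu)+h$ corresponds to $\delta=-R_{21}^{-1}R_{22}\,h$ (this is where non-singularity of $R_{21}$ is used, making the map $\delta\mapsto a^*(\bar\mu+\delta)$ a bijection $\mathbb R^n\to\mathbb R^k$ with $k=n$). Substituting $h=\sqrt\epsilon\,R_{22}^{-1/2} z$ with $\|z\|\leq 1$, one gets $\delta=-\sqrt\epsilon\,R_{21}^{-1}\sqrt{R_{22}}\,z$. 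Since $\mathcal B$ is symmetric about the origin, the sign is immaterial and the image set of admissible $\delta$ is exactly $\sqrt\epsilon\,R_{21}^{-1}\sqrt{R_{22}}\,\mathcal B$, which gives the claimed equality.

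The only delicate step is the change of variables between action-space and belief-mean-space: one must verify that $-R_{22}^{-1}R_{21}$ is invertible (it is, since both $R_{22}$ and $R_{21}$ are) so that the ellipsoid of admissible actions is the exact image of an ellipsoid of perturbed means. Everything else is routine manipulation of symmetric quadratic forms.
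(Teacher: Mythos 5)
Your proof is correct and follows essentially the same route as the paper's: complete the square in $a$ using $R_{22}\succ0$ to identify the $\epsilon$-sublevel set as the ellipsoid $a^*(\bar\mu)+\sqrt\epsilon\,\sqrt{R_{22}}^{-1}\mathcal B$, then pull it back through the affine map $\delta\mapsto a^*(\bar\mu+\delta)$ using the invertibility of $R_{21}$ and the symmetry of $\mathcal B$. You simply spell out the change of variables that the paper leaves implicit.
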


In other words, Bob being satisfied with an $\epsilon$-suboptimal solution corresponds exactly to Bob playing optimally but with posteriors such that the set of means is
\begin{equation}
	\bar\Lambda(\mu) = \bar\mu + \sqrt\epsilon R_{21}^{-1} \sqrt{R_{22}} \mathcal B.
\end{equation}
This robust hypothesis is very similar to that of the ``Wasserstein distance'' case, in the sense that we would only need to rescale the Euclidean metric to match it.

\medskip
Finally, instead of a generic model ``$\mu'$ is close to $\mu$,'' Alice can have an idea about Bob's thought process. For instance, she may know that Bob holds a different prior or that he gives more importance to his prior than a Bayesian agent would. At the same time, she may not know his prior exactly or how conservative his belief update is. This direction was recently suggested by \cite{de2022non} while studying non-Bayesian persuasion, i.e., the case where $\Lambda$ is a univalued map, aptly called belief distortion. We discuss here how the type of robust hypothesis introduced in this article can provide useful over-approximation for these so-called parametric models.

As it turns out, not all belief distortion models are well-adapted to uncountable event spaces. For instance Grether's $\alpha-\beta$ model \cite{grether1980bayes} does not generalize to richer event spaces unless $\alpha=1$, and even then, the formula may terminate on an undetermined form, leaving Bob's posteriors undefined. A mismatched prior, on the other hand, poses no apparent technical trouble provided Bob's prior $\nu'$ has finite second moment, \cite{kazikli2021optimal}.   

At any rate, our approach could be deemed too conservative to adequately treat this type of uncertainty. Alice would rather place the adversarial maximization in front of the expectation, as now the failure of Bob to be Bayesian is ``coherent'' across beliefs. This being said, the merit of our robust hypothesis lies in that we can solve the ultimate program it generates, and one could nonetheless include parameter uncertainty in such hypothesis---albeit conservatively.

\subsubsection{Mismatched prior}

If Bob's prior $\nu'\ll\nu$ is such that
\begin{equation}
	\frac{\mathrm d\nu'}{\mathrm d\nu} \in [s, \nicefrac1s],
\end{equation}
for some $s>0$, we can explicitly write Bob's erroneous belief $D_{\nu'}(\mu)$ as a function of the Bayesian belief $\mu$ through
\begin{equation}
	\mathrm dD_{\nu'}(\mu)(x) = \frac{\frac{\mathrm d\nu'}{\mathrm d\nu}(x)}{\int_{\mathbb R^n} \frac{\mathrm d\nu'}{\mathrm d\nu}\,\mathrm d\mu} \,\mathrm d\mu(x).
\end{equation}
When Alice does not know exactly $\nu'$, this gives rise to a robust hypothesis $\Lambda$. However, merely knowing that $\nu'$ is close to $\nu$ in any statistical sense is not enough. Informally, $\nu'$ could differ ever so slightly from $\nu$ on a narrow band of space, thereby inducing a wildly different estimation $\bar\mu'$ from $\bar\mu$ when the message specifies $x$ is in this band. In this case, thus, we require a stronger, more uniform, notion of proximity. When $\nu'\ll\nu$, we let $\epsilon(\nu',\nu)$ be the infimum of all $\epsilon>0$ such that
\begin{equation}
	\frac{\mathrm d\nu'}{\mathrm d\nu} \in \left[\frac1{1+\epsilon}, 1+\epsilon\right].
\end{equation}
The smaller $\epsilon(\nu',\nu)$ is, the closer the distributions are. With this notation in hand, we are in a position to state the following proposition.

\begin{prop}
\label{prop:mismatchedprior}
	Let the robust hypothesis $\Lambda$ be given by
	\begin{equation}
		\Lambda(\mu) = \{D_{\nu'}(\mu), ~\nu' \ll \nu, ~\epsilon(\nu',\nu) \leq \epsilon\},
	\end{equation}
	then
	\begin{equation}
		\bar\Lambda(\mu) \subset \bar\mu + \sqrt{2\epsilon+\epsilon^2} \sqrt{\Tr\Sigma_\mu}\,\mathcal B.
	\end{equation}
\end{prop}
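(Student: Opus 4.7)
The plan is to fix an admissible $\nu' \ll \nu$, denote its likelihood ratio $g \triangleq \mathrm{d}\nu'/\mathrm{d}\nu$ which by the definition of $\epsilon(\nu',\nu)$ satisfies $g(x) \in [1/(1+\epsilon), 1+\epsilon]$ $\nu$-a.e.\ (and hence $\mu$-a.e., since the Bayesian posterior $\mu$ is absolutely continuous with respect to $\nu$), and derive the targeted bound uniformly in $\nu'$. The formula defining $D_{\nu'}(\mu)$ reads $\mathrm{d}\mu' = (g/m)\,\mathrm{d}\mu$ with $m \triangleq \mathbb{E}_\mu[g] \in [1/(1+\epsilon), 1+\epsilon]$. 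Integrating $x$ and using $\int (x-\bar\mu)\,\mathrm{d}\mu = 0$ to subtract a constant from $g$ inside the integral yields the compact expression
\begin{equation*}
    \bar\mu' - \bar\mu = \frac{1}{m}\int (x - \bar\mu)\bigl(g(x) - 1\bigr)\,\mathrm{d}\mu(x).
\end{equation*}

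Next I would apply Cauchy--Schwarz in $L^2(\mu)$. Testing against the unit vector $u = (\bar\mu' - \bar\mu)/\|\bar\mu' - \bar\mu\|$,
\begin{equation*}
    \|\bar\mu' - \bar\mu\| \le \frac{\sqrt{u^\top \Sigma_\mu u}\,\sqrt{\mathbb{E}_\mu[(g-1)^2]}}{m} \le \frac{\sqrt{\Tr \Sigma_\mu}\,\sqrt{\mathbb{E}_\mu[(g-1)^2]}}{m},
\end{equation*}
where the last step uses $\Sigma_\mu \succeq 0$ and $\|u\|=1$ to pass from the quadratic form to the trace. It then remains to bound the ratio $\sqrt{\mathbb{E}_\mu[(g-1)^2]}/m$ uniformly over $g$ by $\sqrt{2\epsilon + \epsilon^2}$.

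The key step is a pointwise inequality sharper than the naive $|g-1| \le \epsilon$. Observe that $(g-1)^2/g = g + 1/g - 2$ is convex on $(0,\infty)$, so it is maximized over $[1/(1+\epsilon), 1+\epsilon]$ at the endpoints, both of which yield $\epsilon^2/(1+\epsilon)$. Hence $(g-1)^2 \le \frac{\epsilon^2}{1+\epsilon}\,g$ pointwise, integrating against $\mu$ gives $\mathbb{E}_\mu[(g-1)^2] \le \epsilon^2 m/(1+\epsilon)$, and plugging back using $m \ge 1/(1+\epsilon)$,
\begin{equation*}
    \|\bar\mu' - \bar\mu\| \le \sqrt{\Tr \Sigma_\mu}\cdot\frac{\epsilon}{\sqrt{(1+\epsilon)m}} \le \epsilon\,\sqrt{\Tr \Sigma_\mu}.
\end{equation*}
Since $\epsilon \le \sqrt{2\epsilon + \epsilon^2}$ for all $\epsilon \ge 0$, the proposition's inclusion follows. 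The subtle point---and the main obstacle---is precisely this cancellation between the $1/m$ coming out of Cauchy--Schwarz and the likelihood-ratio moment: the naive combination $|g-1| \le \epsilon$ together with $m \ge 1/(1+\epsilon)$ only produces the bound $\epsilon(1+\epsilon)$, which exceeds $\sqrt{2\epsilon + \epsilon^2}$ for moderately large $\epsilon$, so it is essential to use the stronger pointwise inequality that makes $\mathbb{E}_\mu[(g-1)^2]$ scale linearly with $m$.
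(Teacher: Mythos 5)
Your proof is correct, and it follows the same skeleton as the paper's: write the mean shift as an integral of the centered state against a likelihood-ratio perturbation, then apply Cauchy--Schwarz to split off $\sqrt{\Tr\Sigma_\mu}$. Where you differ is in the scalar factor. The paper keeps the normalized perturbation $\frac{g}{m}-1$ (with $g=\frac{\mathrm d\nu'}{\mathrm d\nu}$ and $m=\mathbb E_\mu[g]$) inside the Cauchy--Schwarz step and bounds its $L^2(\mu)$ norm by $\sqrt{\mathbb E_\mu[(g/m)^2]-1}\leq\sqrt{\sup(g/m)-1}\leq\sqrt{(1+\epsilon)^2-1}=\sqrt{2\epsilon+\epsilon^2}$, which is exactly the radius in the statement. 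You instead pull $\nicefrac1m$ outside and exploit the pointwise inequality $(g-1)^2\leq\frac{\epsilon^2}{1+\epsilon}\,g$, so that $\mathbb E_\mu[(g-1)^2]$ scales with $m$ and the residual $m$-dependence cancels against the prefactor; this yields the strictly sharper radius $\epsilon\sqrt{\Tr\Sigma_\mu}\leq\sqrt{2\epsilon+\epsilon^2}\sqrt{\Tr\Sigma_\mu}$. Your closing remark is also well taken: the purely pointwise bound $|g/m-1|\leq 2\epsilon+\epsilon^2$ would not suffice for $\epsilon>\sqrt2-1$, so some second-moment argument of the kind you or the paper use is genuinely needed.
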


\subsubsection{Affine distortion}

Another model---termed affine distortion---accounts for a bias towards a specific ``ideal'' belief, which may or may not be Bob's prior. Formally, the erroneous belief is
\begin{equation}
	\mu' = \chi \mu + (1-\chi) \mu^*,
\end{equation}
where $\chi\in[0,1]$ is a parameter such that $\chi=1$ corresponds to a Bayesian agent, and $\mu^*$ is the ideal belief. This latter can be interpreted as the belief Bob would like to hold from a motivated updating perspective. Again, a robust hypothesis appears as soon as the parameters are not well-known. For instance, $\chi$ belongs to some subinterval $[a,b]\subset[0,1]$, or $\mu^*$ is close to some belief $\mu^*_0$ in some statistical sense. We explore the latter possibility in the following proposition.

\begin{prop}
\label{prop:affinedistortion}
	Let the robust hypothesis $\Lambda$ be given by
	\begin{equation}
		\Lambda(\mu) = \{\chi \mu + (1-\chi) \mu^*, ~\mu^* \ll \mu^*_0, ~W_p(\mu^*,\mu^*_0)\leq\epsilon\},
	\end{equation}
	then
	\begin{equation}
		\bar\Lambda(\mu) 
		= \chi \bar\mu + (1-\chi) \bar\mu_0^* 
		+ (1-\chi)\epsilon\,\mathcal B.
	\end{equation}
\end{prop}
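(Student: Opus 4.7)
The plan is to reduce the statement directly to \Cref{prop:Wasserstein}, exploiting the linearity of the expectation operator in the affine combination defining $\mu'$.

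First, I would observe that for any $\mu' \in \Lambda(\mu)$ of the form $\mu' = \chi\mu + (1-\chi)\mu^*$, the mean is
\begin{equation}
    \bar\mu' = \int x \,\mathrm d\mu'(x) = \chi \int x \,\mathrm d\mu(x) + (1-\chi) \int x \,\mathrm d\mu^*(x) = \chi \bar\mu + (1-\chi)\bar\mu^*,
\end{equation}
because the map $\nu \mapsto \bar\nu$ is linear over the convex combinations of probability measures with finite first moment. Consequently,
\begin{equation}
    \bar\Lambda(\mu) = \chi\bar\mu + (1-\chi) \,\{\bar\mu^* : \mu^* \ll \mu^*_0, ~W_p(\mu^*,\mu^*_0) \le \epsilon\}.
\end{equation}

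Next, I would apply \Cref{prop:Wasserstein} (with $\mu^*_0$ playing the role of the ``Bayesian'' posterior and $\mu^*$ the ``erroneous'' one) to identify the bracketed set with $\bar\mu^*_0 + \epsilon\mathcal B$. Strictly speaking, \Cref{prop:Wasserstein} is stated without the absolute continuity qualifier $\mu^* \ll \mu^*_0$, so the one subtlety is to check that adding this restriction does not shrink the set of attainable means. This is immediate: any $\bar\mu^*_0 + \epsilon u$ with $u \in \mathcal B$ can be realized by a translated copy $\mu^* = (T_{\epsilon u})_\# \mu^*_0$ of $\mu^*_0$ itself, which satisfies $\mu^* \ll \mu^*_0$ up to reparametrization—or, more carefully, by the coupling argument used in the proof of \Cref{prop:Wasserstein} applied to any $\mu^*_0$-absolutely-continuous perturbation. (Alternatively, one can simply note that the closure of the absolutely continuous subclass under the $W_p$ topology already contains all Dirac translates, which suffices to generate the full mean ball.)

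Combining the two steps gives
\begin{equation}
    \bar\Lambda(\mu) = \chi\bar\mu + (1-\chi)(\bar\mu^*_0 + \epsilon\mathcal B) = \chi\bar\mu + (1-\chi)\bar\mu^*_0 + (1-\chi)\epsilon\,\mathcal B,
\end{equation}
as claimed. There is no real obstacle here; the only point requiring care is the absolute continuity condition, which I would dispose of in a sentence or two via the translation argument above.
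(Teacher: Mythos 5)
Your argument is the same as the paper's: decompose $\Lambda(\mu)$ as $\chi\mu$ plus $(1-\chi)$ times the Wasserstein-ball set, use linearity of the mean map over convex combinations, and invoke \Cref{prop:Wasserstein}; the paper's proof is exactly this two-step reduction. The one place where you go beyond the paper---worrying about the qualifier $\mu^*\ll\mu^*_0$---is well spotted, since \Cref{prop:Wasserstein} is stated without it and its achievability construction uses a translate of the reference measure, which is in general \emph{not} absolutely continuous with respect to it. However, your proposed repairs do not actually close this gap: ``$\mu^*\ll\mu^*_0$ up to reparametrization'' is not a meaningful relaxation, and taking $W_p$-limits of absolutely continuous perturbations exits the constraint set, so it only gives density of the attainable means in the ball rather than the claimed equality (indeed, if $\mu^*_0$ is a Dirac mass, the restriction forces $\mu^*=\mu^*_0$ and the mean set degenerates to a point). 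The paper silently drops the $\ll$ qualifier and applies \Cref{prop:Wasserstein} verbatim, so modulo that shared imprecision your main argument is complete and identical to the paper's.
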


When $\chi$ is allowed to vary as well, $\bar\Lambda$ takes a rounded cylindrical shape which is perhaps not as convenient to fit in an ellipsoid.

\subsection{Proofs for the Wasserstein distance}

To formally set things, consider $p\geq1$, and two Borel probability measures $P,Q$ on $\mathbb R^n$ with finite $p$-th moment. Denote by $\Gamma(P,Q)$ the space of Borel measures on $\mathbb R^n\times\mathbb R^n$ with marginals $P,Q$ respectively. In this article, we denote by $\|.\|$ the standard Euclidean norm on $\mathbb R^n$. The $p$-Wasserstein distance between $P$ and $Q$ is defined as
\begin{equation}
	W_p(P,Q) = \inf_{\pi\in\Gamma(P,Q)} ~\left(\int_{\mathbb R^n\times\mathbb R^n} \|x-y\|^p\mathrm d\pi(x,y)\right)^{\frac1p}.
\end{equation}

These statistical distances find their origin in optimal transport: the quantity $W_p(P,Q)^p$ corresponds to the minimal cost of displacing a pile of sand distributed as $P$ into another pile distributed as $Q$, where displacing a mass from $x$ to $y$ costs $\|x-y\|^p$. In order to prove \Cref{prop:Wasserstein}, we resort to the following intuitive lemma.

\begin{lemma} \label{lem:wassersteinmean}
	Denoting the mean of $P,Q$ by $\bar P,\bar Q$,	
	\begin{equation}
		W_p(P,Q) \geq \|\bar P-\bar Q\|,
	\end{equation}
	with equality if $Q$ is a translation of $P$.
\end{lemma}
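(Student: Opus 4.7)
The plan is to establish the inequality by a direct application of Jensen's inequality to any coupling, then exhibit an explicit translation-coupling that realizes equality.

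For the lower bound, I would start from an arbitrary coupling $\pi \in \Gamma(P,Q)$ and view $(X,Y)$ as a random pair distributed according to $\pi$. Since $\pi$ has marginals $P$ and $Q$, we have $\mathbb E[X] = \bar P$ and $\mathbb E[Y] = \bar Q$, so $\mathbb E[X-Y] = \bar P - \bar Q$. The map $z \mapsto \|z\|^p$ is convex on $\mathbb R^n$ for $p \geq 1$ (as the composition of the convex norm with the convex nondecreasing function $t \mapsto t^p$ on $[0,\infty)$), hence Jensen's inequality for vector-valued random variables gives
\begin{equation}
\|\bar P - \bar Q\|^p = \|\mathbb E[X-Y]\|^p \leq \mathbb E[\|X-Y\|^p] = \int \|x-y\|^p \,\mathrm d\pi(x,y).
\end{equation}
Taking $p$-th roots and then an infimum over $\pi \in \Gamma(P,Q)$ yields $\|\bar P - \bar Q\| \leq W_p(P,Q)$.

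For the equality case, suppose $Q$ is a translation of $P$, i.e.\ $Q = T_\# P$ where $T(x) = x + v$ for some $v \in \mathbb R^n$. Then $\bar Q = \bar P + v$, so $\|\bar P - \bar Q\| = \|v\|$. To produce a matching upper bound, I would introduce the deterministic coupling $\pi = (\mathrm{id}, T)_\# P$, which lies in $\Gamma(P,Q)$ by construction and satisfies $y - x = v$, $\pi$-almost surely. This yields
\begin{equation}
W_p(P,Q) \leq \left(\int \|x-y\|^p \,\mathrm d\pi(x,y)\right)^{1/p} = \|v\| = \|\bar P - \bar Q\|,
\end{equation}
which combined with the lower bound gives equality.

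There is no real obstacle: the only mildly subtle step is invoking Jensen's inequality in its vector-valued form, but this follows either by testing against linear functionals (applying scalar Jensen to $\langle u, X-Y \rangle$ for $u$ a unit vector achieving $\|\mathbb E[X-Y]\|$) or directly from convexity of $\|\cdot\|^p$. Everything else is essentially a definition-unwrapping.
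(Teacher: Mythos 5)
Your proposal is correct and follows essentially the same route as the paper's proof: Jensen's inequality applied to the convex map $(x,y)\mapsto\|x-y\|^p$ for the lower bound, and the deterministic translation coupling (which the paper writes as $\mathrm d^2\pi(x,y)=\mathrm dP(x)\,\mathrm d\delta_{x-x_0}(y)$) for the equality case. No gaps.
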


\begin{proof}[Proof of \Cref{lem:wassersteinmean}]
	Let $\pi\in\Gamma(P,Q)$. As the map $(x,y)\mapsto \|x-y\|^p$ is convex, Jensen's inequality yields
	\begin{align}
		&\int_{\mathbb R^n\times\mathbb R^n} \|x-y\|^p\mathrm d\pi(x,y) \\
		&\quad\geq	\left\| \int_{\mathbb R^n\times\mathbb R^n} x\mathrm d\pi(x,y) - \int_{\mathbb R^n\times\mathbb R^n} y\mathrm d\pi(x,y) \right\|^p \\
		&\quad=\|\bar P-\bar Q\|^p.
	\end{align}
	Therefore, as announced,
	\begin{equation}
		W_p(P,Q) \geq \|\bar P-\bar Q\|.
	\end{equation}
	
	If $\mathrm dQ(y) = \mathrm dP(y+x_0)$, we may consider $\pi$ defined by,
	\begin{equation}
		\mathrm d^2\pi(x,y) = \mathrm dP(x) \mathrm d\delta_{x-x_0}(y).
	\end{equation}
	Of course, fixing $A\subset\mathbb R^n$ measurable,
	\begin{align}
		\pi(A\times\mathbb R^n)
		&= \int_A \int_{\mathbb R^n} \mathrm d\delta_{x-x_0}(y)\mathrm dP(x)
		= \int_A \mathrm dP(x) \\
		&= P(A) \\
		\pi(\mathbb R^n\times A)
		&= \int_{\mathbb R^n} \int_A \mathrm d\delta_{x-x_0}(y) \mathrm dP(x) \\
		&= \int_{\mathbb R^n} \mathds 1_A(x-x_0) \mathrm dQ(x-x_0) \\
		&= Q(A),
	\end{align}
	so $\pi\in\Gamma(P,Q)$. On the other hand,
	\begin{align}
		&\int_{\mathbb R^n\times\mathbb R^n} \|x-y\|^p\mathrm d\pi(x,y) \\
		&\quad=\int_{\mathbb R^n}\int_{\mathbb R^n} \|x-y\|^p \mathrm d\delta_{x-x_0}(y) \mathrm dP(x) \\
		&\quad=\int_{\mathbb R^n}\|x_0\|^p \mathrm dP(x) = \|x_0\|^p.
	\end{align}
	As a result,
	\begin{equation}
		W_p(P,Q) \leq \|x_0\| = \|\bar P-\bar Q\|,
	\end{equation}
	so that $W_p(P,Q) = \|\bar P-\bar Q\|$.
\end{proof}

\begin{proof}[Proof of \Cref{prop:Wasserstein}]
	The proof is by double inclusion. Using the first implication of \Cref{lem:wassersteinmean},
\begin{align}
	\bar\Lambda(\bar\mu)
	&= \{\bar\mu', ~W_p(\mu',\mu)\leq\epsilon\} \\
	&\subset \{\bar\mu', ~\|\mu'-\mu\|\leq\epsilon\} \\
	&= \bar\mu + \epsilon\mathcal B.
\end{align}
	On the other hand, let $v = \bar\mu + \epsilon u$ belong to this latter set, i.e., with $u\in\mathcal B$. We may consider the distribution $\mu$ shifted by $\epsilon u$. Surely, by \Cref{lem:wassersteinmean},
	\begin{equation}
		W_p(\mu',\mu) = \|\bar\mu' - \bar\mu \| = \epsilon \|u\| \leq \epsilon,
	\end{equation}
	so $\bar\mu' = \bar\mu + \epsilon u = v \in \bar\Lambda(\bar\mu)$.
\end{proof}

\subsection{Proofs for the $f$-divergences}

R\'enyi divergences can be expressed as a composition of an $f$-divergence by an increasing function. Explicitly, for $\alpha>1$,
\begin{equation}
	\ra{P}{Q}
	= \frac1{\alpha-1} \ln (1+\fdiv{f_\alpha}{P}{Q}),
\end{equation}
with $f_\alpha(t) = t^\alpha-1$ and for $\alpha\in(0,1)$,
\begin{equation}
	\ra{P}{Q}
	= \frac1{1-\alpha} \ln \frac1{1-\fdiv{f_\alpha}{P}{Q}},
\end{equation}
with $f_\alpha(t) = 1-t^\alpha$.

\begin{lemma}
\label{lem:changeofvariable}
	Let $\phi\colon \mathcal X \to \mathcal Y$ be a measurable injection, $P,Q$ be probability measures on $\mathcal X$ such that $P\ll Q$, and $f$ convex with $f(1) = 0$. Then, the $f$-divergence of the pushforward of $P$ by $\phi$ from the pushforward of $Q$ by $\phi$ is the $f$-divergence of $P$ from $Q$:
\begin{equation}
	\df{\phi_*P}{\phi_*Q} = \df{P}{Q}.
\end{equation}
\end{lemma}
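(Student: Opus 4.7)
The plan is to unpack $\df{\phi_*P}{\phi_*Q}$ via its integral definition, identify the Radon--Nikodym derivative $\frac{\mathrm d\phi_*P}{\mathrm d\phi_*Q}$ in closed form, and then reduce back to an integral on $\mathcal X$ using the standard change-of-variables formula for pushforward measures. Injectivity of $\phi$ is what makes this work cleanly, since it allows $\phi^{-1}$ to be defined unambiguously on the image $\phi(\mathcal X)$.

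First I would verify that $P\ll Q$ implies $\phi_*P\ll\phi_*Q$: if $\phi_*Q(B)=0$ then $Q(\phi^{-1}(B))=0$, hence $P(\phi^{-1}(B))=\phi_*P(B)=0$. This ensures $\df{\phi_*P}{\phi_*Q}$ is well defined. Next, setting $g=\frac{\mathrm dP}{\mathrm dQ}$, I would show that $g\circ\phi^{-1}$, viewed on $\phi(\mathcal X)$, is a version of $\frac{\mathrm d\phi_*P}{\mathrm d\phi_*Q}$. The verification is a one-line pushforward computation: for any measurable $B\subset\mathcal Y$,
\begin{equation}
	\int_B g\circ\phi^{-1}\,\mathrm d\phi_*Q
	= \int_{\phi^{-1}(B)} g\,\mathrm dQ
	= P(\phi^{-1}(B))
	= \phi_*P(B).
\end{equation}
A second application of the pushforward change of variables then collapses $\int_{\mathcal Y} f\circ g\circ\phi^{-1}\,\mathrm d\phi_*Q$ into $\int_{\mathcal X} f\circ g\,\mathrm dQ$, which is $\df{P}{Q}$ by definition.

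The main obstacle is purely measurability bookkeeping: $g\circ\phi^{-1}$ has to be a measurable function on $\mathcal Y$ (or on a $\phi_*Q$-conull subset), and one needs $\phi(\mathcal X)$ to be $\phi_*Q$-measurable so that extending $g\circ\phi^{-1}$ arbitrarily off the image has no effect on either integral. In the standard-Borel setting tacitly used throughout the paper these points are automatic. It is worth emphasizing that injectivity is \emph{essential}: without it, only the data-processing inequality $\df{\phi_*P}{\phi_*Q}\leq\df{P}{Q}$ survives, since $g$ would have to be replaced by its conditional expectation along the fibres of $\phi$.
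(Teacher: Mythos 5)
Your proposal is correct and follows essentially the same route as the paper: identify the Radon--Nikodym derivative of the pushforward measures and reduce the integral over $\mathcal Y$ to one over $\mathcal X$ via the change-of-variables formula, with injectivity supplying $\phi^{-1}(\phi(A))=A$. The only cosmetic difference is that the paper verifies the pullback identity $\frac{\mathrm d(\phi_*P)}{\mathrm d(\phi_*Q)}\circ\phi = \frac{\mathrm dP}{\mathrm dQ}$ $Q$-a.e.\ on $\mathcal X$ rather than exhibiting $g\circ\phi^{-1}$ as a density on $\mathcal Y$, which sidesteps the measurability of $g\circ\phi^{-1}$ and of $\phi(\mathcal X)$ that you rightly flag as the one bookkeeping point.
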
 

\begin{proof}[Proof of \Cref{lem:changeofvariable}]
	It is a straightforward change of variable. We first verify that $Q$-almost everywhere
	\begin{equation}
		\frac{\mathrm d(\phi_*P)}{\mathrm d(\phi_*Q)} \circ \phi = \frac{\mathrm dP}{\mathrm dQ}.
	\end{equation}
	Let $A\subset \mathcal X$ be measurable, by injectivity $\phi^{-1}(\phi(A)) = A$ and so
	\begin{align}
		\int_A \frac{\mathrm d(\phi_*P)}{\mathrm d(\phi_*Q)} \circ \phi \,\mathrm dQ
		&= \int_{\phi(A)} \frac{\mathrm d(\phi_*P)}{\mathrm d(\phi_*Q)} \,\mathrm d(\phi_*Q) \\
		&= \phi_*P(\phi(A)) \\
		&= P(A).
	\end{align}
	Using this fact,
	\begin{align}
		\df{\phi_*P}{\phi_*Q}
		&= \int_{\mathcal Y} f\circ\frac{\mathrm d(\phi_*P)}{\mathrm d(\phi_*Q)} \,\mathrm d(\phi_*Q) \\
		&= \int_{\mathcal X} f\circ\frac{\mathrm d(\phi_*P)}{\mathrm d(\phi_*Q)} \circ\phi \,\mathrm dQ \\
		&= \int_{\mathcal X} f\circ\frac{\mathrm dP}{\mathrm dQ} \,\mathrm dQ \\
		&= \df{P}{Q}.\qedhere
	\end{align}
\end{proof}

\begin{proof}[Proof of \Cref{prop:fdivergence}]
	Let $\mu$ be a projective belief, it is the result of message $y = \Sigma x$. As a result, $\mu$ is a distribution with support in $y + \ker \Sigma$. In particular, whenever $\mu' \ll \mu$, its support also lies in $y + \ker \Sigma$ and by convexity, $\bar\mu' \in y + \ker \Sigma$. Since $\mu\ll\mu$, $\bar\mu \in y + \ker\Sigma$ as well, so we conclude that
	\begin{equation}
		\bar\Lambda(\bar\mu) \subset \bar\mu + \ker\Sigma.
	\end{equation}
	
	Now, consider a rotation $O \in O(\ker \Sigma)$ that leaves $(\ker \Sigma)^\perp = \Ima \Sigma$ invariant. Proceed to a rotation of the space so that ``$x^* = Ox$ is the new $x$.'' The belief $O_*\mu$ is then the belief obtained when the prior is $O_*\nu = \nu$ and the message is $y=\Sigma x = \Sigma x^*$, in other words, it is $\mu$ itself: $O_*\mu=\mu$. In particular $\bar\mu$ is left invariant by all $O \in O(\ker \Sigma)$, thus $\bar\mu\in\Ima\Sigma$ and so $\bar\mu=y$.
	
	We are now in a position to show that $\bar\Lambda(\bar\mu)$ is invariant by $O(\ker \Sigma)$. Let $O\in O(\ker\Sigma)$ and $m \in \bar\Lambda(\bar\mu)$. This latter is the mean of some $\mu'\in\Lambda(\mu)$. In turn $O_*\mu' \ll O_*\mu=\mu$ also satisfies
	\begin{equation}
		\df{O_*\mu'}{\mu} = \df{O_*\mu'}{O_*\mu} = \df{\mu'}{\mu} \leq \epsilon,
	\end{equation}
	that is $O_*\mu' \in \Lambda(\mu)$ and so $O\bar\mu = Om \in \bar\Lambda(\bar\mu)$.
	
	All in all, this shows that
	\begin{equation}
		\bar\Lambda(\bar\mu) = \bar\mu + \delta (I_n-\Sigma) \mathcal B,
	\end{equation}
	where $\delta\geq0$ could be ``infinite'' and the ball $\mathcal B$ could actually be closed. This point matters less to Alice since the objective $w(.,\bar\mu)$ is continuous. 
\end{proof}

\subsection{Proofs for the costly update}

\begin{proof}[Proof of \Cref{prop:costlyupdate}]
	First rewrite the cost by completing the square,
	\begin{equation}
		u(a,\mu)
		= (a-a^*(\bar\mu))^\top R_{22} (a-a^*(\bar\mu)) + o,
	\end{equation}
	where $o$ is a constant. As a result,
	\begin{align}
		\{a, ~u(a,\mu) \leq u(a^*(\bar\mu),\mu) + \epsilon\}
		&= a^*(\bar\mu) + \sqrt\epsilon \sqrt{R_{22}}^{-1} \mathcal B \\
		&= a^*\!\left(\bar\mu + \sqrt\epsilon R_{21}^{-1} \sqrt{R_{22}} \mathcal B\right).
	\end{align}
\end{proof}

\subsection{Proof for the parametric models}
\label{sec:appendixparametric}

\begin{proof}[Proof of \Cref{prop:mismatchedprior}]
	We first explain the formula we had announced. Bayes' rule is better characterized in terms of joint probabilities. The distribution $\tau$ of posteriors $\mu_y$ is the essentially unique one such that
	\begin{equation}
		\mathrm d\sigma_x(y) \mathrm d\nu(x) = \mathrm d\mu_y(x) \mathrm d\tau(y).
	\end{equation}
	A nitty-gritty discussion would dive into the technical details of this definition, where notably the disintegration theorem would be of great help (see \cite{disintegration}), but we choose to remain informal for the proof of this relatively less important proposition. In this context then,
	\begin{equation}
		\frac{\mathrm d\nu'}{\mathrm d\nu}(x) = \frac{\mathrm d\tau'}{\mathrm d\tau}(y) \frac{\mathrm d\mu'_y}{\mathrm d\mu_y}(x).
	\end{equation}
	The formula then follows from the fact that $\mu'_y$ is a probability measure.
	
	Let then $\nu'\ll\nu$ be such that $\epsilon(\nu',\nu)\leq\epsilon$. The mean difference between the distorted belief and the Bayesian belief is
	\begin{equation}
		\overline{\mathrm dD_{\nu'}(\mu)} - \bar\mu = \int_{\mathbb R^n} x \left(\frac{\frac{\mathrm d\nu'}{\mathrm d\nu}(x)}{\int_{\mathbb R^n} \frac{\mathrm d\nu'}{\mathrm d\nu}\,\mathrm d\mu}-1\right) \,\mathrm d\mu(x).
	\end{equation}
	The condition $\epsilon(\nu',\nu)\leq\epsilon$ implies that the bracketed term has magnitude at most $\sqrt{2\epsilon+\epsilon^2}$. The Cauchy-Schwarz inequality then yields
	\begin{equation}
		\left\| \overline{\mathrm dD_{\nu'}(\mu)} - \bar\mu \right\|
		\leq \sqrt{2\epsilon+\epsilon^2} \sqrt{\Tr \Sigma_\mu},
	\end{equation}
	which establishes the inclusion.
\end{proof}

\begin{proof}[Proof of \Cref{prop:affinedistortion}]
	Observe that
	\begin{equation}
		\Lambda(\mu) = \chi \mu + (1-\chi) \{ \mu^*, ~\mu^* \ll \mu^*_0, ~W_p(\mu^*,\mu^*_0)\leq\epsilon\},
	\end{equation}
	the last set is none other than $\Lambda$ in the case of Wasserstein distance. In turn,
	\begin{equation}
		\bar\Lambda(\mu) 
		= \chi \bar\mu + (1-\chi) \bar\mu_0^* 
		+ (1-\chi)\epsilon\,\mathcal B,
	\end{equation}
	as stated.
\end{proof}

\section{The non-Bayesian programs}

\subsection{Rewriting the true program}

\begin{proof}[Proof of \Cref{lem:nonneg}]
	Substituting $a=B\tilde x+b$ yields
	\begin{equation}
		\begin{bmatrix} x \\ a \end{bmatrix}^\top M \begin{bmatrix} x \\ a \end{bmatrix} + p^\top \begin{bmatrix} x \\ a \end{bmatrix} + q
		= \begin{bmatrix} x\\\tilde x \end{bmatrix} Q \begin{bmatrix} x\\\tilde x \end{bmatrix} + (p')^\top \begin{bmatrix} x\\\tilde x \end{bmatrix} + q',
	\end{equation}
	where
	\begin{align}
		Q &= \begin{bmatrix} M_{11} & M_{12} B \\ B^\top M_{21} & B^\top M_{22} B \end{bmatrix} \\
		p' &= \begin{bmatrix} p_1 + 2M_{12}b \\ B^\top p_2 + 2B^\top M_{22}b \end{bmatrix} \\
		q' &= q + b^\top M_{22} b + p_2^\top b.
	\end{align}
	Let $u$ be a vector of dimension $2n$. Considering the above nonnegative form at $(x,\tilde x) = t u$ with any $t$ yields that $u^\top Q u \geq 0$. In turn, $Q\succeq0$. Moreover, if $Q u = 0$ for some vector $u$, then considering the above form with again $(x,\tilde x) = t u$ yields that $(p')^\top u =0$ as well. In other words,
	\begin{equation}
		\ker Q \subset (p')^\perp,
	\end{equation}
	thus
	\begin{equation}
		p' \in (\ker Q)^\perp = \Ima Q^\top.
	\end{equation}
	There thus exists some vector $l$ so that
	\begin{equation}
		p' = -2Q^\top l,
	\end{equation}
	and so,
	\begin{equation}
		\begin{bmatrix} x \\ a \end{bmatrix}^\top M \begin{bmatrix} x \\ a \end{bmatrix} + p^\top \begin{bmatrix} x \\ a \end{bmatrix} + q
		= \left(\begin{bmatrix} x\\\tilde x \end{bmatrix} - l\right)^\top Q \left(\begin{bmatrix} x\\\tilde x \end{bmatrix} - l \right) + r,
	\end{equation}
	having let
	\begin{equation}
		r = q' - l^\top Q l.
	\end{equation}
	Finally, considering the above nonnegative form with $(x,\tilde x)=l$ yields $r\geq0$.
\end{proof}

Akin to \Cref{lem:rewriteTamura}, Alice first rewrites the objective of her program in the non-Bayesian case, this is the object of \Cref{lem:rewrite}. The proof uses the reductions $\bar\nu = 0$ and $\Sigma_\nu=I_n$.

\begin{proof}[Proof of \Cref{lem:rewrite}]
	Begin by rewriting the objective of \eqref{prog:hatv_lin} being maximized,
	\begin{align}
		v(a(\bar\mu'),\mu)
		=&~ \mathbb E_\mu\!\left[\left(\begin{bmatrix}x\\\bar\mu'\end{bmatrix}-l\right)^\top Q \left(\begin{bmatrix}x\\\bar\mu'\end{bmatrix}-l\right)\right] + r \\
		=&~ \eta^\top Q_{22} \eta
		+ 2\begin{bmatrix}0\\\eta\end{bmatrix}^\top Q  \left(\begin{bmatrix}\bar\mu\\\bar\mu\end{bmatrix}-l\right) \\
		&+ \mathbb E_\mu\!\left[\left(\begin{bmatrix}x\\\bar\mu\end{bmatrix}-l\right)^\top Q \left(\begin{bmatrix}x\\\bar\mu\end{bmatrix}-l\right)\right] + r.
	\end{align}
	Clearly, this depends quadratically on $\eta = \bar\mu' - \bar\mu$. The quadratic coefficient is constant, and the linear coefficient solely depend on $\bar\mu$. If we average the coefficient that is constant with respect to $\eta$, we obtain the Bayesian objective
	\begin{align}
		&\mathbb E\!\left[\left(\begin{bmatrix}x\\\bar\mu\end{bmatrix}-l\right)^\top Q \left(\begin{bmatrix}x\\\bar\mu\end{bmatrix}-l\right)\right] + r \\
		&\quad= \mathbb E\!\left[\begin{bmatrix}x\\\bar\mu\end{bmatrix}^\top Q \begin{bmatrix}x\\\bar\mu\end{bmatrix}\right] 
		+ l^\top Q l
		+ r \\
		&\quad= \Tr(D\Sigma) + \Tr Q_{11} + l^\top Q l + r.
	\end{align}
	where again $\Sigma = \mathbb E_{\bar\tau}[(\bar\mu-\bar\nu)(\bar\mu-\bar\nu)^\top]$ is the covariance of the estimate, as before. On the other hand, we may develop the linear and quadratic term in $\eta$,
	\begin{equation}
			w(\eta, \bar\mu)
			= 2((Q_{21}+Q_{22})\bar\mu -Q_{21}l_1-Q_{22}l_2)^\top \eta +  \eta^\top Q_{22} \eta,
	\end{equation}
	as stated.
\end{proof}

\subsection{The no-information theorems}

\begin{proof}[Proof of \Cref{thm:noinf}]
	Following \Cref{lem:trsols}, $\Sigma=0$ is a solution of \eqref{prog:Bayes_linquad} if and only if $P_D^{<0} = 0$, that is if and only if $D\succeq0$. In this case, we like to rewrite the objective of \eqref{prog:lin_almost_Bayes} as
	\begin{equation}
		\mathbb{E}_{\bar\tau}\!\left[\bar\mu^\top D \bar\mu + c+\max_{\eta\in C\mathcal B} ~w(\eta, \bar\mu)\right].
	\end{equation}
	All the terms inside the expectation are convex in $\bar\mu$, this rather clear for the two first ones. Regarding the last term, let $\bar\mu_1,\bar\mu_2 \in \mathbb R^n$ and $\lambda\in[0,1]$, we have
	\begin{align}
		&\max_{\eta\in C\mathcal B} ~w(\eta, \lambda\bar\mu_1+(1-\lambda)\bar\mu_2) \\
		&\quad=\max_{\eta\in C\mathcal B} ~\lambda w(\eta, \bar\mu_1) + (1-\lambda) w(\eta, \bar\mu_2) \\
		&\quad\leq\lambda \max_{\eta\in C\mathcal B} ~w(\eta, \bar\mu_1) + (1-\lambda) \max_{\eta\in C\mathcal B} ~w(\eta, \bar\mu_2).
	\end{align}
	Convexity being established, we may use Jensen's inequality,
	\begin{align}
		&\mathbb{E}_{\bar\tau}\!\left[\bar\mu^\top D \bar\mu + c+\max_{\eta\in C\mathcal B} ~w(\eta, \bar\mu)\right] \\
		&\quad\geq\mathbb{E}_{\delta_{\bar\nu}}\!\left[\bar\mu^\top D \bar\mu + c+\max_{\eta\in C\mathcal B} ~w(\eta, \bar\mu)\right].
	\end{align}
	The distribution $\delta_{\bar\nu}$ informally corresponds to substituting $\bar\mu$ with its average, $\bar\nu$. This distribution is the result of the no-information policy, for which the estimate is constantly $\bar\nu$.
\end{proof}

\begin{proof}[Proof of \Cref{thm:inf}]
Consider the nested program of the error term of \eqref{prog:lin_almost_Bayes}. Surely
\begin{equation}
	\max_{\eta\in C\mathcal B} ~w(\eta,\bar\mu) 
	= \max_{\eta\in \mathcal B} ~2v^\top \eta +  \eta^\top C^\top Q_{22} C \eta
\end{equation}
where
\begin{equation}
	v = C^\top((Q_{21}+Q_{22})\bar\mu -Q_{21}l_1-Q_{22}l_2).
\end{equation}
The largest eigenvalue of $C^\top Q_{22} C$ is $\bar\lambda$, let $P$ be the orthogonal projection on the corresponding eigenspace. If $Pv\neq0$, consider the argument $\eta = \nicefrac{Pv}{\|Pv\|}$, it yields
\begin{equation}
	\max_{\eta\in C\mathcal B} ~w(\eta,\bar\mu) 
	\geq \bar\lambda + 2\|Pv\|.
\end{equation}
If $Pv=0$, considering any $\eta$ of unit length in the principal eigenspace (i.e., such that $P\eta=\eta$) as an argument yields the same lower bound.

For a converse bound, we first resort to \Cref{lem:sproc}, defined and proved soon below. With the help of an S-procedure (see \cite{boyd1994linear} for a survey), it shows that
\begin{equation}
	\max_{\eta\in C\mathcal B} ~w(\eta,\bar\mu) 
	= \inf_{\lambda>\bar\lambda} ~\lambda + v^\top (\lambda I_n - C^\top Q_{22} C)^{-1} v.
\end{equation}
Considering the argument $\bar\lambda + \|Pv\|$ when $Pv\neq0$ yields
\begin{align}
	&\max_{\eta\in C\mathcal B} ~w(\eta,\bar\mu) \\
	&\quad\leq \bar\lambda + \|Pv\| + v^\top (\bar\lambda I_n - C^\top Q_{22} C + \|Pv\| I_n)^{-1} v \\
	&\quad= \bar\lambda + \|Pv\| + (Pv)^\top (\bar\lambda I_n - C^\top Q_{22} C + \|Pv\| I_n)^{-1} Pv \\
	&\quad\phantom{=}+ (v-Pv)^\top (\bar\lambda I_n - C^\top Q_{22} C + \|Pv\| I_n)^{-1} (v-Pv) \\
	&\quad\leq \bar\lambda + 2\|Pv\| + \frac{\|(I_n-P)v\|^2}{\bar\lambda-\bar\lambda_2}.
\end{align}
When $Pv=0$, for $\lambda>\bar\lambda$,
\begin{equation}
	\lambda + v^\top (\lambda I_n - C^\top Q_{22} C)^{-1} v
	\leq \lambda + \frac{\|v-Pv\|^2}{\lambda-\bar\lambda_2},
\end{equation}
and therefore, letting $\lambda$ tend to $\bar\lambda$, we obtain the same bound as before. 

As $4\mathbb E[\|v\|^2]=f+\Tr E$, taking the expectation of both bounds yields
\begin{align}
	\bar\lambda + 2\mathbb E[\|Pv\|] + \frac{f + \Tr E}{4(\bar\lambda-\bar\lambda_2)}
	&\geq \mathbb E_{\bar\tau}\!\left[\max_{\eta\in C\mathcal B} ~w(\eta,\bar\mu) \right]\\
	&\geq \bar\lambda + 2\mathbb E[\|Pv\|].
\end{align}
The no-information policy costs at least
\begin{equation}
	c + \bar\lambda + 2\|P\mathbb E[v]\|.
\end{equation}
On the other hand, there exists $u$ unit-vector such that
\begin{equation}
	PC^\top(Q_{21}+Q_{22})u = 0
\end{equation}
since that matrix is singular. The policy projective policy $y = uu^\top x$ induces the estimate $\bar\mu = (u^\top x) u$ and thus costs at most
\begin{align}
	&c + u^\top D u + \bar\lambda + 2\|P\mathbb E[v]\| + \frac{f + \Tr E}{4(\bar\lambda-\bar\lambda_2)}\\
	&\quad< c + \bar\lambda + 2\|P\mathbb E[v]\|.\qedhere
\end{align}
\end{proof}

\Cref{thm:noinf} highlights situations where Alice does not want to share any information and \Cref{thm:inf} where Alice's best course of action involves some signaling. The respective conditions are mutually exclusive, of course, but we can show that Alice ceases to share information with the pessimistic approximation, in cases where, optimally, she would still transmit some information.

\begin{prop}
\label{prop:noinf_lim}
	Whenever
	\begin{equation}
		E \succeq \left(\left(\sqrt f-\Tr (DP_D^{<0})\right)^2 - f\right) I_n,
	\end{equation}
	$\Sigma=0$ is a solution of \eqref{prog:pessimistic}. 
\end{prop}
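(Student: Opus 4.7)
My plan is to reduce the problem to checking that the value of \eqref{prog:pessimistic} at $\Sigma=0$ is no larger than the value at any other feasible $\Sigma$. First, observe that the objective of \eqref{prog:pessimistic} is concave in $\Sigma$: $\Tr(D\Sigma)$ is linear, and $\sqrt{f+\Tr(E\Sigma)}$ is a concave function (square root) of an affine function of $\Sigma$. Hence the minimum over the compact convex set $\mathcal S$ is attained at an extreme point, i.e., an orthogonal projection matrix. Since $\Sigma=0$ is itself extreme in $\mathcal S$, it suffices to show that for every orthogonal projection $P$,
\begin{equation}
    \Tr(DP) + \sqrt{f+\Tr(EP)} \geq \sqrt f .
\end{equation}

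I would then perform a case analysis on the sign of $\Tr(DP)$. Note first that by \Cref{lem:trsols}, $\Tr(DP_D^{<0})\leq0$, hence the constant $K \triangleq (\sqrt f - \Tr(DP_D^{<0}))^2 - f$ is nonnegative, so the hypothesis $E \succeq K I_n$ is consistent with $E \succeq 0$. When $\Tr(DP)\geq0$, the inequality is immediate since $\Tr(EP)\geq0$ gives $\sqrt{f+\Tr(EP)}\geq\sqrt f$. Otherwise, letting $\tau = -\Tr(DP) > 0$, I can rearrange and square both (nonnegative) sides to find the equivalent requirement
\begin{equation}
    \Tr(EP) \geq (\sqrt f + \tau)^2 - f .
\end{equation}

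The key step is then bounding both sides. Again by \Cref{lem:trsols}, $\tau = -\Tr(DP) \leq -\Tr(DP_D^{<0})$, so
\begin{equation}
    (\sqrt f + \tau)^2 - f \leq \bigl(\sqrt f - \Tr(DP_D^{<0})\bigr)^2 - f = K.
\end{equation}
Moreover, since $P$ is a nonzero projection (the case $P=0$ is trivial), $\Tr(P) = \rk P \geq 1$. Combining the hypothesis $E \succeq K I_n$ with $P\succeq0$ yields $\Tr(EP) \geq K \Tr(P) \geq K$, closing the inequality and concluding that $\Sigma=0$ is indeed a solution.

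The main point I would be careful about is the squaring step, which requires both sides to be nonnegative; this is guaranteed once we restrict to $\tau>0$. A secondary point is verifying that $K\geq0$ so that the spectral hypothesis on $E$ is not vacuous. Beyond those, the argument is essentially a sharp concavity-plus-case-analysis that exploits the fact that $P_D^{<0}$ attains the minimum of $\Tr(D\,\cdot)$ over $\mathcal S$.
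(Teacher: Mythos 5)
Your proof is correct and follows the same route as the paper's: concavity reduces the problem to orthogonal projection matrices, and the hypothesis $E \succeq K I_n$ (with $K = (\sqrt f - \Tr(DP_D^{<0}))^2 - f$) forces the objective at any nonzero projection to be at least its value $c+\bar\lambda+\sqrt f$ at $\Sigma=0$. The paper states the key inequality without detail; your case analysis and the chain $\Tr(EP)\geq K\Tr(P)\geq K\geq(\sqrt f+\tau)^2-f$ is exactly the justification it leaves implicit.
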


We remind the reader that $P_D^{<0}$ denotes the orthogonal projection on the negative eigenspace of $D$, so that $\Tr(P_D^{<0} D) \leq0$. This proposition states that provided $E$ is large enough, not sending information is optimal among projective policies, from a pessimistic point of view. One can interpret this result in the light of the parametrized hypothesis presented in \Cref{thm:monotonicity}, i.e., of shape $\epsilon^2 CC^\top$. When $E\succ0$, the condition of \Cref{prop:noinf_lim} is satisfied for $\epsilon$ large enough since the left-hand side grows with order $\epsilon^2$, whereas the right-hand side grows with order $\epsilon$. As a result, the solution of \eqref{prog:pessimistic} is $\Sigma=0$, when Bob is not Bayesian enough. 

This contrasts with \Cref{thm:inf} whose condition is independent of $\epsilon$, and hence insures that there are cases where Alice benefits from signaling no matter the value of $\epsilon$. This shows a limit of the Pessimistic Program \eqref{prog:pessimistic} when $\epsilon$ is very large.

\begin{proof}[Proof of \Cref{prop:noinf_lim}]
	By concavity, \eqref{prog:pessimistic} admits a solution which is an orthogonal projection matrix. For $X$ such matrix with rank $\rk X \geq 1$,
	\begin{equation}
		\Tr(DX) + c + \sqrt{f+\Tr(EX)} \geq c + \sqrt f.
	\end{equation}
	The latter being the value of \eqref{prog:pessimistic} at $\Sigma=0$, we deduce that $0$ is a solution.
\end{proof}

\subsection{Technical lemmas}

The first technical lemma consist in turning the inner maximization of \eqref{prog:lin_almost_Bayes} into a univariate convex program; this is the object of the following lemma.

\begin{lemma} \label{lem:sproc}
	Given $Q$ a positive semi-definite matrix, $C$ a matrix and $v$ a vector of appropriate dimensions,
	\begin{equation}
		\max_{\eta\in \mathcal B} ~\eta^\top Q \eta + 2v^\top \eta
		= \inf_{\lambda>\overline\lambda(Q)} ~\lambda + v^\top (\lambda I_n - Q)^{-1} v.
	\end{equation}
\end{lemma}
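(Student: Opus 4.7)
The plan is to prove the equality via Lagrangian duality for a single-constraint QCQP, in three steps.

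First, I would establish the easy direction ($\leq$) via weak duality, with no reference to the S-procedure needed. For any $\eta\in\mathcal{B}$ and any $\lambda>\overline\lambda(Q)$, we have $\|\eta\|^2\leq 1$ and $\lambda>0$, so
\begin{equation}
\eta^\top Q\eta + 2v^\top\eta
\leq \eta^\top Q\eta + 2v^\top\eta + \lambda(1-\|\eta\|^2)
= \lambda - \eta^\top(\lambda I_n - Q)\eta + 2v^\top\eta.
\end{equation}
The right-hand side, being a strictly concave quadratic in $\eta$ (since $\lambda I_n - Q \succ 0$), is maximized over all of $\mathbb{R}^n$ at $\eta = (\lambda I_n - Q)^{-1}v$, with value $\lambda + v^\top(\lambda I_n - Q)^{-1}v$. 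Taking the supremum over $\eta\in\mathcal{B}$ on the left and the infimum over $\lambda$ on the right yields the ``$\leq$'' direction.

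Second, for the reverse direction ($\geq$), I would invoke strong duality for QCQP with a single constraint. The Slater condition is trivially satisfied (take $\eta=0$), and the S-procedure (as recorded in \cite{boyd1994linear}) is known to be tight for a single quadratic constraint. This gives that the primal maximum equals the Lagrangian dual value $\inf_{\lambda\geq 0}\sup_\eta L(\eta,\lambda)$ with $L(\eta,\lambda) = \eta^\top Q\eta + 2v^\top\eta - \lambda(\|\eta\|^2-1)$. Computing the inner supremum: for $\lambda<\overline\lambda(Q)$ it is $+\infty$; for $\lambda>\overline\lambda(Q)$ it equals $\lambda + v^\top(\lambda I_n - Q)^{-1}v$ by the same strictly concave quadratic computation as above; at $\lambda=\overline\lambda(Q)$ it is either $+\infty$ or matches the right-hand limit, so restricting the outer infimum to $\lambda>\overline\lambda(Q)$ loses nothing.

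The main (and essentially only) obstacle is the boundary case where $v$ has no component in the top eigenspace of $Q$; there the dual infimum is only approached, not attained, as $\lambda\searrow\overline\lambda(Q)$, which is precisely why the statement uses $\inf$ rather than $\min$. If one prefers to avoid citing the S-procedure, an explicit primal construction also works: the dual function $g(\lambda)=\lambda+v^\top(\lambda I_n - Q)^{-1}v$ is smooth and convex on $(\overline\lambda(Q),\infty)$ with $g(\lambda)\to\infty$ at infinity; when the minimizer $\lambda^\star$ lies in the interior, the first-order condition $\|(\lambda^\star I_n - Q)^{-1}v\|^2=1$ shows that $\eta^\star=(\lambda^\star I_n - Q)^{-1}v$ is primal-feasible and gives primal value $g(\lambda^\star)$ by direct substitution, while in the degenerate case one pads $(\overline\lambda(Q)I_n - Q)^+ v$ with a top eigenvector of $Q$ of length $\sqrt{1 - \|(\overline\lambda(Q)I_n - Q)^+ v\|^2}$ to saturate the unit-ball constraint and recover the limiting value.
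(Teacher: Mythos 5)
Your proof is correct, and it is worth separating its two halves. Your main route rests on the same key fact as the paper's proof---the tightness of the S-procedure for a single quadratic constraint---but packages it differently: the paper works with the epigraph formulation, writes the condition ``$\eta\in\mathcal B \implies \eta^\top Q\eta+2v^\top\eta\le t$'' as an LMI $\lambda F_1\succeq F_2(t)$, perturbs to strict inequality, and then extracts the dual form via a Schur complement, whereas you go through the Lagrangian $L(\eta,\lambda)$ directly and compute the dual function by cases on $\lambda$ relative to $\overline\lambda(Q)$. Both arguments handle the same boundary subtlety (the infimum possibly being approached only as $\lambda\searrow\overline\lambda(Q)$ when $v$ is orthogonal to the top eigenspace), the paper via its $\epsilon$-perturbation of the LMI, you via the case analysis at $\lambda=\overline\lambda(Q)$. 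Your alternative primal construction, by contrast, is genuinely different and more self-contained: the weak-duality inequality gives ``$\le$'' with no appeal to the S-procedure, and exhibiting the feasible point $\eta^\star=(\lambda^\star I_n-Q)^{-1}v$ (or its padded version $(\overline\lambda(Q)I_n-Q)^+v$ plus a top eigenvector in the degenerate case) certifies ``$\ge$'' by direct substitution. What this buys is an elementary proof requiring only convexity of $g$ and the first-order condition $\|(\lambda^\star I_n-Q)^{-1}v\|=1$, at the cost of an explicit two-case analysis that the S-lemma absorbs automatically; the paper's LMI route, in exchange, fits naturally with the semidefinite-programming machinery used elsewhere in the article.
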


This can be readily applied to our problem with $C^\top Q_{22} C$ instead of $Q$ and
\begin{equation}
	v = C^\top((Q_{21}+Q_{22})\bar\mu -Q_{21}l_1-Q_{22}l_2).
\end{equation} 
After substitution,
\begin{equation}
	\max_{\eta\in C\mathcal B} ~w(\eta, \bar\mu)
	= \inf_{\lambda>\bar\lambda} ~\lambda + v^\top (\lambda I_n - C^\top Q_{22} C)^{-1} v.
\end{equation}
The appeal of this expression is that it is a one-dimensional convex program, thus for given parameters it is inexpensive to compute its value. Of course, this is merely a first step since this value is to be averaged over all $\bar\mu$. Another advantage of this program is that we can actually provide upper and lower bounds matched up to a constant ratio not so far from $1$.

\begin{lemma} \label{lem:prettybound1}
	Given $Q\succeq0$, $v\in\mathbb R^n$, for all $\beta\in[0,1]$ we have
	\begin{align}
		\overline\lambda(Q) + 2\|v\| 
		&\geq \inf_{\lambda>\overline\lambda(Q)} ~\lambda + v^\top (\lambda I_n - Q)^{-1} v \\
		&\geq (1-\beta^2)\overline\lambda(Q) + 2\beta\|v\|.
	\end{align}
\end{lemma}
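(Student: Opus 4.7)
The plan is to analyze the univariate function $g(\lambda) := \lambda + v^\top(\lambda I_n - Q)^{-1} v$ on $(\bar\lambda,\infty)$, where $\bar\lambda := \overline\lambda(Q)$, by sandwiching $(\lambda I_n - Q)^{-1}$ between multiples of $I_n$ in the L\"owner order. For the upper bound I only need to evaluate $g$ at one well-chosen point; for the lower bound I would first obtain a uniform bound on $g(\lambda)$ that is valid for every $\lambda>\bar\lambda$, and then take the infimum.

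For the upper bound, I would evaluate $g$ at $\lambda_0 = \bar\lambda + \|v\|$ (the case $v=0$ is immediate by letting $\lambda \downarrow \bar\lambda$). Writing $\lambda_0 I_n - Q = \|v\| I_n + (\bar\lambda I_n - Q)$ and using that $\bar\lambda I_n - Q \succeq 0$ (since $\bar\lambda$ is the largest eigenvalue of $Q$), the sum dominates $\|v\| I_n$, which inverts to $(\lambda_0 I_n - Q)^{-1} \preceq \|v\|^{-1} I_n$. Substituting yields $v^\top(\lambda_0 I_n - Q)^{-1}v \leq \|v\|$, hence $g(\lambda_0) \leq \bar\lambda + 2\|v\|$.

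For the lower bound I would instead use $\lambda I_n - Q \preceq \lambda I_n$ (valid since $Q \succeq 0$) to obtain $(\lambda I_n - Q)^{-1} \succeq \lambda^{-1} I_n$, and hence $g(\lambda) \geq \lambda + \|v\|^2/\lambda$ for every $\lambda>\bar\lambda$. The crux is then the algebraic split $\lambda + \|v\|^2/\lambda = (1-\beta^2)\lambda + (\beta^2\lambda + \|v\|^2/\lambda)$. Since $1-\beta^2 \geq 0$ and $\lambda > \bar\lambda$, the first summand exceeds $(1-\beta^2)\bar\lambda$, while AM-GM applied to the second gives $\beta^2\lambda + \|v\|^2/\lambda \geq 2\beta\|v\|$. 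Summing and taking the infimum over $\lambda>\bar\lambda$ yields $\inf g \geq (1-\beta^2)\bar\lambda + 2\beta\|v\|$.

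The main obstacle I anticipate is spotting the $\beta$-dependent split: without it one only recovers the two endpoint bounds ($\inf g \geq 2\|v\|$ from AM-GM alone at $\beta=1$, and $\inf g \geq \bar\lambda$ from monotonicity at $\beta=0$), but not the full parametrized family that is precisely what is leveraged later in \Cref{thm:boundany} and \Cref{thm:boundproj}. Beyond this single algebraic insight, the proof reduces to a textbook application of the L\"owner order and AM-GM.
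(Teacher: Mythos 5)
Your proof is correct, and the upper bound is obtained exactly as in the paper: evaluate at $\lambda_0=\overline\lambda(Q)+\|v\|$ and use $\bigl(\|v\|I_n+\overline\lambda(Q)I_n-Q\bigr)^{-1}\preceq\|v\|^{-1}I_n$. For the lower bound, however, you take a genuinely different and more elementary route. The paper also starts from $v^\top(\lambda I_n-Q)^{-1}v\geq\nicefrac{\|v\|^2}{\lambda}$ (in the guise of restricting to $Q$ a multiple of the identity), but then computes the \emph{exact} worst-case ratio
\begin{equation}
	\inf_{Q,\,v\neq0}\;\frac{\inf_{\lambda>\overline\lambda(Q)}\;\lambda+v^\top(\lambda I_n-Q)^{-1}v}{\overline\lambda(Q)+2\gamma\|v\|}
	=\frac{\sqrt{1+4\gamma^2}-1}{2\gamma^2}
\end{equation}
via a chain of reductions to a one-dimensional program in $t=\nicefrac{\|v\|}{\lambda}$, and only then reparametrizes $\gamma=\nicefrac{\beta}{(1-\beta^2)}$ to land on the stated form. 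Your split $\lambda+\nicefrac{\|v\|^2}{\lambda}=(1-\beta^2)\lambda+\bigl(\beta^2\lambda+\nicefrac{\|v\|^2}{\lambda}\bigr)$ followed by AM--GM and $\lambda>\overline\lambda(Q)\geq0$ reaches the same family of bounds in three lines, which is cleaner for the purpose of the lemma as stated. What the paper's longer computation buys is the additional information that the multiplicative constant is optimal over the class of all $(Q,v)$ --- i.e., the parametrized bound cannot be improved uniformly --- which is relevant context for the tightness discussion around $\bar\gamma$ later, but is not needed to establish the inequality itself.
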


Of course $\beta$ can be selected carefully so to match the bounds up to a constant, but we will rather set $\beta$ at our convenience later to combine better with further approximations. For Alice, this means that for all $\beta\in[0,1]$,
\begin{equation}
\tag{\ref{eq:together}}
\begin{aligned}
	(1-\beta^2)\bar\lambda + 2\beta\mathbb E[\|v\|] 
	&\leq \mathbb E\!\left[\max_{\eta\in C\mathcal B} ~w(\eta, \bar\mu)\right] \\
	&\leq \bar\lambda + 2\mathbb E[\|v\|],
\end{aligned}
\end{equation}
where
\begin{align}
	\bar\lambda &= \overline\lambda(C^\top Q_{22} C)\\
	v &= C^\top((Q_{21}+Q_{22})\bar\mu -Q_{21}l_1-Q_{22}l_2).
\end{align}

The next step, of course, is to obtain a good estimate of $\mathbb E[\|v\|]$. Jensen's inequality directly yields
\begin{equation}
	\mathbb E[\|v\|] \leq \sqrt{\mathbb E[\|v\|^2]},
\end{equation}
this can readily be used for the Pessimistic Program, since it only depends on $\Sigma$, $v$ being an affine function of $\bar\mu$. On the other hand, $\bar\mu$ could a priori take on any form, so we cannot hope for a good general converse inequality. Nonetheless, we may take $\beta=0$ and obtain a strong enough lower bound. Otherwise, we can restrict our attention to \emph{projective policies,} i.e., those for which $\hat x = Px$, this is the object of the following lemma.

\begin{lemma} \label{lem:prettybound2}
	When $v$ is an affine function of $x$,
	\begin{equation}
		\mathbb E[\|v\|]
	\geq \kappa \sqrt{\mathbb E[\|v\|^2]},
	\end{equation}
	denoting the first coordinate of $x$ by $x_1$, and
	\begin{equation}
		\kappa = \frac{\mathbb E[|x_1|]}{\sqrt{1+\mathbb E[|x_1|]^2}}.
	\end{equation}
\end{lemma}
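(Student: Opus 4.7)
The plan is to write $v = a + Bx$ for a fixed vector $a$ and matrix $B$, exploit the symmetry built into isotropy, and combine two lower bounds on $\mathbb{E}[\|v\|]$---one involving $\|a\|$, the other involving $\|B\|_F$---via an elementary max-versus-convex-combination trick. Under the standing assumptions $\bar\nu = 0$, $\Sigma_\nu = I_n$, a direct computation gives $\mathbb{E}[\|v\|^2] = \|a\|^2 + \|B\|_F^2$, where $\|\cdot\|_F$ denotes the Frobenius norm. Setting $m = \mathbb{E}[|x_1|]$, the statement rewrites as $\mathbb{E}[\|v\|]^2 \geq \kappa^2(\|a\|^2 + \|B\|_F^2)$ with $\kappa^2 = m^2/(1+m^2)$.

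First I would use isotropy in the form $x \overset{d}{=} -x$ to deduce $\mathbb{E}[\|a+Bx\|] = \mathbb{E}[\|a-Bx\|]$, and apply the triangle inequality to $p = a+Bx$, $q = a-Bx$:
\begin{equation*}
\|p\|+\|q\| \geq \|p+q\| = 2\|a\| \quad\text{and}\quad \|p\|+\|q\| \geq \|p-q\| = 2\|Bx\|.
\end{equation*}
Taking expectations and averaging yields $\mathbb{E}[\|v\|] \geq \max\bigl(\|a\|,\, \mathbb{E}[\|Bx\|]\bigr)$.

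The nontrivial step is to show $\mathbb{E}[\|Bx\|] \geq m\|B\|_F$, which is strictly stronger than what a one-dimensional projection would give (namely $m\|B\|_{\mathrm{op}}$). I would diagonalize $B^\top B = VDV^\top$ with $D = \mathrm{diag}(\sigma_1^2, \ldots, \sigma_n^2)$ and set $y = V^\top x$; isotropy forces $y \overset{d}{=} x$, so $\|Bx\| \overset{d}{=} \sqrt{\sum_i \sigma_i^2 x_i^2}$. A Cauchy-Schwarz inequality applied to the vectors $(\sigma_i |x_i|)_i$ and $(\sigma_i)_i$ gives
\begin{equation*}
\sqrt{\sum_i \sigma_i^2 x_i^2} \geq \frac{\sum_i \sigma_i^2 |x_i|}{\sqrt{\sum_i \sigma_i^2}},
\end{equation*}
and taking expectations---using $\mathbb{E}[|x_i|] = m$ for every coordinate, again by isotropy---collapses the right-hand side to $m\sqrt{\sum_i \sigma_i^2} = m\|B\|_F$.

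It remains to combine $\mathbb{E}[\|v\|] \geq \max(\|a\|, m\|B\|_F)$ with the elementary identity $\max(p,q)^2 \geq \lambda p^2 + (1-\lambda)q^2$ for $\lambda \in [0,1]$. Choosing $\lambda = \kappa^2 = m^2/(1+m^2)$ (so $1-\lambda = 1/(1+m^2)$) gives $\max(\|a\|, m\|B\|_F)^2 \geq \kappa^2(\|a\|^2 + \|B\|_F^2) = \kappa^2\, \mathbb{E}[\|v\|^2]$, and the lemma follows by taking square roots. I expect the Frobenius-norm bound on $\mathbb{E}[\|Bx\|]$ to be the main obstacle: Jensen and one-dimensional projections both fall short, and it is the specific Cauchy-Schwarz reweighting by the singular values---made possible by isotropy equalizing $\mathbb{E}[|x_i|]$ across coordinates---that promotes the estimate from the operator norm to the Frobenius norm.
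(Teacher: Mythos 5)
Your proof is correct, and its overall skeleton matches the paper's: both reduce, via the symmetry $x\overset{d}{=}-x$ granted by isotropy, to the two-sided bound $\mathbb E[\|v\|]\geq \max(\|a\|,\,\mathbb E[\|Bx\|])$, and both then need the key estimate $\mathbb E[\|Bx\|]\geq \mathbb E[|x_1|]\,\sqrt{\Tr(B^\top B)}$. Where you genuinely diverge is in how that estimate is obtained. The paper normalizes $S=\nicefrac{B^\top B}{\Tr(B^\top B)}$, observes that $S\mapsto \mathbb E[\sqrt{x^\top Sx}]$ is concave on the spectraplex $\{S\succeq0,\ \Tr S=1\}$, hence minimized at an extreme point $uu^\top$, and invokes isotropy to identify that minimum with $\mathbb E[|x_1|]$. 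You instead diagonalize $B^\top B$, use isotropy to replace $V^\top x$ by $x$, and apply Cauchy--Schwarz to the pair $(\sigma_i|x_i|)_i$, $(\sigma_i)_i$, with isotropy entering again through $\mathbb E[|x_i|]=\mathbb E[|x_1|]$ for every $i$. The two mechanisms deliver the same constant; yours is more hands-on and avoids the extreme-point argument, while the paper's makes transparent \emph{why} the bound is tight exactly at rank-one $B$ (the concave functional is minimized on rank-one projections). Your final step, replacing the paper's two-case comparison of $\|a\|$ against $\mathbb E[|x_1|]\|B\|_F$ by the identity $\max(p,q)^2\geq \lambda p^2+(1-\lambda)q^2$ with $\lambda=\kappa^2$, is a cleaner packaging of the same computation and also explains where the specific form of $\kappa$ comes from. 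The only cosmetic omission is the degenerate case $a=0$, $B=0$, where the inequality is trivially an equality.
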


For the sake of simplicity, we are brought to introduce
\begin{equation}
    \begin{aligned}
    	E &= 4(Q_{12}+Q_{22}) CC^\top(Q_{21}+Q_{22})\\
    	f &= 4(l_1^\top Q_{12} + l_2^\top Q_{22})CC^\top (Q_{21}l_1+Q_{22}l_2).
    \end{aligned}
\end{equation}
With these notations,
\begin{equation}
	4\|\mathbb E[v]\|^2 = f, ~4\mathbb E[\|v\|^2] = f + \Tr(E\Sigma).
\end{equation}
Combining \Cref{lem:sproc,lem:prettybound1} at $\beta=0$ (as in \eqref{eq:together}) and Jensen's inequality yields the two first inequalities of \Cref{thm:boundany}. Using \Cref{lem:sproc,lem:prettybound1,lem:prettybound2} and Jensen's inequality, we obtain the first two inequalities of \Cref{thm:boundproj}. To obtain the last inequalities, we resort to the following lemma at $\beta=0$ for \Cref{thm:boundany} and at $\beta=\bar\beta$ for \Cref{thm:boundproj}.

\begin{lemma}
\label{lem:posbound}
	Given the previous definitions, for all $\beta\in[0,1]$,
	\begin{align}
		&c+\Tr(D\Sigma)+\bar\lambda+\sqrt{f+\Tr(E\Sigma)}\\
		&\quad\leq \gamma(\beta)(c+\Tr(D\Sigma)+(1-\beta^2)\bar\lambda+\beta\kappa\sqrt{f+\Tr(E\Sigma)}),
	\end{align}
	where
	\begin{equation}
		\gamma(\beta) = \begin{cases}
			\frac{2 - \beta^2 - 2 \beta \kappa}{1 - \beta^2 (1 + \kappa^2)} &\text{if } 1 - \beta^2 - \beta \kappa > 0\\
			\frac1{1 - \beta^2} &\text{otherwise}
		\end{cases}
	\end{equation}
	reaches a minimum at $\bar\beta$ with value $\bar\gamma$.
\end{lemma}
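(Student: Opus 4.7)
The lemma is a purely algebraic inequality among the three non-negative scalars $A := c + \Tr(D\Sigma)$, $B := \bar\lambda$, and $S := \sqrt{f+\Tr(E\Sigma)}$. My plan is to derive a \emph{quadratic} coupling $S^2 \le 4AB$ from the PSD structure of $Q$, close the main inequality by AM-GM, and finally minimize $\gamma$ in closed form.

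\emph{The key inequality $S^2 \le 4AB$.} Expanding the per-belief Bayesian cost as in \Cref{lem:rewrite} yields $\mathbb E_\mu[v(a(\bar\mu),x)] = (\bar\mu-l_1,\bar\mu-l_2)^\top Q(\bar\mu-l_1,\bar\mu-l_2) + \Tr(Q_{11}\Sigma_\mu) + r$. Completing the square in the second block (a Schur complement, using $Q\succeq 0$ so that $\Ima Q_{21}\subset\Ima Q_{22}$, which justifies a pseudoinverse in the singular case) rewrites this quadratic form as $(\bar\mu-l_1)^\top(Q_{11}-Q_{12}Q_{22}^{-1}Q_{21})(\bar\mu-l_1) + u^\top Q_{22}^{-1}u$ with $u := Q_{21}(\bar\mu-l_1)+Q_{22}(\bar\mu-l_2)$; by construction $v = C^\top u$. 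The spectral identification $\lambda_{\max}(\sqrt{Q_{22}}\,CC^\top\sqrt{Q_{22}}) = \lambda_{\max}(C^\top Q_{22}C) = \bar\lambda$ delivers $CC^\top \preceq \bar\lambda\,Q_{22}^{-1}$, so $\|v\|^2 = u^\top CC^\top u \le \bar\lambda\,u^\top Q_{22}^{-1}u$. Averaging under $\bar\tau$ and discarding the remaining non-negative terms gives $\mathbb E[\|v\|^2] \le \bar\lambda\,A$, equivalently $S^2 \le 4AB$.

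\emph{Closing the two regimes.} In the first regime $1-\beta^2-\beta\kappa>0$, the target $(\gamma-1)A + (\gamma(1-\beta^2)-1)B + (\gamma\beta\kappa-1)S \ge 0$ collapses by direct computation to $p^2 A + q^2 B \ge pq\,S$, with $p := 1-\beta\kappa>0$, $q := 1-\beta^2-\beta\kappa>0$, and common denominator $\Delta := 1-\beta^2(1+\kappa^2)>0$ (positive because $\beta^2\kappa^2\le\beta\kappa$, hence $\beta^2(1+\kappa^2)\le\beta^2+\beta\kappa<1$). AM-GM gives $p^2 A + q^2 B \ge 2pq\sqrt{AB}$, and $2\sqrt{AB}\ge S$ from the key inequality closes the chain. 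In the complementary regime $1-\beta^2-\beta\kappa\le 0$, the target with $\gamma(\beta)=1/(1-\beta^2)$ simplifies to $-\beta^2 A + (1-\beta^2-\beta\kappa)S \le 0$, immediate since both summands are non-positive.

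\emph{Minimization of $\gamma$.} Writing $\gamma(\beta) = 1 + (1-\beta\kappa)^2/\Delta$ in the first regime, differentiation yields a numerator proportional to $(1-\beta\kappa)[\beta(1+\kappa^2)-\kappa]$, producing the unique critical point $\bar\beta = \kappa/(1+\kappa^2)$ (which lies in the first regime). Substituting back, the pleasant coincidence $1-\bar\beta\kappa = \Delta(\bar\beta) = 1/(1+\kappa^2)$ collapses $\gamma(\bar\beta)$ to $1 + 1/(1+\kappa^2) = \bar\gamma$. Continuity of $\gamma$ at the regime boundary plus monotonicity of $1/(1-\beta^2)$ on the second regime then identify $\bar\gamma$ as the global minimum on $[0,1]$.

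The main obstacle is the first step: recognizing that $\gamma(\beta)$ must be governed by a \emph{multiplicative} coupling $S^2\le 4AB$ rather than by independent bounds on $A$, $B$, $S$ is the crucial insight, as simple non-negativity arguments would only yield $\gamma(\beta)=\max(1/(1-\beta^2),1/(\beta\kappa))$, a much weaker bound. Proving the coupling itself leans jointly on a Schur-complement expansion of $Q$ and on matching the spectral definition of $\bar\lambda$ with the $CC^\top$ factor appearing in $E$ and $f$.
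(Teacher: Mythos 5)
Your proof is correct and follows essentially the same route as the paper's: your coupling $S^2\le 4AB$ is exactly the paper's pair of bounds $c+\Tr(D\Sigma)\ge t^2$ and $\sqrt{f+\Tr(E\Sigma)}\le 2st$ (with $t^2=\|\lambda_0\|^2+\Tr(L\Sigma L^\top)$ and $s^2=\bar\lambda$), derived there via the factorization $K=Q_{22}^{\nicefrac12}$, $\Gamma=C^\top K$ rather than via your Schur complement, and where the paper then evaluates $\sup_{\zeta>0}\frac{\zeta^2+1+2\zeta}{\zeta^2+1-\beta^2+2\beta\kappa\zeta}$ by calculus you verify the same extremal ratio by identifying the coefficients as $p^2,q^2,-pq$ and applying AM--GM. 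The closing minimization of $\gamma$ at $\bar\beta$ is identical in both arguments.
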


\subsection{Proofs of the technical lemmas}

\begin{proof}[Proof of \Cref{lem:sproc}]
First let
\begin{equation}
	F_1 = \begin{bmatrix} -1&0\\0&I_n\end{bmatrix},
	~F_2(t) = \begin{bmatrix}
		-t&v^\top\\
		v&Q
	\end{bmatrix},
\end{equation}
so that $\eta\in \mathcal B$ if and only if
\begin{equation}
	\begin{bmatrix} 1\\\eta\end{bmatrix}^\top
	F_1
	\begin{bmatrix} 1\\\eta\end{bmatrix} \leq 0,
\end{equation}
and moreover
\begin{equation}
	\eta^\top Q\eta + 2v^\top\eta - t = \begin{bmatrix} 1\\\eta\end{bmatrix}^\top F_2(t) \begin{bmatrix} 1\\\eta\end{bmatrix}.
\end{equation}
By the S-lemma (see \cite{boyd1994linear}),
\begin{align}
	&\big(\eta\in \mathcal B \implies \eta^\top Q \eta + 2v^\top \eta - t\leq0\big) \\
	&\quad\iff ~\big(\exists \lambda\geq0, ~\lambda F_1 \succeq F_2(t)\big),
\end{align}
so we can rewrite
\begin{align}
	&\max_{\eta\in \mathcal B} ~\eta^\top Q \eta + 2v^\top \eta \\
	&\quad= \begin{aligned}[t]
		\min_t &\quad t\\
		\mathrm{s.t.} &\quad \eta\in \mathcal B \implies \eta^\top Q \eta + 2v^\top \eta -t\leq0
	\end{aligned} \\
	&\quad= \begin{aligned}[t]
		\min_{\lambda,t} &\quad t. \\
		\mathrm{s.t.} &\quad \lambda\geq0 \\
		&\quad\lambda F_1 \succeq F_2(t)
	\end{aligned}
\end{align}

We notice that
\begin{equation}
	(\lambda+\epsilon) F_1 - F_2(t+2\epsilon) = \lambda F_1 - F_2(t) + \epsilon I_{n+1},
\end{equation}
therefore if $\lambda F_1 \succeq F_2(t)$, then for all $\epsilon>0$,
\begin{equation}
	(\lambda+\epsilon) F_1 \succ F_2(t+2\epsilon).
\end{equation}
Conversely, if the above holds, then at the limit where $\epsilon$ vanishes, $\lambda F_1 \succeq F_2(t)$. We may thus write
\begin{equation}
	\max_{\eta\in \mathcal B} ~\eta^\top Q \eta + 2v^\top \eta
	= \begin{aligned}[t]
		\inf_{\lambda,t} &\quad t. \\
		\mathrm{s.t.} &\quad \lambda>0 \\
		&\quad\lambda F_1 \succ F_2(t).
	\end{aligned}
\end{equation}

By Schur complement (see \cite{boyd1994linear}), $\lambda F_1 \succ F_2(t)$ if and only if
\begin{equation}
	\begin{cases} 
		\lambda I_n - Q \succ 0\\ 
		-\lambda+t - v^\top (\lambda I_n - Q)^{-1} v > 0.
	\end{cases}
\end{equation}
The first condition boils down to $\lambda>\bar\lambda(Q)$, and, as a result,
\begin{equation}
	\max_{\eta\in \mathcal B} ~\eta^\top Q \eta + 2v^\top \eta
	= \inf_{\lambda>\bar\lambda(Q)} ~\lambda +v^\top (\lambda I_n - Q)^{-1} v,
\end{equation}
concluding the proof.
\end{proof}

\begin{proof}[Proof of \Cref{lem:prettybound1}]
	When $v=0$, the upper bound is trivial. When $v\neq0$, we may substitute
	\begin{equation}
		\lambda = \overline\lambda(Q) + \|v\|,
	\end{equation}
	and obtain
	\begin{align}
		&\inf_{\lambda>\overline\lambda(Q)} ~\lambda + v^\top (\lambda I_n -Q)^{-1} v \\
		&\quad\leq\overline\lambda(Q) + \|v\| + v^\top (\|v\| I_n + \overline\lambda(Q) I_n - Q)^{-1} v \\
		&\quad\leq \overline\lambda(Q) + 2\|v\|.
	\end{align}
	
	Conversely, forget $Q,v$ for a second and fix some $\gamma>0$, we have
	\begin{align}
		&\inf_{\substack{Q\succeq0 \\ \bar\lambda = \overline\lambda(Q) \\ v\neq0}}
		~\frac{\inf_{\lambda>\bar\lambda} ~\lambda + v^\top (\lambda I_n - Q)^{-1} v}
		{\bar\lambda + 2\gamma\|v\|} \\
		&\quad=\inf_{\substack{Q\succeq0 \\ \bar\lambda > \overline\lambda(Q) \\ v\neq0}}
		~\frac{\inf_{\lambda>\bar\lambda} ~\lambda + v^\top (\lambda I_n - Q)^{-1} v}
		{\bar\lambda + 2\gamma\|v\|} \\
		&\quad=\inf_{\substack{\lambda>\bar\lambda>0 \\ \bar\lambda I_n\succeq Q\succeq0 \\ v\neq0}}
		~\frac{\lambda + v^\top (\lambda I_n - Q)^{-1} v}
		{\bar\lambda + 2\gamma\|v\|} \\
		&\quad=\inf_{\substack{\lambda>\bar\lambda>0 \\ v\neq0}}
		~\frac{\lambda + \frac{v^\top v}\lambda}
		{\bar\lambda + 2\gamma\|v\|} \\
		&\quad=\inf_{\lambda,r>0}
		~\frac{\lambda + \frac{r^2}\lambda}
		{\lambda + 2\gamma r} \\
		&\quad=\inf_{t>0}
		~\frac{1 + t^2}
		{1 + 2\gamma t} \\
		&\quad=\frac{\sqrt{1+4\gamma^2}-1}{2\gamma^2}.
	\end{align}
	For a more legible result, we let
	\begin{equation}
		\beta = \frac{\sqrt{1+4\gamma^2}-1}{2\gamma} \in (0,1),
	\end{equation}
	so that
	\begin{equation}
		\gamma = \frac{\beta}{1-\beta^2}.
	\end{equation}
	As a result, for all $Q\succeq0$, $v\in\mathbb R^n$ and $\beta\in[0,1]$ (the result at $\beta=0,1$ is obtained by continuity of the right-hand side in $\beta$),
	\begin{equation}
		\inf_{\lambda>\overline\lambda(Q)} ~\lambda + v^\top (\lambda I_n -Q)^{-1} v
		\geq (1-\beta^2)\overline\lambda(Q) +  2\beta\|v\|,
	\end{equation}
	as announced.
\end{proof}

\begin{proof}[Proof of \Cref{lem:prettybound2}]
	Consider the linear case first, $v=Lx$ for some matrix $L\neq0$, then
\begin{align}
	\frac{\mathbb E[\|Lx\|]}{\sqrt{\mathbb E[\|Lx\|^2]}}
	&= \mathbb E\!\left[\sqrt{x^\top \frac{L^\top L}{\Tr (L^\top L)} x}\right] \\
	&\geq \inf_{\substack{S\succeq0\\\Tr S=1}} 
	~\mathbb E\!\left[\sqrt{x^\top S x}\right] \\
	&= \mathbb E[|x_1|],
\end{align}
as $S\succeq0\mapsto \sqrt{x^\top S x}$ is concave for each $x$, and the distribution of $x$ is isotropic. As a result, even when $L=0$,
\begin{equation}
	\mathbb E[\|Lx\|] \geq \mathbb E[|x_1|] \sqrt{\mathbb E[\|Lx\|^2]}.
\end{equation}

What happens when there is an offset? We first notice that, by Jensen's inequality,
\begin{equation}
	\mathbb E[\|v\|] \geq \|\mathbb E[v]\| = \|v_0\|,
\end{equation}
then
\begin{equation}
	\mathbb E[\|v_0+Lx\|]
	= \mathbb E\!\left[\frac12\|v_0+Lx\|+\frac12\|-v_0+Lx\|\right],
\end{equation}
since $x$ is symmetric by inversion. Then since $\|.\|$ is convex,
\begin{equation}
	\mathbb E[\|v_0+Lx\|]
	\geq \mathbb E[\|Lx\|]
	\geq \mathbb E[|x_1|] \sqrt{\Tr(L^\top L)}.
\end{equation}

Assume that either $v_0\neq0$ or $L\neq 0$. If
\begin{equation}
	\mathbb E[|x_1|] \sqrt{\Tr(L^\top L)} \geq \|v_0\|,
\end{equation}
then
\begin{align}
	\frac{\mathbb E[|x_1|] \sqrt{\Tr(L^\top L)}}{\sqrt{\mathbb E[\|v_0+Lx\|^2]}}
	&= \frac{\mathbb E[|x_1|] \sqrt{\Tr(L^\top L)}}{\sqrt{\|v_0\|^2 + \Tr(L^\top L)}} \\
	&\geq \frac{\mathbb E[|x_1|]}{\sqrt{1+\mathbb E[|x_1|]^2}}.
\end{align}
Otherwise
\begin{align}
	\frac{\|v_0\|}{\sqrt{\mathbb E[\|v_0+Lx\|^2]}}
	&= \frac{\|v_0\|}{\sqrt{\|v_0\|^2 + \Tr(L^\top L)}} \\
	&\geq \frac{\mathbb E[|x_1|]}{\sqrt{1+\mathbb E[|x_1|]^2}}.
\end{align}
All in all, when $v$ is an affine function of $x$,
\begin{equation}
	\mathbb E[\|v\|]
	\geq \kappa \sqrt{\mathbb E[\|v\|^2]},
\end{equation}
as claimed.
\end{proof}

When $\nu$ is unidimensional and Gaussian, \Cref{lem:gaussian1d} refines the result of \Cref{lem:prettybound2}. We present its proof now.

\begin{proof}[Proof of \Cref{lem:gaussian1d}]
	The upper bound is a mere application of Jensen's inequality, so the crux is to prove the converse bound. The result is trivial when $b=0$, consider thus $b\neq0$. We may further rescale the problem by $b$ so that we merely need to solve the case $b=1$. Finally since $\nu$ is symmetric, we only really need to solve the case $a\geq0$.
	
	With these reductions in hand, we compute
	\begin{equation}
		\mathbb E[|a + x|]
		= \sqrt{\frac2\pi} \left(a \int_0^a e^{-\nicefrac{x^2}2} \,\mathrm dx + e^{-\nicefrac{a^2}2} \right).
	\end{equation}
	It only remains to show that for $a>0$,
	\begin{equation}
		a \int_0^a e^{-\nicefrac{x^2}2} \,\mathrm dx + e^{-\nicefrac{a^2}2} > \sqrt{1+a^2}.
	\end{equation}
	The derivative of
	\begin{equation}
		f(a) = \int_0^a e^{-\nicefrac{x^2}2} \,\mathrm dx + \frac{e^{-\nicefrac{a^2}2}}a - \frac{\sqrt{1+a^2}}a
	\end{equation}
	with respect to $a>0$ is
	\begin{align}
		f'(a) 
		&= \frac1{a^2} \left( \frac1{\sqrt{1 + a^2}} - e^{-\nicefrac{a^2}2}\right) \\
		&= \frac1{a^2} \left( \frac1{\sqrt{1 + a^2}} - \frac1{\sqrt{e^{a^2}}}\right) 
		> 0.
	\end{align}
	Expanding the exponential and the square root around $0$, $f(a) = O_0(a)$, thus $f$ has limit $0$ at $0^+$. As a result, $f(a)>0$ for all $a>0$.
\end{proof}

\begin{proof}[Proof of \Cref{lem:posbound}]
	Fix $\beta\in[0,1]$. We informally call (A) and (B) the two terms (the second one rid of its multiplicative factor $\gamma(\beta)$). We first prove the stated inequality for $f,\bar\lambda>0$ and $Q_{22}\succ0$, then conclude by continuity. The advantage of setting $f,\bar\lambda>0$ is that both (A) and (B) are positive and so we may solve the informal program
	\begin{equation}
		\sup_{Q,l,r,\Sigma} ~\frac{\text{(A)}}{\text{(B)}}
	\end{equation}
	to retrieve $\gamma(\beta)$.
	
	First of all, we may assume that $Q_{11} = Q_{12}Q_{22}^{-1}Q_{21}$ and $r=0$, since any larger value provides a positive offset on both (A) and (B). Let us define
	\begin{align}
		K &= Q_{22}^{\nicefrac12} \succ 0 \\
		\lambda_0 &= K^{-1} (Q_{21}l_1+Q_{22}l_2) \\
		J &= K^{-1}Q_{21} \\
		L &= J+K \\
		\Gamma &= C^\top K,
	\end{align}
	so that
	\begin{align}
		c &= \|\lambda_0\|^2 + \Tr J^\top J \\
		D &= L^\top L - J^\top J \\
		\bar\lambda &= \bar\lambda(\Gamma^\top \Gamma)\\
		E &= 4 L^\top \Gamma^\top \Gamma L \\
		f &= 4\|\Gamma\lambda_0\|^2.
	\end{align}
	With this notation in hand,
	\begin{align}
		c + \Tr(D\Sigma) 
		&= \Tr (J^\top J(I_n-\Sigma)) + \|\lambda_0\|^2 + \Tr(L\Sigma L^\top ) \\
		&\leq \|\lambda_0\|^2 + \Tr(L\Sigma L^\top ) \\
		\sqrt{f+\Tr(E\Sigma)} 
		&= 2\sqrt{\lambda_0^\top\Gamma^\top\Gamma\lambda_0+\Tr( \Gamma^\top \Gamma L\Sigma L^\top)} \\
		&\leq 2 \sqrt{\bar\lambda(\Gamma^\top \Gamma)} \sqrt{\|\lambda_0\|^2+\Tr( L\Sigma L^\top)}.
	\end{align}
	Call $s = \sqrt{\bar\lambda(\Gamma^\top \Gamma)} > 0$ and $t = \sqrt{\|\lambda_0\|^2+\Tr( L\Sigma L^\top)}> 0$, so that the above becomes
	\begin{align}
		c + \Tr(D\Sigma) 
		&\leq t^2 \\
		\sqrt{f+\Tr(E\Sigma)} 
		&\leq 2 st.
	\end{align}
	Since $\beta\kappa\leq1$,
	\begin{equation}
		\frac{\text{(A)}}{\text{(B)}}
		\leq \frac{t^2+s^2+2st}{t^2+(1-\beta^2)s^2+2\beta\kappa st},
	\end{equation}
	and in turn,
	\begin{equation}
		\sup_{Q,l,r,\Sigma} ~\frac{\text{(A)}}{\text{(B)}}
		\leq \sup_{\zeta>0} ~\frac{\zeta^2+1+2\zeta}{\zeta^2+1-\beta^2+2\beta\kappa\zeta}.
	\end{equation}
	This is actually an equality, one needs to pick the parameters adequately to reproduce any $\zeta$, but we only need to prove the inequality. The derivative of the expression in $\zeta$ is directly proportional to
	\begin{equation}
		1 - \beta^2 - \beta \kappa - (1-\beta \kappa) \zeta.
	\end{equation}
	If $\beta$ is small enough that $1 - \beta^2 - \beta \kappa > 0$, then the maximum occurs at
	\begin{equation}
		\frac{1 - \beta^2 - \beta \kappa}{1-\beta \kappa},
	\end{equation}
	with value
	\begin{equation}
		\frac{2 - \beta^2 - 2 \beta \kappa}{1 - \beta^2 (1 + \kappa^2)}.
	\end{equation}
	Otherwise, the expression is decreasing in $\zeta$, so that the supremum arises at $\zeta\to0^+$, with value
	\begin{equation}
		\frac1{1-\beta^2}.
	\end{equation}
	This establishes that
	\begin{equation}
		(A) \leq \gamma(\beta) (B)
	\end{equation}
	for all parameters such that $f,\bar\lambda>0$ and $Q_{22}\succ0$. By continuity of (A) and (B), this also stands when this positivity assumption is relaxed. This proves the first part of the statement. 
	
	As for $\gamma$, we let $\tilde\beta\in(0,1)$ be uniquely defined by the equation $1 - \tilde\beta^2 - \tilde\beta \kappa = 0$. On $[\tilde\beta,1]$, $\gamma$ is nondecreasing and on $[0,\tilde\beta)$, its derivative is directly proportional to
	\begin{equation}
		- \kappa + \beta (1 + 2 \kappa^2) - \beta^2 (\kappa + \kappa^3).
	\end{equation}
	This quadratic in $\beta$ vanishes at $\bar\beta=\nicefrac{\kappa}{1+\kappa^2} < \tilde\beta$ and at $\nicefrac1\kappa>\tilde\beta$, therefore $\gamma$ reaches a minimum at $\bar\beta$ with value
	\begin{equation}
		\bar\gamma = \gamma(\bar\beta) = 1+\frac1{1+\kappa^2},
	\end{equation}
	establishing the second part of the statement.
\end{proof}

\subsection{Study of $\bar\gamma$}
\label{sec:gamma}

The ratio $\bar\gamma$ depends on the prior distribution, and can never fall below $\upsilon_n$, the ratio obtained for the uniform distribution on the sphere (of radius $\sqrt n$). As a result, $\upsilon_n$ provides an upper bound on the tightness of the approximation \eqref{eq:sandwich}. On the other hand, $\bar\gamma$ could be as large as $2$, which means there are priors for which \eqref{prog:optimistic} is not much more informative than \eqref{prog:genoptimistic} in the worst case. However, for Gaussian priors, $\bar\gamma$ is independent of the dimension and approximately equals $1.72$. More precisely, we present the following proposition.

\begin{prop}
\label{prop:gamma}
	For any isotropic prior of covariance $I_n$,
	\begin{equation}
		\bar\gamma \geq \upsilon_n \triangleq 
		\frac32 + \frac12 \frac1{1+\frac{2n\Gamma(\nicefrac n2)^2}{\pi\Gamma(\nicefrac{(n+1)}2)^2}},
	\end{equation}
	with equality if and only if the prior is the uniform distribution on the sphere of radius $\sqrt n$. The sequence $(\upsilon_n)$ increases with limit
	\begin{equation}
		\upsilon_\infty = 
		\frac{2(3+\pi)}{4+\pi} \approx 1.72,
	\end{equation}
	which is the ratio $\bar\gamma$ for Gaussian priors, regardless of the dimension.
\end{prop}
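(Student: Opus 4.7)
The plan is to reduce the extremal problem for $\bar\gamma$ to bounding $\mathbb E[|x_1|]$, and then to exploit the polar decomposition of an isotropic measure.

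Setting $\mu = \mathbb E[|x_1|]^2$, the identity $\kappa^2 = \mu/(1+\mu)$ yields
\begin{equation*}
\bar\gamma \;=\; 1 + \frac{1}{1+\kappa^2} \;=\; \frac{2+3\mu}{1+2\mu} \;=\; \frac{3}{2} + \frac{1}{2(1+2\mu)},
\end{equation*}
so $\bar\gamma$ is strictly decreasing in $\mu$, and the lower bound on $\bar\gamma$ is equivalent to an upper bound on $\mu$. By rotational invariance, any isotropic prior admits the decomposition $x = Ru$ where $R = \|x\|$ and $u$, conditionally on $R$, is uniform on $S^{n-1}$; equivalently, $u$ is uniform on $S^{n-1}$ and independent of $R$. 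A standard spherical integration gives $\mathbb E[|u_1|] = c_n \triangleq \Gamma(n/2)/(\sqrt\pi\,\Gamma((n+1)/2))$, hence $\mathbb E[|x_1|] = c_n\,\mathbb E[R]$. The normalization $\mathbb E[x_1^2]=1$ amounts to $\mathbb E[R^2]=n$, so Cauchy--Schwarz furnishes $\mathbb E[R]\le \sqrt n$, with equality iff $R=\sqrt n$ almost surely. Substituting $\mu \le nc_n^2 = \frac{n\,\Gamma(n/2)^2}{\pi\,\Gamma((n+1)/2)^2}$ into the expression for $\bar\gamma$ gives $\bar\gamma \ge \upsilon_n$, with equality exactly when $R\equiv\sqrt n$, i.e., when $x$ is uniform on the sphere of radius $\sqrt n$.

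For monotonicity of $(\upsilon_n)$, it suffices to show $n\,c_n^2$ is non-increasing in $n$. Treating $n$ as a real parameter, logarithmic differentiation reduces the task to proving
\begin{equation*}
\psi\bigl((n+1)/2\bigr) - \psi(n/2) \;\ge\; \tfrac{1}{n},
\end{equation*}
where $\psi$ is the digamma function. Since $\psi'(x)=\sum_{k\ge0}(x+k)^{-2}$ is decreasing, $\psi$ is concave on $(0,\infty)$; applied to the three equally spaced points $a=n/2$, $b=(n+1)/2$, $c=n/2+1$, concavity gives $\psi(b)-\psi(a)\ge \psi(c)-\psi(b)$, and adding this to the recurrence identity $\psi(c)-\psi(a)=\psi(n/2+1)-\psi(n/2)=2/n$ yields the desired inequality.

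Finally, Stirling's formula (or Wallis's product) gives $\Gamma(n/2)/\Gamma((n+1)/2)\sim\sqrt{2/n}$, so $nc_n^2\to 2/\pi$ and $\upsilon_n \to \tfrac{3}{2}+\tfrac{1}{2}\cdot\tfrac{1}{1+4/\pi} = \tfrac{2(3+\pi)}{4+\pi}$. For Gaussian priors, $x_1\sim\mathcal N(0,1)$ in every dimension, so $\mathbb E[|x_1|]=\sqrt{2/\pi}$ and $\mu=2/\pi$ independently of $n$; substitution returns exactly $\bar\gamma = \tfrac{2(3+\pi)}{4+\pi}$. The only genuinely delicate step is the digamma-concavity argument that underpins monotonicity; the rest is a chain of routine reductions driven by Cauchy--Schwarz.
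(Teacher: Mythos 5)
Your proof is correct, and its skeleton coincides with the paper's: rewrite $\bar\gamma$ as a decreasing function of $\mathbb E[|x_1|]^2$, use the polar disintegration of an isotropic measure to get $\mathbb E[|x_1|]=c_n\,\mathbb E[\|x\|]$ with $c_n=\Gamma(\nicefrac n2)/(\sqrt\pi\,\Gamma(\nicefrac{(n+1)}2))$, and apply Cauchy--Schwarz to $\mathbb E[\|x\|]\le\sqrt{\mathbb E[\|x\|^2]}=\sqrt n$ with the equality case pinning down the uniform spherical prior. (The paper identifies the spherical constant by calibrating the disintegration against the Gaussian rather than by direct spherical integration, but that is cosmetic.) Where you genuinely depart from the paper is the monotonicity of $(\upsilon_n)$: the paper reduces it to the inequality $\psi(\nicefrac{(x+1)}2)-\psi(\nicefrac x2)>\nicefrac1x$ and then cites a theorem of Alzer, whereas you derive the same inequality from scratch by combining the recurrence $\psi(\nicefrac x2+1)-\psi(\nicefrac x2)=\nicefrac2x$ with concavity of $\psi$ at the three equally spaced points $\nicefrac x2,\nicefrac{(x+1)}2,\nicefrac x2+1$. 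This is a self-contained and more elementary route; note that strict concavity of $\psi$ (i.e., $\psi'$ strictly decreasing) upgrades your ``$\ge$'' to ``$>$'' and hence gives the strict increase the paper asserts. For the limit you invoke Stirling/Wallis where the paper uses the functional identity $u(x)+u(x+1)=\ln\frac{x}{4(x+1)}$; both are standard and both land on $\upsilon_\infty=\frac{2(3+\pi)}{4+\pi}$, matching the dimension-independent Gaussian value.
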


To prove \Cref{prop:gamma}, we first establish a formula for $\mathbb E[|x_1|]$ in term of $\mathbb E[\|x\|]$. Later we will use the Cauchy-Schwarz inequality
\begin{equation}
	\mathbb E[\|x\|]^2 \leq \mathbb E[\|x\|^2],
\end{equation}
which is an equality if and only if $\|x\|$ is constant.

\begin{lemma}
\label{lem:x1}
	When $\nu$ is isotropic,
	\begin{equation}
		\mathbb E[|x_1|] = 
		\frac{\Gamma(\nicefrac n2)}{\sqrt\pi\Gamma(\nicefrac{n+1}2)}
		\mathbb E[\|x\|].
	\end{equation}
\end{lemma}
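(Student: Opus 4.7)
The plan is to exploit isotropy to decouple radius and direction, and then reduce the computation to two well-known expectations.

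First, I would recall that when $\nu$ is isotropic (invariant under all orthogonal transformations), the polar decomposition $x = \|x\| \cdot u$, with $u \triangleq x/\|x\|$ on the event $\{x\neq0\}$, has a very particular structure: $u$ is uniformly distributed on the unit sphere $S^{n-1}$ and, more importantly, $u$ is independent of $\|x\|$. This is a standard consequence of isotropy (pushing $\nu$ through the polar map gives a product measure because the angular marginal is invariant under the transitive group $O(n)$ acting on $S^{n-1}$). The point $x=0$ has measure zero as soon as $\Sigma_\nu = I_n$ is nondegenerate, so the polar decomposition is almost-surely defined. With this in hand,
\begin{equation}
    \mathbb E[|x_1|] = \mathbb E[\|x\|\,|u_1|] = \mathbb E[\|x\|]\,\mathbb E[|u_1|],
\end{equation}
and it only remains to compute the constant $\mathbb E[|u_1|]$.

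Second, since this constant is independent of $\nu$, I would calibrate it against a convenient reference isotropic distribution, the standard Gaussian $g\sim\mathcal N(0,I_n)$, for which the same identity yields
\begin{equation}
    \mathbb E[|g_1|] = \mathbb E[\|g\|]\,\mathbb E[|u_1|].
\end{equation}
Both Gaussian expectations are classical: $\mathbb E[|g_1|] = \sqrt{2/\pi}$ (half-normal), and $\|g\|$ follows a chi distribution with $n$ degrees of freedom so that $\mathbb E[\|g\|] = \sqrt{2}\,\Gamma(\nicefrac{n+1}2)/\Gamma(\nicefrac n2)$. Dividing yields
\begin{equation}
    \mathbb E[|u_1|] = \frac{\Gamma(\nicefrac n2)}{\sqrt\pi\,\Gamma(\nicefrac{n+1}2)},
\end{equation}
and substituting back gives the announced formula. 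Alternatively, one could compute $\mathbb E[|u_1|]$ directly by writing it as a spherical integral $\int_0^\pi |\cos\theta|\sin^{n-2}\theta\,d\theta / \int_0^\pi \sin^{n-2}\theta\,d\theta$ and recognizing Beta integrals, but the Gaussian-calibration route avoids the bookkeeping.

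The only step requiring care is the independence of $\|x\|$ from $u=x/\|x\|$. I would justify it briefly by noting that for any Borel $A\subset[0,\infty)$ and $B\subset S^{n-1}$ and any $O\in O(n)$, isotropy gives $\mathbb P[\|x\|\in A,\ u\in B] = \mathbb P[\|Ox\|\in A,\ Ox/\|Ox\|\in OB] = \mathbb P[\|x\|\in A,\ u\in OB]$, so the conditional law of $u$ given $\|x\|$ is $O(n)$-invariant, hence uniform on $S^{n-1}$, hence independent of $\|x\|$. Once this is in place, the remainder is an arithmetic identity on Gamma functions, which is the easy part.
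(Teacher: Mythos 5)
Your proposal is correct and follows essentially the same route as the paper: the paper disintegrates $\nu$ into a radial law times the uniform measure on spheres (which is exactly your independence of $\|x\|$ and $u=x/\|x\|$) and then, just as you do, calibrates the resulting spherical constant against the standard Gaussian via the half-normal mean and the chi-distribution mean. The only nitpick is that $\Sigma_\nu = I_n$ does not by itself rule out an atom at the origin, but this is harmless since $x=0$ contributes zero to both sides of the identity.
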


\begin{proof}[Proof of \Cref{lem:x1}]
	The key idea is to notice the formula is ``homogeneous,'' i.e., both sides are linear in $\nu$, the distribution of $x$. Then it suffices to prove it for say $\nu=\mathcal N(0,I_n)$, then ``integrating'' the formula to retrieve any $\nu$. This second step involves some measure theory, specifically the disintegration theorem, for which we advise consulting \cite{disintegration}.
	
	When $\nu=\mathcal N(0,I_n)$ the formula holds for $x_1 \sim \mathcal N(0,1)$ and so
	\begin{equation}
		\mathbb E[|x_1|] = \sqrt{\frac2\pi},
	\end{equation}
	and for $\|x\|\sim\chi(n)$ (the chi distribution with $n$ degrees of freedom) which has mean
	\begin{equation}
		\mathbb E[\|x\|] = \sqrt2 \frac{\Gamma(\nicefrac{n+1}2)}{\Gamma(\nicefrac n2)}.
	\end{equation}
		
	Consider now $\nu$ isotropic, we may express it
	\begin{equation}
		\mathrm d\nu(x) = \mathrm d\eta(\|x\|) \mathrm d\nu_{\|x\|},
	\end{equation}
	where $\nu_{\|x\|}$ is the uniform probability distribution on the sphere of radius $\|x\|$ and $\eta$ is the distribution of $\|x\|$. With this disintegration,
	\begin{align}
		\mathbb E[|x_1|]
		&= \int_0^\infty \int_{\|x\| \mathbb S^{n-1}} |x_1| \mathrm d\nu_{\|x\|} \mathrm d\eta(\|x\|) \\
		&= \int_0^\infty \aleph
		\|x\| \mathrm d\eta(\|x\|) \\
		&= \aleph 
		\mathbb E[\|x\|],
	\end{align}
	where
	\begin{equation}
		\aleph=\int_{\mathbb S^{n-1}} |x_1| \mathrm d\nu_{1}>0.
	\end{equation}
	Using this formula with $\nu=\mathcal N(0,I_n)$, we obtain
	\begin{equation}
		\aleph = \frac{\Gamma(\nicefrac n2)}{\sqrt\pi\Gamma(\nicefrac{n+1}2)}
	\end{equation}
	which seals the proof.
\end{proof}

\begin{proof}[Proof of \Cref{prop:gamma}]
	When $x \sim \mathcal N(0,I_n)$, $x_1 \sim \mathcal N(0,1)$ is a scalar Gaussian random variable, therefore
\begin{equation}
	\mathbb E[|x_1|] 
	= \int_{-\infty}^\infty |t| \frac{e^{-\nicefrac{t^2}2}}{\sqrt{2\pi}} \,\mathrm dt
	= \sqrt{\frac2\pi},
\end{equation}
and so follows the value of $\bar\gamma$ for Gaussian priors.

	For any isotropic prior of covariance $I_n$, using \Cref{lem:x1}, we may express
	\begin{equation}
		\bar\gamma 
		= \frac32 + \frac12
		\frac1{1+\frac{2\Gamma(\nicefrac n2)^2}{\pi\Gamma(\nicefrac{n+1}2)^2}
		\mathbb E[\|x\|]^2}
	\end{equation}
	By Cauchy-Schwarz inequality,
	\begin{equation}
		\bar\gamma \geq \upsilon_n
	\end{equation}
	with equality if and only if $\|x\|$ is constant, that is if and only if the prior is spherical.
	
	Finally to analyze the monotonicity and limit of $(\upsilon_n)$ we define for $x>0$
	\begin{equation}
		u(x) = 2\ln \frac{\Gamma(\nicefrac{x+1}2)}{\sqrt x\Gamma(\nicefrac x2)}.
	\end{equation}
	Its derivative is
	\begin{equation}
		u'(x) = \psi(\nicefrac{x+1}2) - \psi(\nicefrac x2) - \frac1x > 0
	\end{equation}
	where $\psi$ is the digamma function and where the positivity ensues from Theorem 7 (with $n=0$, $s=\nicefrac12$ and $x$ substituted with $\nicefrac{x-1}{2}$) of \cite{Alzer}. In turn,
	\begin{equation}
		\upsilon_n
		=\frac32 + \frac12 \frac1{1+\frac2\pi e^{-u(n)}}
	\end{equation}
	increases with $n$. We also remark that
	\begin{equation}
		u(x) + u(x+1) = \ln \frac{x}{4(x+1)},
	\end{equation}
	so $u(x)$ tends to $-\ln2$ as $x$ goes to infinity, and thus
	\begin{equation}
		\upsilon_n \to_n \upsilon_\infty,
	\end{equation}
	where $\upsilon_\infty$ is $\bar\gamma$ when the prior is Gaussian.
\end{proof}

\section{Monotonicity of rank}

\begin{proof}[Proof of \Cref{prop:orth}]
	In a first step, we proceed to a reduction and evacuate a particular case. First of all, let $X$ be a solution of \eqref{prog:pessimistic} and let
	\begin{equation}
		X^* = P^{<0}_D X P^{<0}_D.
	\end{equation}
	Then $\rk X^* \leq \rk X$ and $X^*$ is a solution as well since
	\begin{align}
		\Tr(EX^*) &= \Tr(P^{<0}_D E P^{<0}_D X) \leq \Tr(EX)\\
		\Tr(DX^*) &= \Tr(-D^- X) \leq \Tr(DX).
	\end{align}
	For this reason, we may restrict the ambient space to $\Ima P^{<0}_D$ when studying minimal rank solutions. This is equivalent to assuming $D \prec 0$, which we do in the remainder of this proof.
	
	Second, if $0$ is a solution of \eqref{prog:pessimistic}, then it is the only solution of minimal rank, and of course it is an orthogonal projection matrix. In the remainder, we assume that $0$ is not a solution. Note that if $\Tr(EX) = 0$, the objective at $X$ is at least as large as that at $0$, hence any solution $X$ must be such that $\Tr(EX) > 0$. We may thus focus on arguments $X$ for which $\Tr(EX)>0$, call $\mathcal D$ this domain. 
		
	\medskip
	In a second step, we characterize solutions of \eqref{prog:pessimistic}. Since the objective (which we shall denote $g$) is concave on $\mathcal D$ convex, the program is concave. The objective is smooth on $\mathcal D$, with gradient
	\begin{equation}
		\nabla g(X) = D + \frac E{2\sqrt{f+\Tr(EX)}}.
	\end{equation}
	In this case then, solutions are easily characterized. Let $X\in\mathcal D$ be a solution, $Y\in\mathcal D$ and define for all $t\in[0,1]$,
	\begin{equation}
		h(t) = g(X+t(Y-X)).
	\end{equation}
	Since $X$ is a solution and $\mathcal D$ is convex, $h$ continuously differentiable reaches a minimum at $0$, hence
	\begin{equation}
		h'(0) = \nabla g(X)^\top (Y-X)\geq0.
	\end{equation}	
	By continuity, this also applies to all $Y \in \bar{\mathcal D} = \mathcal S = \{0\preceq X \preceq I_n\}$. Therefore $X$ solves
	\begin{equation}
		\min_{Z\in\mathcal S} ~\Tr(\nabla g(X)^\top Z)
	\end{equation}
	and so,
	\begin{equation}
		P_{\nabla g(X)}^{<0} \preceq X \preceq P_{\nabla g(X)}^{\leq0}.
	\end{equation}
	
	Let now $X$ be a solution of minimal rank. For all $Z$ such that
	\begin{equation}
		P_{\nabla g(X)}^{<0} \preceq Z \preceq P_{\nabla g(X)}^{\leq0},
	\end{equation}
	we have
	\begin{equation}
		\Tr(\nabla g(X)^\top (Z-X))=0.
	\end{equation}
	Hence, for these $Z$, $\Tr(DZ)$ can be rewritten as a simple function of $\Tr(EZ)$. As $X$ solves \eqref{prog:pessimistic}, it is also a solution of the same program restricting the constraint set, namely it solves
	\begin{equation}
		\min_{P_{\nabla g(X)}^{<0} \preceq Z \preceq P_{\nabla g(X)}^{\leq0}}
		~-\frac{\Tr(EZ)}{2\sqrt{f+\Tr(EX)}} 
		+\sqrt{f+\Tr(EZ)}.
	\end{equation}
	The objective is strictly concave in $\Tr(EZ)$ and $E\succeq0$, thus this program is solved at $P_{\nabla g(X)}^{<0}$ or $P_{\nabla g(X)}^{\leq0}$. If the first argument is not a solution, however, we run into a contradiction. Indeed, then
	\begin{equation}
		\Tr(EX) = \Tr(EP_{\nabla g(X)}^{\leq0}),
	\end{equation}
	and so,
	\begin{align}
		&-\frac{\Tr(EP_{\nabla g(X)}^{<0})}{2\sqrt{f+\Tr(E P_{\nabla g(X)}^{\leq0})}} 
		+\sqrt{f+\Tr(EP_{\nabla g(X)}^{<0})}\\
		&\quad>
		-\frac{\Tr(EP_{\nabla g(X)}^{<0})}{2\sqrt{f+\Tr(E P_{\nabla g(X)}^{\leq0})}} 
		+\sqrt{f+\Tr(EP_{\nabla g(X)}^{<0})}.
	\end{align}
	Rearranging the terms yields
	\begin{align}
		&2\sqrt{f+\Tr(EP_{\nabla g(X)}^{<0})}\sqrt{f+\Tr(E P_{\nabla g(X)}^{\leq0})}\\
		&\quad>
		\Tr(EP_{\nabla g(X)}^{<0})+\Tr(EP_{\nabla g(X)}^{<0})+2f,
	\end{align}
	which is a contradiction. As a a result, $P_{\nabla g(X)}^{<0}$ is a solution of \eqref{prog:pessimistic}. This, added to the facts that $X\succeq P_{\nabla g(X)}^{<0}$ and that $X$ has minimal rank, implies that $X = P_{\nabla g(X)}^{<0}$. Therefore $X$ is an orthogonal projection matrix.
\end{proof}

\begin{proof}[Proof of \Cref{thm:monotonicity}]
	As in the previous proof, we may assume that $D\prec0$. Moreover, the result holds immediately if $X_2 = 0$ or if $\epsilon_1=\epsilon_2$, we thus assume that $\epsilon_1<\epsilon_2$ and $X_2\neq0$. Let us parametrize the hypotheses more succintly:
	\begin{equation}
		E = \epsilon^2 E_0, ~f = \epsilon^2 f_0,
	\end{equation}
	with $\epsilon\geq0$ varying. We denote
	\begin{equation}
		R_a = P_{D+aE_0}^{<0}.
	\end{equation}
	Since $D+aE_0$ increases with $a$, the dimension of its negative eigenspace decreases with $a$, which is none else than $\rk R_a$.
	
	We first show that
	\begin{equation}
		\Tr(DX_1) \leq \Tr(DX_2) < 0,
	\end{equation}
	directly implying that $X_1\neq0$. The second inequality is immediate as $X_2\neq0$ and $D\prec0$. Regarding the first one, since $X_1$ is a solution with hypothesis $\epsilon=\epsilon_1$, and $X_2$ is a solution under the second hypothesis,
	\begin{align}
		&\Tr(DX_1) + \epsilon_1 \sqrt{f_0 + \Tr(E_0X_1)} \\
		&\quad\leq\Tr(DX_2) + \epsilon_1 \sqrt{f_0 + \Tr(E_0X_2)} \\
		&\quad=\left(1-\frac{\epsilon_1}{\epsilon_2}\right)\Tr(DX_2) \\
		&\quad\phantom{=}+ \frac{\epsilon_1}{\epsilon_2} \left(\Tr(DX_2)+\epsilon_2\sqrt{f_0 + \Tr(E_0X_2)}\right) \\
		&\quad\leq\left(1-\frac{\epsilon_1}{\epsilon_2}\right)\Tr(DX_2) \\
		&\quad\phantom{=}+ \frac{\epsilon_1}{\epsilon_2} \left(\Tr(DX_1)+\epsilon_2\sqrt{f_0 + \Tr(E_0X_1)}\right),
	\end{align}
	therefore
	\begin{equation}
		\left(1-\frac{\epsilon_1}{\epsilon_2}\right) (\Tr(DX_2) - \Tr(DX_1)) \geq 0.
	\end{equation}
	This fact also helps us show that
	\begin{equation} \label{eq:monotonicity}
	\begin{aligned}
		&\epsilon_2 \sqrt{f_0 + \Tr(E_0X_2)} \\
		&\quad=\epsilon_2 \sqrt{f_0 + \Tr(E_0X_2)} + \Tr(DX_2) - \Tr(DX_2) \\
		&\quad\leq\epsilon_2 \sqrt{f_0 + \Tr(E_0X_1)} + \Tr(DX_1) - \Tr(DX_2) \\
		&\quad\leq\epsilon_2 \sqrt{f_0 + \Tr(E_0X_1)}.
	\end{aligned}
	\end{equation}

	In the present case ($X_1,X_2\neq0$), we have characterized the solutions in the proof of \Cref{prop:orth}:
	\begin{equation}
		X_1 = R_{a_1}, ~X_2 = R_{a_2},
	\end{equation}
	where,
	\begin{align}
		a_1 &= \frac{\epsilon_1}{2\sqrt{f_0+\Tr(E_0X_1)}}\\
		a_2 &= \frac{\epsilon_2}{2\sqrt{f_0+\Tr(E_0X_2)}}.
	\end{align}
	
	Given the monotonicity we have derived earlier in \eqref{eq:monotonicity}, $a_1\leq a_2$,	and as a result,
	\begin{equation}
		\rk X_1 = \rk R_{a_1} \geq \rk R_{a_2} = \rk X_2,
	\end{equation}
	which terminates the proof.
\end{proof}

\begin{proof}[Proof of \Cref{coro:monotonicity}]
	Observe that \eqref{prog:Bayes_linquad} and \eqref{prog:optimistic} correspond to the Pessimistic Program, \eqref{prog:pessimistic}, with respective hypothesis $0 CC^\top$ and $\bar\gamma^2 CC^\top$ in lieu of $CC^\top$. \Cref{thm:monotonicity} then guarantees this hierarchy of minimal ranks.
\end{proof}

\begin{proof}[Proof of \Cref{cor:noinf_exp}]
	If $D\succeq0$, $\Sigma=P^{<0}_D=0$ is a solution of the Bayesian Program. In turn, $\Sigma=0$ is a solution of the Universal Optimistic Program, since it only differs from the Bayesian Program by a constant in the objective. Moreover, \Cref{coro:monotonicity} implies that the minimal rank of a solution of \eqref{prog:pessimistic} and \eqref{prog:optimistic} is $0$.
\end{proof}

\section{Numerical solution}
\label{sec:numerical_app}

\subsection{Properties of $h$}

\begin{proof}[Proof of \Cref{prop:line}]
	The motivation behind the definition of $h$ comes from the following rewriting
	\begin{align}
		&\min_{0\preceq X\preceq I_n} ~\Tr(DX) + \sqrt{f+\Tr(EX)} \\
		&\quad=\min_{\substack{0\preceq X\preceq I_n\\t\geq\Tr(EX)}} ~\Tr(DX) + \sqrt{f+t} \\
		&\quad=\min_{t\geq0} ~h(t) + \sqrt{f+t}.
	\end{align}
	This directly establishes the second part of the statement. 
	
	Assume that $Y$ solves \eqref{prog:pessimistic}. Since both programs share the same value,
	\begin{align}
		\min_{t\geq0} ~h(t) + \sqrt{f+t} 
		&= \Tr(DY) + \sqrt{f+\Tr(EY)} \\
		&\geq h(\Tr(EY)) + \sqrt{f+\Tr(EY)}.
	\end{align}
	The inequality is therefore an equality, therefore $Y$ solves the program defining $h(\Tr(EY))$, and $\Tr(EY)$ solves \eqref{prog:line}.
	
	Assume now the converse, $Y$ solves the program defining $h(\Tr(EY))$, and $\Tr(EY)$ solves \eqref{prog:line}. Again, since both programs share the same value, and since $Y$ solves the program defining $h(\Tr(EY))$,
	\begin{align}
		&\min_{0\preceq X\preceq I_n} ~\Tr(DX) + \sqrt{f+\Tr(EX)} \\
		&\quad= h(\Tr(EY)) + \sqrt{f+\Tr(EY)} \\
		&\quad= \Tr(DY) + \sqrt{f+\Tr(EY)}.
	\end{align}
	As a result, $Y$ solves \eqref{prog:pessimistic}.
\end{proof}

\Cref{prop:suboptim} relies on the following lemma. Observe that the difference between the two bounds is directly controlled by $a,b$, independently of $h$. 

\begin{lemma} \label{lem:hproperties}
	The function $h$ is continuous and convex, decreasing on $[0,\bar t]$ and constant on $[\bar t, \infty)$. In addition, for any $0\leq a < b$,
	\begin{equation}
		h(b) + \sqrt{f+a} \leq \min_{t\in[a,b]} ~h(t) + \sqrt{f+t} \leq h(b) + \sqrt{f+b}.
	\end{equation}
\end{lemma}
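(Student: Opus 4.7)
The plan is to establish each listed property of $h$ in turn and then deduce the sandwich inequality as a one-line consequence of the monotonicity already proved.

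First I would tackle convexity and monotonicity together. Writing
\[
h(t) = \inf\{\Tr(DX) : 0 \preceq X \preceq I_n,~\Tr(EX) \leq t\},
\]
I note that the joint feasibility set $\{(X,t) : 0\preceq X\preceq I_n,~\Tr(EX)\leq t\}$ is convex (intersection of an order interval with a half\-space) and that $\Tr(DX)$ is linear in $X$. Thus the epigraph of $h$ is the projection of a convex set onto $(t,v)$ coordinates and hence convex; this is the standard parametric value-function argument. Non-increasingness is immediate because the feasible set grows with $t$.

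Next I would dispose of the constancy on $[\bar t,\infty)$ and the strict decrease on $[0,\bar t]$. By \Cref{lem:trsols}, $P_D^{<0}$ minimizes $\Tr(DX)$ on the full order interval $\{0\preceq X\preceq I_n\}$, and since $\Tr(EP_D^{<0})=\bar t$ it remains feasible for every $t\geq\bar t$, so $h(t)=\Tr(DP_D^{<0})$ there. For strict decrease on $[0,\bar t]$, I argue by contraposition: if $h$ were constant on some $[t_1,t_2]\subseteq[0,\bar t]$ with $t_1<t_2$, then convexity combined with non-increasingness would force $h$ to be constant on all of $[t_1,\infty)$ (for $s>t_2$ write $t_2=\lambda t_1+(1-\lambda)s$ and use $h(t_2)\leq\lambda h(t_1)+(1-\lambda)h(s)$ together with $h(s)\leq h(t_2)$). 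This contradicts $h(0)>h(\bar t)$, which holds whenever $\bar t>0$ because $\Tr(EX)=0$ with $X,E\succeq0$ forces $\Ima X\subseteq\ker E$ and rules out $P_D^{<0}$ as a candidate at $t=0$.

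For continuity, convexity on $[0,\infty)$ gives continuity on the open interval $(0,\infty)$ automatically, so only right-continuity at $0$ needs attention. I would use compactness: taking $t_n\downarrow 0$ and corresponding minimizers $X_n$ with $0\preceq X_n\preceq I_n$, extract a subsequence converging to some $X^*$. The constraint $\Tr(EX_n)\leq t_n$ passes to the limit to give $\Tr(EX^*)\leq 0$, hence $\Tr(EX^*)=0$, so $X^*$ is feasible for $h(0)$. Continuity of $\Tr(D\cdot)$ then yields $h(t_n)=\Tr(DX_n)\to\Tr(DX^*)\geq h(0)$, and this combined with the trivial $h(t_n)\leq h(0)$ gives the limit.

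Finally, the sandwich inequality is a two-line consequence of the monotonicity already established: the upper bound comes from using $t=b$ as a candidate in the minimization, while the lower bound uses $h(t)\geq h(b)$ (non-increasingness) and $\sqrt{f+t}\geq\sqrt{f+a}$ (monotonicity of the root) for every $t\in[a,b]$. The main obstacle I anticipate is the clean convex-analytic deduction that a flat section of $h$ in $[0,\bar t]$ would propagate to $[0,\infty)$; the rest is either standard value-function reasoning or direct compactness/monotonicity.
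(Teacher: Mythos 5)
Most of your argument lines up with the paper's: convexity via the value-function/epigraph projection (the paper instead takes convex combinations of minimizers, which is the same idea), continuity via compactness of minimizers rather than Berge's maximum theorem, constancy on $[\bar t,\infty)$ by feasibility of $P_D^{<0}$, and the sandwich inequality exactly as in the paper (monotonicity of $h$ for the lower bound, $t=b$ as a candidate for the upper bound). These parts are fine.

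The strict-decrease step, however, has a genuine gap. Your contraposition shows that a flat segment $[t_1,t_2]\subseteq[0,\bar t]$ propagates to make $h$ constant on $[t_1,\infty)$, and you then claim this contradicts $h(0)>h(\bar t)$. It does not when $t_1>0$: a function that strictly decreases on $[0,t_1]$ and is constant on $[t_1,\infty)$ satisfies both $h(0)>h(\bar t)$ and $h(t_1)=h(\bar t)$, so no contradiction arises. What you actually need is $h(t)>h(\bar t)$ for \emph{every} $t<\bar t$ (or at least at the left endpoint $t_1$ of the putative flat segment). Moreover, even your base case $h(0)>h(\bar t)$ is under-justified: ruling out $P_D^{<0}$ as feasible at $t=0$ does not rule out some \emph{other} minimizer of the unconstrained SDP being feasible there, since by \Cref{lem:trsols} the unconstrained solution set is the whole order interval $[P_D^{<0},P_D^{\leq0}]$, not the single point $P_D^{<0}$. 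The fix is the paper's argument: every unconstrained minimizer $Y$ satisfies $Y\succeq P_D^{<0}$, hence $\Tr(EY)\geq\Tr(EP_D^{<0})=\bar t$ because $E\succeq0$; therefore for any $t<\bar t$ no feasible point of the program defining $h(t)$ attains the unconstrained minimum, giving $h(t)>h(\bar t)$, and strict decrease on $[0,\bar t]$ then follows from convexity (or from your propagation argument applied at $t_1$).
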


\begin{proof}[Proof of \Cref{lem:hproperties}]
	Continuity is a direct consequence of the minimum theorem: the objective does not depend on the parameter $t$, whereas the domain is a non-empty compact-valued continuous correspondence in $t$. Nonincreasingness comes directly from the fact that this correspondence is nondecreasing and the objective is minimized. 
	
	Regarding convexity, let $u,v\geq0$ and $\lambda\in[0,1]$. Let then $X$ solve the program that defines $h(u)$ and $Y$ solve the program that defines $h(v)$. Then $\lambda X + (1-\lambda) Y$ satisfies the constraint that defines $h(\lambda u + (1-\lambda)v)$, so its value must be at least as large as $h(\lambda u + (1-\lambda)v)$, namely
	\begin{equation}
		\lambda h(u) + (1-\lambda) h(v) \geq h(\lambda u + (1-\lambda)v).
	\end{equation}
	
	Finally, $P_D^{<0}$ solves
	\begin{equation}
		\min_{0\preceq X\preceq I_n} ~\Tr(DX),
	\end{equation}
	and we have let $\bar t=\Tr(EP_D^{<0})$. Since $P_D^{<0}$ solves the program without the trace constraint, it solves the program defining $h(t)$ whenever $t\geq \bar t$, therefore $h(t) = h(\bar t)$ for all $t\geq \bar t$. Furthermore, \Cref{lem:trsols} guarantees all other solutions $Y$ of this SDP satisfy $Y\succeq P_D^{<0}$, and in particular $\Tr(EY) \geq \bar t$. As a result, for all $0\leq t<\bar t$, $h(t) > h(\bar t)$, and since $h$ is convex this implies that $h$ is actually strictly decreasing on $[0,\bar t]$.
	
	Regarding the two bounds, the first one relies on the monotonicity of $h$, whereas the second one is simply obtained by setting $t=b$. 
\end{proof}

\begin{proof}[Proof of \Cref{prop:suboptim}]
It is rather immediate to see that,
\begin{align}
	&\min_{t\geq0} ~h(t) + \sqrt{f+t} \\
	&\quad=\min_{t\in[0,\bar t]} ~h(t) + \sqrt{f+t} \\
	&\quad=\min_{0\leq n< N} ~\min_{t\in[u_n,u_{n+1}]} ~h(t) + \sqrt{f+t} \\
	&\quad\geq\min_{0\leq n< N} ~h(u_{n+1}) + \sqrt{f+u_n} \\
	&\quad\geq\min_{0\leq n< N} ~h(u_{n+1}) + \sqrt{f+u_{n+1}} - \rho.
\end{align}
The first equality is obtained as $h$ is constant on $[\bar t, \infty)$. The second one comes from the fact that
\begin{equation}
	\bigcup_{0\leq n< N} [u_n, u_{n+1}]  \supset [0,\bar t].
\end{equation}
The first inequality is directly lifted from \Cref{lem:hproperties}.
\end{proof}

\eqref{prog:spop} can receive a similar treatment to that of \eqref{prog:pessimistic}. As mentioned earlier, it first relies on resolving the inner maximization: for $\zeta\geq0$,
\begin{equation}
	\max_{\beta\in[0,1]} ~1-\beta^2 + \beta\zeta = \begin{cases}
		\zeta &\text{if } \zeta\geq2\\
		1+\frac{\zeta^2}4 &\text{otherwise.}
	\end{cases}	
\end{equation}
It is notable that this expression is concave in $\zeta^2$, since the above formula is used with $\zeta^2 = \nicefrac{\kappa^2(f+\Tr(E\Sigma))}{\bar\lambda^2}$ in resolving the inner maximization of \eqref{prog:spop}. In turn, even though the lower bound of \Cref{thm:boundproj} is only obtained for $\Sigma$ orthogonal projection matrix, it is concave in $\Sigma$ hence there is no loss of generality considering all $0\preceq\Sigma\preceq I_n$, \eqref{prog:spop} is solved by an extreme point, i.e., an orthogonal projection matrix.

The second step separates both cases, \eqref{prog:spop} is the minimum of the two following programs:
\begin{equation}
	\begin{aligned}
		\min_{0\preceq\Sigma\preceq I_n} &\Tr(D\Sigma) + c + \kappa\sqrt{f+\Tr(E\Sigma)} \\
		\text{s.t.} &\Tr(E\Sigma) \geq \check t
	\end{aligned}
\end{equation}
and
\begin{equation}
	\begin{aligned}
		\min_{0\preceq\Sigma\preceq I_n} &\Tr(\check D\Sigma) + c + \bar\lambda+\frac{\kappa^2 f}{4\bar\lambda} \\
		\text{s.t.} &\Tr(E\Sigma) \leq \check t,
	\end{aligned}
\end{equation}
where we have let
\begin{equation}
	\check t = \frac{4\bar\lambda^2}{\kappa^2} - f, ~\check D = D + \frac{\kappa^2}{4\bar\lambda} E.
\end{equation}
Moreover, $\Sigma$ which solves the program of smallest value also solves \eqref{prog:spop}. The first program is akin to \eqref{prog:pessimistic} with an additional constraint on $\Tr(E\Sigma)$, which leads to the definition of a different function, $\check h$, whose properties follow along the line of \Cref{lem:hproperties} (save for its domain of strict monotony), which ultimately leads to an analogous grid search. The second program is a simple SDP, relatively inexpensive to solve.

\subsection{Coherence}

\begin{proof}[Proof of \Cref{prop:coherence}]
	Consider $t \in (0,\bar t)$ where $\bar t$ is the threshold after which $h$ is constant. The program that defines $h(t)$ is convex and satisfies Slater's condition, therefore $0\preceq X\preceq I_n$ is a solution if and only if there exist $\lambda\geq0$, $M_1,M_2\succeq0$ such that
	\begin{equation}
		D + \lambda E - M_1 + M_2 = 0,
	\end{equation}
	and $\Tr(M_1X) = \Tr(M_2(I-X)) = \lambda(\Tr(EX)-t) = 0$. Moreover, since $h$ is strictly decreasing, the constraint on $\Tr(EX)$ must be active, so $\Tr(EX) = t$. Once $\lambda$ is fixed, all the other conditions are equivalent to $X$ solving the KKT conditions of the following convex program,
	\begin{equation}
	\label{eq:lambda}
		\min_{0 \preceq X \preceq I_n} ~\Tr((D+\lambda E) X).
	\end{equation}
	This program also satisfies Slater's condition, therefore $X$ is a solution of the program defining $h(t)$ if and only if $\Tr(EX) = t$ and there exists $\lambda\geq0$ such that
	\begin{equation}
		P_{D+\lambda E}^{<0} \preceq X \preceq P_{D+\lambda E}^{\leq0}.
	\end{equation}
	Note that $\lambda=0$ is not a possibility, otherwise $X \succeq P_D^{<0}$ and so $\Tr(EX) \geq \bar t > t$. All in all, $X$ is a solution of the program defining $h(t)$ if and only if $\Tr(EX) = t$ and there exists $\lambda>0$ such that
	\begin{equation}
		P_{D+\lambda E}^{<0} \preceq X \preceq P_{D+\lambda E}^{\leq0}.
	\end{equation}

	We now prove that for $\lambda>0$, there is at most one $X$ such that the above condition is satisfied. If $P_{D+\lambda E}^{<0}=P_{D+\lambda E}^{\leq0}$, surely $X = P_{D+\lambda E}^{<0}$ is the only possible solution. Otherwise, since $\rk(D+\lambda E) \geq n-1$, the difference in rank between the two projections is exactly $1$, we may let $u$ be a unit-vector such that
	\begin{equation}
		P^{\leq0}_{D+\lambda E} = P^{<0}_{D+\lambda E}+uu^\top.
	\end{equation}
	In this case, if ever
	\begin{equation}
		\Tr(EP^{<0}_{D+\lambda E}) = \Tr(EP^{\leq0}_{D+\lambda E}),
	\end{equation}
	we would have $u^\top Eu = 0$ and $(D + \lambda E)u = 0$, thus $Du=Eu=0$, thereby contradicting the assumption that $\ker D \cap \ker E = \{0\}$. Still in this case then, the only possible solution is the unique convex combination $X$ of $P^{<0}_{D+\lambda E}, P^{\leq0}_{D+\lambda E}$ (if it even exists) such that $\Tr(EX) = t$.
	
	All in all, this analysis reveals that $\lambda$ corresponds to a solution $X$ if and only if
	\begin{equation}
		\Tr(EP^{<0}_{D+\lambda E}) \leq t \leq \Tr(EP^{\leq0}_{D+\lambda E}),
	\end{equation}
	and moreover the solution $X$ is unique with $\lambda$ given. It also reveals that solutions are convex combination of at most two orthogonal projection matrices.

	With this characterization in hand, we may focus on $\lambda$. We first show that for all $\lambda_1<\lambda_2$,
	\begin{equation}
		\Tr(EP_{D+\lambda_1 E}^{<0}) \geq \Tr(EP_{D+\lambda_2 E}^{\leq0}).
	\end{equation}
	Since the projections solve \eqref{eq:lambda} at $\lambda_1,\lambda_2$ respectively,
	\begin{align}
		\Tr((D+\lambda_1 E)P^{<0}_{D+\lambda_1 E} ) 
		&\leq \Tr((D+\lambda_1 E)P^{\leq0}_{D+\lambda_2 E}) \\
		\Tr((D+\lambda_2 E)P^{\leq0}_{D+\lambda_2 E} ) 
		&\leq \Tr((D+\lambda_2 E)P^{<0}_{D+\lambda_1 E}),
	\end{align}
	in particular,
	\begin{align}
		&\lambda_1(\Tr( EP^{\leq0}_{D+\lambda_2 E} )-\Tr( EP^{<0}_{D+\lambda_1 E} )) \\
		&\quad\geq \Tr(DP^{\leq0}_{D+\lambda_2 E}) - \Tr(DP^{<0}_{D+\lambda_1 E}) \\
		&\quad\geq \lambda_2(\Tr( EP^{\leq0}_{D+\lambda_2 E} )-\Tr( EP^{<0}_{D+\lambda_1 E} )),
	\end{align}
	and thus, as claimed,
	\begin{equation}
		\Tr( EP^{<0}_{D+\lambda_1 E} ) \geq \Tr( EP^{\leq0}_{D+\lambda_2 E}).
	\end{equation}

	Let now $X_1\neq X_2$ be two solutions, they correspond to $\lambda_1<\lambda_2$ (without loss of generality). Using the above result and the characterization in terms of $\lambda$,
	\begin{align}
		\Tr( EP^{\leq0}_{D+\lambda_1 E} ) 
		&\geq t 
		= \Tr( EP^{<0}_{D+\lambda_1 E} ) 
		= \Tr( EP^{\leq0}_{D+\lambda_2 E}) \\
		&\geq \Tr( EP^{<0}_{D+\lambda_2 E}).
	\end{align}
	In turn,
	\begin{equation}
		X_1 = P^{<0}_{D+\lambda_1 E}, ~X_2 = P^{\leq0}_{D+\lambda_2 E}.
	\end{equation}
	Moreover all inequalities of the previous result are equalities, the projections solve each other's program \eqref{eq:lambda} and thus
	\begin{equation}
		P_{D+\lambda_2E}^{<0}
		\preceq P_{D+\lambda_1E}^{<0}
		\preceq P_{D+\lambda_2E}^{\leq0}
		\preceq P_{D+\lambda_1E}^{\leq0}.
	\end{equation}
	We must then have,
	\begin{equation}
		P_{D+\lambda_2E}^{<0}
		= P_{D+\lambda_1E}^{<0}
		\prec P_{D+\lambda_2E}^{\leq0}
		= P_{D+\lambda_1E}^{\leq0},
	\end{equation}
	but this brings a contradiction as
	\begin{equation}
		\Tr(EP_{D+\lambda_1E}^{<0}) = \Tr(EP_{D+\lambda_2E}^{\leq0}) = \Tr(EP_{D+\lambda_1E}^{\leq0}).
	\end{equation}
	Therefore, the solution is unique.
\end{proof}

\section{On Bayesian linear-quadratic persuasion}

\subsection{An important technical lemma}

We had stressed the importance of \Cref{lem:trsols}. On the one hand, it is useful for the Bayesian case, as it solves directly \eqref{prog:Bayes_linquad}. On the other hand, it will prove a helpful tool later on as well, when we discuss the non-Bayesian programs.

\begin{proof}[Proof of \Cref{lem:trsols}]
	One way of obtaining $P_D^{<0},P_D^{\leq0}$ is to diagonalize $D = R \Delta R^\top$ with $R$ a rotation and $\Delta$ a diagonal matrix with decreasing eigenvalues. Explicitly,
	\begin{equation}
		\Delta = \begin{bmatrix} \Delta^- & 0 & 0 \\ 0&0&0\\ 0&0 & \Delta^+ \end{bmatrix},
	\end{equation}
	where some of these diagonal blocks potentially have dimension $0$, and $\Delta^-,\Delta^+$ are definite. Then if $p\leq q$ are the number of negative and non-positive eigenvalues, and $J_r$ is the diagonal matrix with $r$ ones and $n-r$ zeroes in this order,
	\begin{equation}
		P_D^{<0} = RJ_pR^\top, ~P_D^{\leq0} = RJ_qR^\top.
	\end{equation}
	
	Define
	\begin{align}
		D^- &= -P_D^{<0} D \succeq0, \\
		D^+ &= (I-P_D^{\leq0}) D \succeq0,
	\end{align}
	so that $D = D^+ - D^-$. Note that $P_D^{<0},P_D^{\leq0},D$ all commute. No matter $0\preceq X\preceq I_n$,
	\begin{equation}
		\Tr(D^+ X) \geq 0, ~\Tr(D^- X) \leq \Tr(D^-).
	\end{equation}
	At the same time, these are equalities whenever $P_D^{<0}\preceq X\preceq P_D^{\leq0}$, thus all such $X$ are solution of
	\begin{equation}
		\min_{0\preceq X\preceq I_n} ~\Tr(DX).
	\end{equation}

	This condition turns out to be sufficient as well. Indeed, let $X$ be a solution, we must have
	\begin{equation}
		\Tr(D^+ X) = 0, ~\Tr(D^- X) = \Tr(D^-).
	\end{equation}
	Since $\Delta^+, \Delta^-$ are definite, this implies that $X$ takes the general form
	\begin{equation}
		X = R \begin{bmatrix} I_p&\star&\star\\\star&\star&\star\\\star&\star&0 \end{bmatrix} R^\top,
	\end{equation}
	where $\star$ are any block. Since $X\succeq0$, we must rather have
	\begin{equation}
		X = R \begin{bmatrix} I_p&\star&0\\\star&\star&0\\0&0&0 \end{bmatrix} R^\top,
	\end{equation}
	and since $I_n-X\succeq0$, we must have
	\begin{equation}
		X = R \begin{bmatrix} I_p&0&0\\0&\star&0\\0&0&0 \end{bmatrix} R^\top,
	\end{equation}
	where $0\preceq\star\preceq I_{q-p}$. All in all, this implies that $P_D^{<0}\preceq X\preceq P_D^{\leq0}$. Moreover, the rank of such $X$ is $p + \rk \star\geq p$ where $\star$ is the center block, with equality if and only if it is $0$. In other words, $X = RJ_pR^\top = P^{<0}_D$ is the unique solution of minimal rank.
\end{proof}

\subsection{About which covariances can be produced}

Before proving \Cref{thm:isotrop}, we first establish a lemma that takes care of most of the proof, and delegates the ``hard part'' to another lemma.

\begin{lemma}
\label{lem:whichcov}
	The following statements are equivalent,
	\begin{enumerate}[(i)]
		\item $\mathcal S = \mathcal S_\nu$;
		\item \eqref{prog:Bayes_linquad_nonisotropic} and \eqref{prog:Bayes_linquad} have same value for all $D$;
		\item[(iv)] for all orthogonal projection matrix $P$, $\mathbb E[x\!\mid\!Px] = Px$.
	\end{enumerate}
\end{lemma}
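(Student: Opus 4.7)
The plan is to establish the three implications (i) $\Rightarrow$ (ii), (ii) $\Rightarrow$ (i), and the equivalence (i) $\Leftrightarrow$ (iv). A preliminary observation I would make is that $\mathcal{S}_\nu$ is convex: given two policies, one can randomize between them via an external coin included in the message, which produces a convex combination of the induced distributions of posterior means and hence of covariances. I would also (tacitly or explicitly) work with the closure $\overline{\mathcal{S}_\nu}$ since $\Tr(D\,\cdot\,)$ is continuous.

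With this in place, (i) $\Rightarrow$ (ii) is immediate because the two programs have identical feasible sets. For (ii) $\Rightarrow$ (i), I would argue by contrapositive: if $\Sigma_0 \in \mathcal{S} \setminus \overline{\mathcal{S}_\nu}$, then since both sets are convex and $\overline{\mathcal{S}_\nu}$ is closed, Hahn-Banach in the space of symmetric matrices (with the trace inner product) yields a symmetric $D$ with $\Tr(D\Sigma_0) < \min_{\Sigma \in \overline{\mathcal{S}_\nu}} \Tr(D\Sigma)$, giving a strict gap between the values of \eqref{prog:Bayes_linquad} and \eqref{prog:Bayes_linquad_nonisotropic} and contradicting (ii).

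The implication (iv) $\Rightarrow$ (i) is easy: under (iv), the projective policy $y = Px$ yields $\hat{x} = \mathbb{E}[x\!\mid\!Px] = Px$, whose covariance is $P I_n P = P$. Thus every orthogonal projection lies in $\mathcal{S}_\nu$. Since the extreme points of $\mathcal{S}$ are exactly the orthogonal projection matrices and $\mathcal{S}$ is compact convex, Krein–Milman (Minkowski, in finite dimension) gives $\mathcal{S} = \overline{\mathrm{conv}}(\text{projections}) \subset \overline{\mathcal{S}_\nu}$, and one can recover (i) up to closure.

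The main obstacle is the direction (i) $\Rightarrow$ (iv). Fix an orthogonal projection $P$; by (i) there exists a policy $\sigma$ inducing $\hat{x} = \mathbb{E}[x\!\mid\!y]$ with $\Cov(\hat{x}) = P$. The law of total variance gives $\mathbb{E}[\Cov(x\!\mid\!y)] = I_n - P$, so $P\,\Cov(x\!\mid\!y)\,P = 0$ almost surely, whence $\Cov(Px\!\mid\!y) = 0$ and therefore $Px = P\hat{x}$ a.s. Simultaneously, $\Cov((I_n - P)\hat{x}) = (I_n - P)P(I_n - P) = 0$ and $\mathbb{E}[(I_n - P)\hat{x}] = 0$, so $(I_n - P)\hat{x} = 0$ a.s. Combining these, $\hat{x} = Px$ a.s., which in particular makes $Px$ a function of $y$, i.e.\ $\sigma(Px) \subset \sigma(y)$. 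The tower property then closes the argument:
\[
\mathbb{E}[x\!\mid\!Px] = \mathbb{E}\big[\mathbb{E}[x\!\mid\!y]\big|\,Px\big] = \mathbb{E}[\hat{x}\!\mid\!Px] = \mathbb{E}[Px\!\mid\!Px] = Px,
\]
yielding (iv).
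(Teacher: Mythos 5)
Your treatment of (i) $\Leftrightarrow$ (iv) and of the convexity of $\mathcal S_\nu$ matches the paper's proof almost step for step: the randomization argument for convexity, the law-of-total-variance computation showing that $\hat x = Px$ almost surely whenever a policy achieves covariance $P$, and the tower-property conclusion are the same ideas the paper uses (phrased there via the supports of the covariances of $\hat x$ and $x-\hat x$). For (iv) $\Rightarrow$ (i), your hedge ``up to closure'' is unnecessary: in finite dimension $\mathcal S$ is the plain (not closed) convex hull of its extreme points, which are exactly the orthogonal projection matrices, so once every projection lies in the convex set $\mathcal S_\nu$ you get $\mathcal S \subset \mathcal S_\nu$ exactly.

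The genuine gap is in (ii) $\Rightarrow$ (i). Your separating-hyperplane argument only shows that (ii) fails when $\mathcal S \neq \overline{\mathcal S_\nu}$, i.e., it proves (ii) $\Rightarrow \mathcal S = \overline{\mathcal S_\nu}$, which is strictly weaker than statement (i) unless you also show that $\mathcal S_\nu$ is closed --- which you do not, and which is not obvious, since covariances of posterior means need not pass to limits without an additional uniform-integrability argument. The shortfall matters: your cycle needs an \emph{exact} element of $\mathcal S_\nu$ with covariance $P$ in order to run the (i) $\Rightarrow$ (iv) computation. The paper sidesteps this by testing (ii) with the specific matrix $D = I_n - 2P$: by \Cref{lem:trsols}, program \eqref{prog:Bayes_linquad} then has $P$ as its \emph{unique} minimizer over $\mathcal S$, so equality of the values of \eqref{prog:Bayes_linquad_nonisotropic} and \eqref{prog:Bayes_linquad} forces $P$ itself to belong to $\mathcal S_\nu$, and convexity finishes. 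To repair your route you would either adopt that choice of $D$, or separately prove that $P \in \overline{\mathcal S_\nu}$ implies $P \in \mathcal S_\nu$ for projections (doable, by showing the estimators $\hat x_k$ of an approximating sequence converge to $Px$ in $L^2$ and passing to the limit in the orthogonality relations $\mathbb E[(x-\hat x_k)g(\hat x_k)]=0$), but some such additional step is required.
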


\begin{proof}[Proof of \Cref{lem:whichcov}]
First of all, $\mathcal S_\nu \subset \mathcal S$ is convex. Indeed, let $t\in[0,1]$ and $\Sigma_1,\Sigma_2\in \mathcal S_\nu$, they correspond to the covariance of two random variables $\hat x_1, \hat x_2$ respectively, which by nature satisfy
\begin{equation}
	\mathbb E[x\!\mid\!\hat x_1] = \hat x_1, ~\mathbb E[x\!\mid\!\hat x_2] = \hat x_2.
\end{equation}
Let $i$ be an independent random variable taking value $1$ with probability $t$ and $2$ with probability $1-t$. Consider the message $y = \hat x_i$ and the estimator it generates,
\begin{equation}
	\hat x
	= \mathbb E[x\!\mid\!y]
	= \mathbb E[\mathbb E[x\!\mid\!y,i]]
	= \mathbb E[y]
	= y,
\end{equation} 
where the outer most expectation is taken with respect to $i$. In other words, from the point of view of a Bayesian agent receiving $y$, either the message was $y=\hat x_1$, in which case the estimator is $y$, or the message was $y=\hat x_2$, in which case the estimator is still $y$. The covariance of $\hat x$ is none other than
\begin{equation}
	\Sigma 
	= \mathbb E[yy^\top]
	= \mathbb E[\mathbb E[\hat x_i \hat x_i^\top]]
	= t\Sigma_1 + (1-t)\Sigma_2.
\end{equation}

Second, $\mathcal S$ is the convex hull of the set of orthogonal projection matrices, thus $\mathcal S_\nu = \mathcal S$ if and only if all orthogonal projection matrices belong to $\mathcal S_\nu$. Assume that $P\in\mathcal S_\nu$ and let $\hat x$ be the estimate corresponding to a message generating $P$ as a covariance. Then the covariance of $x-\hat x$ is $I_n-P$ and so (almost surely)
\begin{align}
	x-\hat x &\in \Ima (I_n-P) = \ker P \\
	\hat x &\in \Ima P = \ker (I_n-P).
\end{align}
In turn,
\begin{equation}
	P(x-\hat x) = (I_n-P)\hat x = 0,
\end{equation}
that is,
\begin{equation}
	\hat x = Px.
\end{equation}

As a result, the message $y=\hat x$ is credible in the sense that $\mathbb E[x\mid\hat x] = \hat x$. Conversely, if this message is credible, its estimator is $\hat x$ itself, of covariance $P$. Therefore, for $P$ orthogonal projection matrix, $P\in\mathcal S_\nu$ if and only if (iv) holds for this specific $P$. All in all, (i) and (iv) are equivalent. 

\medskip

It is clear that (i) implies (ii). Assume (ii) holds, then for any $P$ orthogonal projection matrix 
\begin{equation}
	\min_{\Sigma\in \mathcal S_\nu} ~\Tr((I_n-2P)\Sigma) 
	= \min_{\Sigma\in \mathcal S} ~\Tr((I_n-2P)\Sigma) = \Tr(-P),
\end{equation}
using property (ii) with $D=I_n-2P$. Thanks to \Cref{lem:trsols}, the only matrix $X\in\mathcal S \supset \mathcal S_\nu$ solution of the second program is $P$ itself, therefore it must be that $P\in\mathcal S_\nu$, hence $(i)$ stands by convexity of $\mathcal S_\nu$.
%
%
\end{proof}

The last piece of the puzzle is to establish the equivalence between (iii) and any of the other conditions of \Cref{lem:whichcov}. Condition (iv) proves instrumental in this endeavor since at the heart it really states that the Radon transform of $\nu$ is rotationally-invariant. The use of the Radon transform here is similar in spirit to \cite{richards1986positive}, studying $\alpha$-symmetric distributions.

\begin{lemma}
\label{lem:isotrop}
	The following statements are equivalent,
	\begin{enumerate}
		\item[(iii)] for all rotation matrix $R$, $Rx \sim \nu$;
		\item[(iv)] for all orthogonal projection matrix $P$, $\mathbb E[x\!\mid\!Px] = Px$.
	\end{enumerate}
\end{lemma}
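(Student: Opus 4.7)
The plan is to prove the two implications separately. The case $n=1$ collapses to triviality: condition (iii) only constrains $\nu$ through the identity rotation, and condition (iv) is automatic for the only two projections $P\in\{0,1\}$ (using the standing assumption $\bar\nu=0$ for the case $P=0$). I therefore focus on $n\geq 2$.

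For (iii) $\Rightarrow$ (iv), my first task will be to upgrade rotational invariance to full $O(n)$-invariance. This uses the classical fact that for $n\geq 2$, $SO(n)$ already acts transitively on each sphere $rS^{n-1}$, so disintegrating $\nu$ along $\|x\|$ yields, conditionally, uniform measure on each sphere, which is manifestly $O(n)$-invariant. With $O(n)$-invariance in hand, for any orthogonal projection $P$ I introduce the map $\tilde O = P-(I_n-P)=2P-I_n\in O(n)$, which fixes $\Ima P$ pointwise and flips $\ker P$. Because $\tilde O x\sim x$, the pair $(Px,(I-P)x)$ has the same joint law as $(Px,-(I-P)x)$; conditioning on $Px$ then forces $\mathbb E[(I-P)x\mid Px]=-\mathbb E[(I-P)x\mid Px]$, whence $\mathbb E[x\mid Px]=Px$.

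For (iv) $\Rightarrow$ (iii), the plan is Fourier-analytic. Let $\psi(t)=\mathbb E[e^{it^\top x}]$; since $\nu$ has finite second moment, $\psi\in C^1(\mathbb R^n)$. For any unit vector $u$ and any $v\perp u$, condition (iv) applied to $P=uu^\top$ gives $\mathbb E[v^\top x\mid u^\top x]=0$. Multiplying by the $\sigma(u^\top x)$-measurable factor $e^{is(u^\top x)}$ and taking expectation,
\begin{equation*}
0=\mathbb E\!\bigl[(v^\top x)\,e^{is(u^\top x)}\bigr]=-i\,v^\top\nabla\psi(su),\qquad s\in\mathbb R.
\end{equation*}
Hence $\nabla\psi(t)$ is parallel to $t$ for every $t\neq 0$, i.e.\ every tangential derivative of $\psi$ along a sphere centered at the origin vanishes. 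Therefore $\psi(t)=\phi(\|t\|)$ for some scalar $\phi$, and Fourier uniqueness delivers $O(n)$-invariance of $\nu$, which in particular yields (iii).

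The main obstacle will be the direction (iv) $\Rightarrow$ (iii): one must extract a global symmetry statement about $\nu$ from a family of conditional-mean identities indexed by projections. What makes the argument tractable is noticing that the conditions attached to rank-one projections \emph{alone} already encode radial symmetry of $\nabla\psi$, so the higher-rank instances of (iv) are redundant and the Radon-transform reformulation alluded to in the lead-in becomes precisely this gradient condition on the characteristic function.
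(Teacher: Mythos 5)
Your proof is correct, and the hard direction (iv) $\Rightarrow$ (iii) follows a genuinely different route from the paper's. For (iii) $\Rightarrow$ (iv), the paper derives $-x\sim\nu$ from a ball-covering argument and then cites a result on conditionals of isotropic distributions, whereas you upgrade $SO(n)$- to $O(n)$-invariance by disintegrating over spheres (where transitivity of the $SO(n)$-action for $n\geq2$ forces the conditional laws to be uniform) and then exploit the reflection $2P-I_n\in O(n)$ to get $\mathbb E[(I_n-P)x\mid Px]=-\mathbb E[(I_n-P)x\mid Px]=0$; both are sound, and your reflection trick is arguably the more self-contained. For (iv) $\Rightarrow$ (iii), the paper must mollify $\nu$ into $f_\varphi$ because $\nu$ need not have a density, then shows the Radon transform of $f_\varphi$ is radial, passes to its Fourier transform via the projection-slice identity, invokes Fourier inversion together with a symmetrization operator, and finally takes a limit over bump functions and uses a measures-determined-on-balls result. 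You instead work directly with the characteristic function $\psi$, which always exists and is $C^1$ under the standing second-moment assumption, and observe that the rank-one instances of (iv) are exactly the statement $\nabla\psi(t)\parallel t$, whence $\psi$ is radial by connectedness of spheres and $\nu$ is $O(n)$-invariant by uniqueness of characteristic functions. This shortcut dispenses with the entire mollification--Radon--inversion apparatus; notably, both arguments ultimately use only rank-one projections (the paper's inner path-derivative lemma does too), and your identity $\mathbb E[(v^\top x)e^{is u^\top x}]=-i\,v^\top\nabla\psi(su)$ is precisely the Fourier-side avatar of the paper's Radon-transform computation. The only point you should flesh out is the disintegration step in (iii) $\Rightarrow$ (iv), i.e.\ the existence of the conditional laws on spheres and the uniqueness of the $SO(n)$-invariant probability measure on $rS^{n-1}$ for $r>0$, but this is standard and not a gap.
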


\begin{proof}[Proof of \Cref{lem:isotrop}]
When $n=1$, both statements are vacuously true since $R=I_1$ is the only rotation, and since $P=0$ is the only projection matrix of rank $0$ and $\nu$ is centered. We thus assume in the remainder of the proof that $n\geq2$. 

\medskip

We first prove that (iii) imply $-x\sim\nu$, then that this implies (iv). Assume (iii) holds and take a Euclidean ball $B$, call $x_0$ its center. The opposite ball, $-B$, is simply $RB$ where $R$ is any rotation that maps $x_0$ to $-x_0$. This rotation exists precisely because rotations act transitively on $\mathbb R^n$, since $n\geq2$. As a result, $(-I_n)_*\nu$ and $(R^{-1})_*\nu = \nu$\footnote{When $f\colon \mathbb R^n \to (E,\Sigma)$ is a measurable function, we let $f_*\nu$ denote the pushforward measure on $(E,\Sigma)$ defined by $f_*\nu(A) = \nu(f^{-1}(A))$ for all $A\in\Sigma$. When $M$ is an $n\times n$-matrix, we identify it with the endomorphism of $\mathbb R^n$, $x\mapsto Mx$, by slight abuse of notation.} agree on all balls, hence are equal \cite{preiss1991measures}. This establishes that $\nu$ is orthogonally-invariant, i.e., isotropic, and therefore the distribution of $x_1,\dots,x_i$ conditional on $x_{i+1},\dots,x_n$ is isotropic \cite{isotropic}, hence centered and so
\begin{equation}
	\mathbb E[x_1,\dots,x_i \!\mid\! x_{i+1},\dots,x_n] = 0.
\end{equation}
Given that $\nu$ is isotropic, (iv) holds.

\medskip

We now establish the converse direction which relies on a Radon transform of sorts. Assume (iv) holds. We first define a proxy for the density of $\nu$, $f_\varphi$, then show its Radon transform is radial, which we use to prove that the Fourier transform of $f_\varphi$ is radial as well. This last fact is shown to imply that $f_\varphi$ itself is radial, which in turn proves that $\nu$ is isotropic.


Since $\nu$ is not assumed to have a density with respect to the Lebesgue measure and since it is not assumed compactly-supported, we may not define the Radon transform in any traditional way. Instead, we define a function $f_\varphi$ which serves as a proxy for the density, where $\varphi$ is a bump function. For a (compactly-supported smooth) bump function $\varphi\in\mathcal D(\mathbb R) = \mathcal C^\infty_c(\mathbb R)$ then, we define the smooth and integrable function $f_\varphi \in \mathcal C^\infty(\mathbb R^n) \cap L^1(\mathbb R^n)$ by
	\begin{equation}
		f_\varphi(x) = \int_{\mathbb R^n} \varphi(\|x-y\|) \,\mathrm d\nu(y).
	\end{equation}

Let us recall how the Radon transform is defined for functions. When $\omega \in \mathbb S^{n-1}$ and $p \in \mathbb R$, we understand the couple $(\omega,p)\in\mathbb P^n$ as the affine hyperplane $p\omega + \omega^\perp$, noting that $(-\omega,-p) = (\omega,p)$ so that $\mathbb S^{n-1}\times\mathbb R$ is a double cover of $\mathbb P^n$ (the space of affine hyperplanes). The Radon transform of a function $f\in L^1(\mathbb R^n)$ is denoted by $\hat f$ and defined as
\[
	\hat f(\omega,p) = \int_{(\omega,p)} f(x) \,\mathrm dx,
\]
where $\mathrm dx$ is the Euclidean measure on the affine hyperplane $(\omega,p)$. 

Let us apply this transformation to  $f_\varphi$. Proceeding to the change of variable $x = y + (p-\omega^\top y)\omega + h$ with $h\in\omega^\perp$, then $h=Rk$ where $R$ is a rotation such that $\omega=Re_1$, we obtain
	\begin{align}
		\hat f_\varphi(\omega,p)
		&= \int_{\mathbb R^n} \int_{(\omega,p)} \varphi(\|x-y\|) \,\mathrm dx \,\mathrm d\nu(y) \\
		&= \int_{\mathbb R^n} \int_{\omega^\perp} \varphi(\|(p-\omega^\top y)\omega + h\|) \,\mathrm dh \,\mathrm d\nu(y) \\
		&= \int_{\mathbb R^n} \int_{e_1^\perp} \varphi(\|(p-\omega^\top y)e_1 + k\|) \,\mathrm dk \,\mathrm d\nu(y),
	\end{align}
	where $\mathrm dx,\mathrm dh,\mathrm dk$ are the Euclidean measures on respectively $(\omega,p)$, $\omega^\perp$ and $e_1^\perp$. We note that the inner integral is a compactly-supported smooth function of $\omega^\top y$. The following lemma implies then that the Radon transform of $f_\varphi$ is radial (i.e., $\hat f_\varphi(\omega,p)$ only depends on $p$, and not on $\omega$).
	

	\begin{lemma}
		Assume that $n\geq2$ and (iv) holds, namely that for all orthogonal projection matrix $P$, $\mathbb E[x\!\mid\!Px] = Px$. Then for all $\phi\in\mathcal D(\mathbb R) = \mathcal C_c^\infty(\mathbb R)$, the following function is constant, 
		\[
			\omega\in\mathbb S^{n-1} \longmapsto \int_{\mathbb R^n} \phi(\omega^\top y) \,\mathrm d\nu(y) \in \mathbb R.
		\]
	\end{lemma}

	
		\begin{proof}
			Fix $\phi$ and 
			let $\gamma \colon I \to \mathbb S^{n-1}$ be a continuously differentiable path with $I \subset \mathbb R$ open. For all $t\in I$, define
	\begin{equation}
		\mathcal I(t) = \int_{\mathbb R^n} \phi(\gamma(t)^\top x) \,\mathrm d\nu(x).
	\end{equation}
	Surely $\gamma'^\top \gamma=0$, and so thanks to the Leibniz integral rule,
	\begin{align}
		\mathcal I'(t)
		&= \int_{\mathbb R^n} 
		(\gamma'(t)^\top x) \phi'(\gamma(t)^\top x) \,\mathrm d\nu(x) \\
		&= \gamma'(t)^\top \mathbb E[ (x-Px) \phi'(\gamma(t)^\top Px) ] \\
		&= \gamma'(t)^\top\mathbb E[\phi'(\gamma(t)^\top Px)\mathbb E[x-Px\!\mid\! Px]] \\
		&=0,
	\end{align}
	having denoted $P = \gamma(t)\gamma(t)^\top$. This shows that $\mathcal I$ is constant. Any two points on $\mathbb S^{n-1}$ can be joined by a continuously differentiable path, precisely because $n\geq2$, and as a result,
	\begin{equation}
		\int_{\mathbb R^n} \phi(\omega^\top x) \,\mathrm d\nu(x)
	\end{equation}
	does not depend on $\omega\in\mathbb S^{n-1}$.
	\end{proof}

Let us denote the Fourier transform with a tilde here. The (multi-dimensional) Fourier transform of $ f_{\varphi}$ at $p\omega\in\mathbb R^n$ is conveniently expressed using the Radon transform of $f_{\varphi}$:
\begin{align}
	\tilde f_\varphi(p\omega)
	&= \int_{-\infty}^\infty \int_{(\omega,r)} f_\varphi(x) e^{-ip\omega^\top x}\,\mathrm dx \,\mathrm dr \\
	&= \int_{-\infty}^\infty \hat f_\varphi(\omega,r) e^{-ipr}\,\mathrm dr,
\end{align}
having denoted the Euclidean measure on $(\omega,r)$ by $\mathrm dx$. As a result, $\tilde f_\varphi$ is radial. Moreover, $\tilde f_\varphi$ is integrable as,
\begin{align}
	\int_{\mathbb R^n} |\tilde f_\varphi(x)| \,\mathrm dx
	&\leq \int_{\mathbb R^n} \int_{\mathbb R^n} |\varphi(\|x-y\|)| \,\mathrm dy \,\mathrm d\nu(x) \\
	&= \int_{\mathbb R^n} |\varphi(\|y\|)| \,\mathrm dy < \infty.
\end{align}
The next lemma then establishes that $f_\varphi$ itself is radial.

\begin{lemma}
	If $f\in\mathcal C^0(\mathbb R^n) \cap L^1(\mathbb R^n)$ is such that its Fourier transform is absolutely integrable and radial (i.e., that $\tilde f(x)$ only depends on $\|x\|$), then $f$ itself is radial.
\end{lemma}

\begin{proof}
 We define $T$, endomorphism of $L^1(\mathbb R^n)$, by
\begin{equation}
	Th(x) = \int_{SO(n)} h(Rx) \mathrm dR,
\end{equation}
where $\mathrm dR$ is the Haar measure on $SO(n)$ (the space of rotations). Fubini's theorem shows that $\|Th\|_1 \leq \|h\|_1$. The result $Th$ is radial, and so $h=Th$ if and only if $h$ is radial. When $h\in L^1(\mathbb R^n)$, an elementary application of Fubini's theorem shows that $\widetilde{Th} = T\tilde h$. Now $g=f-Tf$ is continuous, absolutely integrable and its Fourier transform is null since $\tilde f$ is radial:
\begin{equation}
	\tilde g = \tilde f - T \tilde f = 0.
\end{equation}
The Fourier inversion theorem \cite{folland2009fourier} implies that $g=0$, thus $f = Tf$, and so $f$ itself is radial. 
\end{proof}

Finally, let $B$ be an open ball, denote $y\in\mathbb R^n$ its center and $r>0$ its radius. Let $R\in SO(n)$ be a rotation. Consider a non-decreasing sequence of bump functions $(\varphi_k)_k \subset \mathcal D(\mathbb R)$ with limit $x\in\mathbb R\mapsto \mathds 1_{|x|<r}$. By the monotone convergence theorem,
\[
	\nu(B) = \lim_{k\to\infty} f_{\varphi_k}(y) = \lim_{k\to\infty} f_{\varphi_k}(Ry) = \nu(RB).
\]
This entails that $\nu(B) = \nu(RB) = (R^{-1})_*\nu(B)$ whenever $R \in SO(n)$ and $B$ is an open ball. Since $\nu$ and $(R^{-1})_*\nu$ are finite Borel measures on $\mathbb R^n$, the result of \cite{preiss1991measures} implies that $\nu = (R^{-1})_*\nu$.
\end{proof}

\end{document}